\algrenewcommand\algorithmicindent{0.4em}
\DeclareMathAlphabet{\mathbbs}{U}{bbm}{m}{sl}
\newcolumntype{C}[1]{>{\centering\let\newline\\\arraybackslash\hspace{0pt}}m{#1}}
\declaretheorem[name=Theorem]{thm}
\DeclareMathOperator*{\argmax}{argmax}
\newcommand{\revision}[1]{{\leavevmode#1}}
\title{A \revision{general framework} for inference\\on algorithm-agnostic variable importance}
\author[1]{Brian D. Williamson}
\author[1,2]{Peter B. Gilbert}
\author[2]{Noah R. Simon}
\author[2,1]{Marco Carone}
\affil[1]{Vaccine and Infectious Disease Division, Fred Hutchinson Cancer Research Center}
\affil[2]{Department of Biostatistics, University of Washington}
\begin{document}

\maketitle

\begin{abstract}

In many applications, it is of interest to assess the relative contribution of features (or subsets of features) toward the goal of predicting a response --- in other words, to gauge the variable importance of features. Most recent work on variable importance assessment has focused on describing the importance of features within the confines of a given prediction algorithm. However, such assessment does not necessarily characterize the prediction potential of features, and may provide a misleading reflection of the intrinsic value of these features. To address this limitation, we propose a general framework for nonparametric  inference on interpretable algorithm-agnostic variable importance. We define variable importance as a population-level contrast between the oracle predictiveness of all available features versus all features except those under consideration. We propose a nonparametric efficient estimation procedure that allows the construction of valid confidence intervals, even when machine learning techniques are used. We also outline a valid strategy for testing the null importance hypothesis. Through simulations, we show that our proposal has good operating characteristics, and we illustrate its use with data from a study of an antibody against HIV-1 infection.\vspace{.1in}

\begin{center}{\small \textbf{Keywords:} variable importance; statistical inference; machine learning; targeted learning.}\end{center}
\end{abstract}

\doublespacing

\section{Introduction}\label{sec:intro}

In many scientific problems, it is of interest to assess the contribution of features toward the objective of predicting a response, a notion that has been referred to as variable importance. Various approaches for quantifying variable importance have been proposed in the literature. In recent applications, variable importance has often been taken to reflect the extent to which a given algorithm makes use of particular features in rendering predictions \citep{breiman2001,lundberg2017,fisher2018,murdoch2019}. In this case, the goal is thus to characterize a fixed algorithm. While this notion of variable importance can help provide greater transparency to otherwise opaque black-box prediction tools \citep{guidotti2018,murdoch2019}, it does not quantify the algorithm-agnostic  relevance of features for the sake of prediction. Thus, a feature that holds great value for prediction may be deemed unimportant simply because it plays a minimal role in the given algorithm. This motivates the consideration of approaches in which the focus is instead on measuring the population-level predictiveness potential of features, which we can refer to as \emph{intrinsic} variable importance. By definition, any measure of intrinsic variable importance should not involve the external specification of a particular prediction algorithm.

Traditionally, intrinsic variable importance has been considered in the context of simple population models (e.g., linear models) \citep[see, e.g.,][]{gromping2006,nathans2012}. For such models, both the prediction algorithm and the associated variable importance measure (VIM) are easy to compute from model outputs and straightforward to interpret. Common VIMs based on simple models include, for example, the difference in $R^2$ and deviance values based on (generalized) linear models \citep{nelder1972,gromping2006}. However, overly simplistic models can lead to misleading estimates of intrinsic variable importance with little population relevance. In an effort to improve prediction performance, complex prediction algorithms, including machine learning tools, have been used as a substitute for algorithms resulting from simple population models. Many variable importance measures have been proposed for specific algorithms (see, e.g., reviews of the literature in \citealp{wei2015}, \citealp{fisher2018}, and \citealp{murdoch2019}), with a particularly rich literature on variable importance for random forests \citep[see, e.g.,][]{breiman2001,strobl2007,ishwaran2007,gromping2009} and neural networks \citep[see, e.g.,][]{garson1991,bach2015,shrikumar2017,sundararajan2017}. Several recent proposals aim to describe a broad class of fixed algorithms \citep{ledell2015,ribeiro2016,benkeser2018,lundberg2017,aas2019}. However, while some measures have been recently described for algorithm-independent variable importance \citep[see, e.g.,][]{vanderlaan2006,lei2017,williamson2020a}, there has been limited work on developing broad frameworks for algorithm-independent variable importance with corresponding theory for inference using machine learning tools.

In this article, we seek to circumvent the limitations of model-based approaches to assessing intrinsic variable importance. We provide a unified nonparametric approach to formulate variable importance as a model-agnostic population parameter, that is, a summary of the true but unknown data-generating mechanism. The VIMs we consider are defined as a contrast between the predictiveness of the best possible prediction function based on all available features versus all features except those under consideration. We allow predictiveness to be defined arbitrarily as relevant and appropriate for the task at hand, as we illustrate in several examples. In this framework, once a measure of predictiveness has been selected, estimation of VIM values from data can be carried out similarly as for any other statistical parameter of interest. This task involves estimation of oracle prediction functions based on all the features or various subsets of features, and the use of machine learning algorithms is advantageous for maximizing prediction performance for this purpose. Because we consider variable importance as a summary of the data-generating mechanism rather than a property of any particular prediction algorithm, its definition and implementation does not hinge on the use of any particular prediction algorithm. This perspective contrasts with the model-based approach, where the probabilistic population-level mechanism that generates data and the algorithm that makes predictions based on data are usually entangled.

In \citet{williamson2020a}, the authors focused on an application of the proposed framework to infer about a model-agnostic $R^2$-based variable importance, for which the authors described a nonparametric efficient estimator. The authors also presented the construction of valid confidence intervals and hypothesis tests for features with some importance but found it challenging to assess features with zero-importance.
Here, we propose a \revision{general framework} to study general predictiveness measures and propose a valid strategy for hypothesis testing. Our framework allows us to tackle cases involving complex predictiveness measures (e.g., defined in terms of counterfactual outcomes or involving missing data). It can be used to describe the importance of groups of variables as easily as individual variables. Our framework formally incorporates the use of machine learning tools to construct efficient estimators and perform valid statistical inference. We emphasize that the latter is especially important if high-impact decisions will be made on the basis of the resulting VIM estimates.

This article is organized as follows. In Section~\ref{sec:vimp}, we define variable importance as a contrast in population-level oracle predictiveness and provide simple examples. In Section~\ref{sec:est}, we construct an asymptotically efficient VIM estimator for a large class of measures using flexibly estimated prediction algorithms (e.g., predictive models constructed via machine learning methods) and provide a valid test of the zero-importance null hypothesis. These results allow us to analyze nonparametric extensions of common measures, including the area under the receiver operating characteristic curve (AUC) and classification accuracy. In Section~\ref{sec:complex}, we explore an extension to deal with more complex predictiveness measures. In Section~\ref{sec:sims}, we illustrate the use of the proposed approach in numerical experiments and detail its operating characteristics. Finally, we study the importance of various HIV-1 viral protein sequence features in predicting resistance to neutralization by an antibody in Section~\ref{sec:data}, and provide concluding remarks in Section~\ref{sec:conclusions}. All technical details as well as results from additional simulation studies and data analyses can be found in the Supplementary Material.

\section{Variable importance}\label{sec:vimp}

\subsection{Data structure and notation}\label{subsec:notation}

Suppose that observations $Z_1, \ldots, Z_n$ are drawn independently from a data-generating distribution $P_0$ known only to belong to a rich (nonparametric) class $\mathcal{M}$ of distributions. For concreteness, suppose that $Z_i=(X_i, Y_i)$, where $X_i = (X_{i1}, \ldots, X_{ip})\in\mathcal{X}\subseteq \mathbb{R}^p$ is a covariate vector and $Y_i\in\mathcal{Y}\subseteq\mathbb{R}$ is the outcome. Here, $\mathcal{X}$ and $\mathcal{Y}$ denote the sample spaces of $X$ and $Y$, respectively. Below, we will use the shorthand notation $E_0$ to refer to expectation under $P_0$.

We denote by $s\subseteq \{1, \dots, p\}$ the index set of the covariate subgroup of interest, and for any $p$-dimensional vector $w$, we refer to the elements of $w$ with index in $\ell$ and not in $\ell$ as $w_\ell$ and $w_{-\ell}$, respectively. We also denote by $\mathcal{X}_s$ and $\mathcal{X}_{-s}$ the sample space of $X_s$ and $X_{-s}$, respectively. Finally, we consider a rich class $\mathcal{F}$ of functions from $\mathcal{X}$ to $\mathcal{Y}$ endowed with a norm $\|\cdot\|_\mathcal{F}$, and define the subset $\mathcal{F}_{s} := \{f \in \mathcal{F} : f(u) = f(v)\text{ for all } u,v\in\mathcal{X}\text{ satisfying }u_{-s}=v_{-s}\}$ of functions in $\mathcal{F}$ whose evaluation ignores elements of the input $x$ with index in $s$. In all examples we consider, we will take $\mathcal{F}$ to be essentially unrestricted up to regularity conditions. Common choices include the class of all $P_0$-square-integrable functions from $\mathcal{X}$ to $\mathcal{Y}$ endowed with $L_2(P_0)$-norm $f\mapsto\|f\|_{2,P_0}:=[\int \{f(x)\}^2dP_0(x)]^{1/2}$, and of all bounded functions from $\mathcal{X}$ to $\mathcal{Y}$ endowed with the supremum norm $f\mapsto\|f\|_{\infty,\mathcal{X}}:=\sup_{x\in\mathcal{X}}|f(x)|$.

\subsection{Oracle predictiveness and variable importance}\label{subsec:compare_risks}

We now detail how we define variable importance as a population parameter. Suppose that $V(f, P)$ is a measure of the predictiveness of a given candidate prediction function $f \in \mathcal{F}$ when $P$ is the true data-generating distribution, with large values of $V(f, P)$ implying high predictiveness. Examples of predictiveness measures --- including those based on $R^2$, deviance, the area under the ROC curve, and classification accuracy --- are discussed in detail in Section~\ref{subsec:examples}. If the true data-generating mechanism $P_0$ were known, a natural candidate prediction function would be any $P_0$-population maximizer $f_0$ of predictiveness over the class $\mathcal{F}$:
\begin{align}\label{eq:maximizer}
f_{0} \in \argmax_{f \in \mathcal{F}} V(f, P_0)\ .
\end{align}This population maximizer can be viewed as the oracle prediction function within $\mathcal{F}$ under $P_0$ relative to $V$. In particular, the definition of $f_0$ depends on the chosen predictiveness measure and on the data-generating mechanism. It can also depend on the choice of  function class, although in contexts we consider this is not the case as long as $\mathcal{F}$ is sufficiently rich. It is often true that $f_0$ is the underlying target of machine learning-based prediction algorithms or a transformation thereof, which facilitates the integration of machine learning tools in the estimation of $f_0$. The \emph{oracle predictiveness} $V(f_0,P_0)$ provides a measure of total prediction potential under $P_0$. Similarly, defining the oracle prediction function $f_{0,s}$ that maximizes $V(f,P_0)$ over all $f\in\mathcal{F}_s$, the \emph{residual oracle predictiveness} $V(f_{0,s},P_0)$ quantifies the remaining prediction potential after exclusion of covariate features with index in $s$.

We define the \textit{population-level importance} of the variable (or subgroup of variables) $X_s$ relative to the full covariate vector $X$ as the amount of oracle predictiveness lost by excluding $X_s$ from $X$. In other words, we consider the VIM value defined as
\begin{align}\label{eq:vimp_functional}
\psi_{0,s}:=V(f_0,P_0)-V(f_{0,s},P_0)\ .
\end{align}
By construction, we note that $\psi_{0,s} \geq 0$. Whether or not the loss in oracle predictiveness is sufficiently large to confer meaningful importance to a given subgroup of covariates depends on context. Once more, we emphasize that the definition of $\psi_{0,s}$ involves the oracle prediction function within $\mathcal{F}$, and if $\mathcal{F}$ is large enough, this definition is agnostic to this choice.

\subsection{Examples of predictiveness measures}\label{subsec:examples}

We now illustrate our definition of variable importance by listing common VIMs that are in this framework. As we will see, the conditional mean  $\mu_0: x \mapsto E_0(Y \mid X = x)$ plays a prominent role in the examples below. This is convenient since $\mu_0$ is the implicit target of estimation for many standard machine learning algorithms for predictive modeling. \vspace{.15in}

\noindent\emph{Example 1: $R^2$}\\
The $R^2$ predictiveness measure is defined as $V(f, P_0) := 1 - E_0\left\{Y - f(X)\right\}^2/\sigma^2_0$, where we set $\sigma^2_0:=E_0\left\{Y-E_0\left(Y\right)\right\}^2=E_0\left[Y-E_0\left\{\mu_0(X)\right\}\right]^2$, the variance of $Y$ under $P_0$. This measure quantifies the proportion of variability in $Y$ explained by $f(X)$ under $P_0$. Since $\mu_0$ is the unrestricted minimizer of the mean squared error mapping $f\mapsto E_0\left\{Y-f(X)\right\}^2$, the optimizer of $V(f,P_0)$ is given by $f_0=\mu_0$ as long as $\mu_0\in\mathcal{F}$.\vspace{.15in}

\noindent\emph{Example 2: deviance}\\
When $Y$ is binary, the deviance predictiveness measure is defined as
  \begin{align*}
        V(f, P_0) = & \ 1 - \frac{E_0\left[Y\log f(X) + (1 - Y)\log \{1 - f(X)\} \right]}{\pi_0\log \pi_0 + (1-\pi_0)\log(1-\pi_0)}\ ,
    \end{align*}
    where $\pi_0:=P_0\left(Y=1\right)$ is the marginal success probability of $Y$ under $P_0$. This measure quantifies in a Kullback-Leibler sense the information gain from using $X$ to predict $Y$ relative to the null model that does not use $X$ at all. Again, because the conditional mean $\mu_0$ is the unconstrained population maximizer of the average log-likelihood, we find the optimizer of $f\mapsto V(f,P_0)$ to be $f_0=\mu_0$ for any rich enough $\mathcal{F}$. This result similarly holds for a multinomial extension of deviance.\vspace{.15in}

\noindent\emph{Example 3: classification accuracy}\\
An alternative predictiveness measure in the context of binary outcomes is classification accuracy, defined as $V(f, P_0) = P_0\left\{Y=f(X)\right\}$. This measure quantifies how often the prediction $f(X)$ coincides with $Y$, and is commonly used in classification problems. As shown in the Supplementary Material, the Bayes classifier $b_0:x \mapsto I\{\mu_0(x) > 1/2\}$ is the unconstrained maximizer of $f\mapsto V(f,P_0)$, and so, $f_0=b_0$ as long as $b_0\in\mathcal{F}$.\vspace{.15in}

\noindent\emph{Example 4: area under the ROC curve}\\
The area under the receiver operating characteristic curve (AUC) is another popular predictiveness measure for use when $Y$ is binary. The AUC corresponding to $f$ is given by $V(f, P_0) = P_0\{f(X_1) < f(X_2) \mid Y_1 = 0, Y_2 = 1\}$, where $(X_1, Y_1)$ and $(X_2, Y_2)$ represent independent draws from $P_0$. As shown in the Supplementary Material, the unrestricted maximizer of $f\mapsto V(f,P_0)$ is the population mean $\mu_0$, so  that once more $f_0=\mu_0$  provided $\mu_0\in\mathcal{F}$.\vspace{.15in}

In all examples above, the unrestricted oracle prediction function $f_0$ equals or is a simple transformation of the conditional mean function $\mu_0$.  The  unrestricted oracle prediction function $f_{0,s}$ based on all covariates except those with index in $s$ is obtained similarly but with $\mu_0$ replaced by $\mu_{0,s}:x\mapsto E_0\left(Y\mid X_{-s}=x_{-s}\right)$.

\section{Estimation and inference}\label{sec:est}

\subsection{Plug-in estimation}\label{sec:plugin}

In our framework, the variable importance of $X_s$ relative to $X$ under $P_0$, denoted $\psi_{0,s}$, is a population parameter. Thus, assessing variable importance reduces to the task of inferring about $\psi_{0,s}$ from the available data. More formally, our goal is to construct a nonparametric (asymptotically) efficient estimator of  $\psi_{0,s}$ using independent observations $Z_1,\ldots,Z_n$ from $P_0$. Definition \eqref{eq:vimp_functional} suggests considering the plug-in estimator
\begin{align}\label{eq:plugin_est}
{\psi}_{n,s} :=  V({f}_{n}, P_n) - V({f}_{n, s}, P_n)\ ,
\end{align}
where $P_n$ is the empirical distribution based on $Z_1, \dots, Z_n$,  and  ${f}_n$ and ${f}_{n, s}$ are estimators of the population optimizers $f_0$ and $f_{0,s}$, respectively. Often, ${f}_n$ and $f_{n,s}$ are obtained by building a predictive model for outcome $Y$ using all features in $X$ or only those features in $X_{-s}$, respectively --- this might be done, for example, using tree-based methods, deep learning, or other machine learning algorithms,  including tuning via cross-validation. Using flexible learning techniques to construct ${f}_n$ and ${f}_{n,s}$ minimizes the risk of systematic bias due to model misspecification.

As an illustration of the form of the resulting plug-in estimates, we note that, in the case of classification accuracy (\emph{Example 3}), the VIM estimate is given by $\psi_{n,s}=\tfrac{1}{n}\sum_{i=1}^nI\{Y_i = f_n(X_i)\} - \tfrac{1}{n}\sum_{i=1}^nI\{Y_i = f_{n,s}(X_i)\}$, where $f_n$ and $f_{n,s}$ are estimates of the oracle prediction functions $f_0$ and $f_{0,s}$, respectively. Sensible estimates of $f_0$ and $f_{0,s}$ are given by \[f_n:x\mapsto I\left\{\mu_n(x)>0.5\right\}\mbox{\ \ and\ \ }f_{n,s}:x\mapsto I\left\{\mu_{n,s}(x)>0.5\right\},\] where $\mu_{n}$ and $\mu_{n,s}$ are estimates of the conditional mean functions $\mu_0$ and $\mu_{0,s}$, respectively. We provide the explicit form of $\psi_{n,s}$ for all examples in the Supplementary Material.

The simplicity of the plug-in construction makes it particularly appealing.  However, the literature on semiparametric inference and targeted learning suggests that such naively constructed plug-in estimators may fail to even be consistent at rate $n^{-1/2}$, let alone efficient, if they involve nuisance functions --- in this case, ${f}_0$ and ${f}_{0,s}$ --- that are flexibly estimated. This phenomenon is due to the fact that excess bias is often inherited by the plug-in estimator from the nuisance estimators. Generally, this fact would motivate the use of debiasing procedures, such as the one-step correction or targeted maximum likelihood estimation \citep[see, e.g.,][]{pfanzagl1982,vanderlaan2011}.
However, \citet{williamson2020a} noted the intriguing fact that the  plug-in estimator of the $R^2$ VIM did not require debiasing, being itself already efficient.
Below, we show that the same  holds true for a large class of VIMs. These plug-in estimators therefore benefit from a combination of simplicity and statistical optimality.

\subsection{Large-sample properties}\label{sec:inf}

We now study conditions under which ${\psi}_{n,s}$ is an asymptotically linear and nonparametric efficient estimator of the VIM value $\psi_{0,s}$, and we describe how to conduct valid inference on $\psi_{0,s}$. Below, we explicitly focus on inference for the oracle predictiveness value $v_0:=V(f_0,P_0)$ based on the plug-in estimator $v_n := V({f}_n, P_n)$, since results can readily be extended to the residual oracle predictiveness value $v_{0,s}:=V(f_{0,s},P_0)$ and thus to the VIM value $\psi_{0,s}$. The behavior of $v_n$ can be studied by first decomposing
\begin{align}\label{eq:plug_in_expansion}
    v_n - v_0\ =\ \{V(f_0, P_n) - V(f_0, P_0)\} + \{V({f}_n, P_0) - V(f_0, P_0)\}+r_n\ ,
\end{align} where $r_n:=[\{V({f}_n, P_n) - V({f}_n, P_0)\} - \{V(f_0, P_n) - V(f_0, P_0)\}]$.
Each term on the right-hand side of \eqref{eq:plug_in_expansion} can be studied separately to determine the large-sample properties of $v_n$. The first term is the contribution from having had to estimate the second argument value $P_0$. The third term is a difference-of-differences remainder term that can be expected to tend to zero in probability at a rate faster than $n^{-1/2}$ under some conditions. We must pay particular attention to the second term, which represents the contribution from having had to estimate the first argument value $f_0$. A priori, we may expect this term to dominate since the rate at which ${f}_n-f_0$ tends to zero (in suitable norms) is generally slower than $n^{-1/2}$ when flexible learning techniques are used. However, because $f_0$ is a maximizer of $f\mapsto V(f,P_0)$ over $\mathcal{F}$, we may reasonably expect that \[\left.\frac{d}{d\epsilon}V(f_{0,\epsilon},P_0)\right|_{\epsilon=0}=0\] for any smooth path $\{f_{0,\epsilon}:-\infty<\epsilon<+\infty\}\subset\mathcal{F}$ through $f_0$ at $\epsilon=0$, and thus that there is no first-order contribution of $V({f}_n,P_0)-V(f_0,P_0)$ to the behavior of $v_n-v_0$. Under regularity conditions, this indeed turns out to be the case, and thus, if ${f}_n-f_0$ does not tend to zero too slowly, the second term will be asymptotically negligible.

Our first result will make use of several conditions requiring additional notation. Below, we define the linear space $\mathcal{R}:=\{c(P_1-P_2):c\in[0,\infty),P_1,P_2\in\mathcal{M}\}$ of finite signed measures generated by $\mathcal{M}$. For any $R\in\mathcal{R}$, say $R=c(P_1-P_2)$, we refer to the supremum norm $\|R\|_\infty:=c\cdot\sup_{z}|F_1(z)-F_2(z)|$, where $F_1$ and $F_2$ are the distribution functions corresponding to $P_1$ and $P_2$, respectively. Furthermore, we denote by $\dot{V}(f,P_0;h)$ the G\^{a}teaux derivative of $P\mapsto V(f,P)$ at $P_0$ in the direction $h\in\mathcal{R}$, and define the random function $g_n:z\mapsto\dot{V}(f_n,P_0;\delta_z-P_0)-\dot{V}(f_0,P_0;\delta_z-P_0)$, where $\delta_z$ is the degenerate distribution on $\{z\}$. \revision{For any $P \in \mathcal{M}$, we also denote by $f_P$ any $P$-population maximizer of $f\mapsto V(f, P)$ over $\mathcal{F}$.} Finally, we define the following sets of conditions, classified as being either deterministic (A) or stochastic (B) in nature:

\begin{enumerate}
     \item[(A1)] (\textit{optimality}) there exists some constant $C>0$ such that, for each sequence $f_1,f_2, \dots \in \mathcal{F}$ such that $\lVert f_j - f_0 \rVert_\mathcal{F} \to 0$, $\lvert V(f_j,P_0) - V(f_0,P_0) \rvert  \leq C\lVert f_j - f_0 \rVert_{\mathcal{F}}^2$ for each $j$ large enough;
     \item[(A2)] (\textit{differentiability}) there exists some constant $\delta > 0$ such that for each sequence $\epsilon_1,\epsilon_2,\ldots\in\mathbb{R}$ and $h, h_1, h_2, \ldots \in \mathcal{R}$ satisfying that $\epsilon_j\rightarrow 0$ and $\lVert h_j - h \rVert_{\infty} \to 0$, it holds that
     \begin{align*}
         \sup_{f\in\mathcal{F}:\lVert f - f_0 \|_{\mathcal{F}} < \delta}\left\lvert \frac{V(f, P_0 + \epsilon_j h_j) - V(f, P_0)}{\epsilon_j} - \dot{V}(f,P_0; h_j) \right\rvert \longrightarrow 0\ ;
     \end{align*}
     \item[(A3)] (\textit{continuity of optimization}) $\|f_{P_0+\epsilon h}-f_0\|_{\mathcal{F}}=O(\epsilon)$ for each $h\in\mathcal{R}$;
     \item[(A4)] (\textit{continuity of derivative}) $f\mapsto \dot{V}(f,P_0;h)$ is continuous at $f_0$ relative to $\|\cdot\|_{\mathcal{F}}$ for each $h\in\mathcal{R}$;
     \item[(B1)] (\textit{minimum rate of convergence}) $\|f_n-f_0\|_\mathcal{F}=o_P(n^{-1/4})$;
     \item[(B2)] (\textit{weak consistency}) \revision{$\int \{g_n(z)\}^2dP_0(z)=o_P(1)$};
     \item[(B3)] (\textit{limited complexity}) there exists some $P_0$-Donsker class $\mathcal{G}_0$ such that $P_0\left(g_n \in \mathcal{G}_0\right) \to 1$.
\end{enumerate}

\begin{thm}\label{thm:general_vim}
    If conditions (A1)--(A2) and (B1)--(B3) hold, then $v_n$ is an asymptotically linear estimator of $v_0$ with influence function equal to $\phi_{0}: z \mapsto \dot{V}(f_0,P_0; \delta_z - P_0)$, that is, \[v_n-v_0\, =\, \frac{1}{n}\sum_{i=1}^{n}\dot{V}(f_0,P_0;\delta_{Z_i}-P_0)+o_P(n^{-1/2})\]under sampling from $P_0$. If conditions (A3)--(A4) also hold, then $\phi_0$ coincides with the nonparametric  efficient influence function (EIF) of $P\mapsto V(f_P,P)$ at $P_0$, and so, $v_n$ is nonparametric efficient.
\end{thm}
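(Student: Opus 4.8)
The plan is to build the argument directly on the expansion \eqref{eq:plug_in_expansion}, treating its three summands separately and showing that only the first contributes at first order. Writing $\phi_0(z) := \dot{V}(f_0,P_0;\delta_z - P_0)$, the target is to show that the first term reproduces $\frac1n\sum_{i=1}^n \phi_0(Z_i)$ up to $o_P(n^{-1/2})$, while the second term and the remainder $r_n$ are each $o_P(n^{-1/2})$. A preliminary observation I would record is that $\phi_0$ is centered, $E_0\{\phi_0(Z)\} = \dot{V}(f_0,P_0; P_0 - P_0) = 0$, using linearity of the G\^{a}teaux derivative in its direction argument; the same computation gives $E_0\{g_n(Z)\} = 0$ conditionally on $f_n$.

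For the first term, I would interpret condition (A2), specialized to the fixed function $f = f_0$, as Hadamard differentiability of the map $P \mapsto V(f_0,P)$ at $P_0$: it asserts exactly that the difference quotients converge along every sequence $\epsilon_j \to 0$, $h_j \to h$. Combined with the weak convergence of $\sqrt{n}(P_n - P_0)$ in the $\|\cdot\|_\infty$-topology on $\mathcal{R}$, the functional delta method then yields $V(f_0,P_n) - V(f_0,P_0) = \dot{V}(f_0,P_0; P_n - P_0) + o_P(n^{-1/2})$, and by linearity and centering the leading term equals $\frac1n\sum_{i=1}^n \phi_0(Z_i)$. The second term, $V(f_n,P_0) - V(f_0,P_0)$, is where the optimality of $f_0$ pays off: since $f_0$ maximizes $f \mapsto V(f,P_0)$ over $\mathcal{F}$, condition (A1) bounds it by $C\|f_n - f_0\|_{\mathcal{F}}^2$, which is $o_P(n^{-1/2})$ by the rate condition (B1). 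This is the formal counterpart of the heuristic that there is no first-order contribution from estimating $f_0$.

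The remainder $r_n$ is the crux. I would first add and subtract the linearizations of its two inner second-argument increments, writing $r_n$ as the sum of (i) the approximation errors $[V(f_n,P_n) - V(f_n,P_0) - \dot{V}(f_n,P_0; P_n - P_0)]$ and the analogous error at $f_0$, and (ii) the difference of derivatives $\dot{V}(f_n,P_0; P_n - P_0) - \dot{V}(f_0,P_0; P_n - P_0)$, which by linearity and centering equals the empirical-process term $(P_n - P_0)g_n$. Because $f_n$ lies in the $\delta$-ball about $f_0$ with probability tending to one (by (B1)), the \emph{uniform}-in-$f$ form of (A2) controls the errors in (i) at rate $o_P(n^{-1/2})$ via a uniform delta-method argument. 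For (ii), I would invoke the asymptotic equicontinuity of $P_0$-Donsker classes: under (B3), $g_n$ belongs to a fixed such class with probability tending to one, and under (B2), $\|g_n\|_{2,P_0} \to 0$, so $\sqrt{n}(P_n - P_0)g_n \to 0$ in probability. Assembling the three pieces gives $v_n - v_0 = \frac1n\sum_{i=1}^n \phi_0(Z_i) + o_P(n^{-1/2})$, the claimed asymptotic linearity. I expect the main obstacle to be making the uniform linearization in step (i) rigorous --- coupling the data-dependent index $f_n$ with the random direction $P_n - P_0$ so that (A2)'s uniformity genuinely yields $o_P(n^{-1/2})$ --- since this is precisely where the deterministic smoothness of $V$ and the stochastic behavior of $(f_n, P_n)$ must be reconciled.

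For the efficiency claim I would compute the nonparametric EIF of the composite parameter $\Psi: P \mapsto V(f_P, P)$ directly, where $f_P := \argmax_{f \in \mathcal{F}} V(f,P)$, and verify that it equals $\phi_0$. Along a regular one-dimensional submodel $\{P_\epsilon\}$ through $P_0$ with score $s \in L_2^0(P_0)$, corresponding to the direction $h \in \mathcal{R}$ with $dh = s\,dP_0$, I would differentiate $\Psi(P_\epsilon)$ and split the total derivative into a first-slot contribution, from varying $f_{P_\epsilon}$ with $P_0$ held fixed, and a second-slot contribution, from varying $P_\epsilon$ with $f_0$ held fixed. Condition (A3) ensures $\|f_{P_\epsilon} - f_0\|_{\mathcal{F}} = O(\epsilon)$, so that by (A1) the first-slot contribution is $O(\epsilon^2) = o(\epsilon)$ and hence vanishes to first order --- again the optimality of $f_0$ at work. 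Condition (A4), the continuity of $f \mapsto \dot{V}(f,P_0;h)$ at $f_0$, lets me pass the second-slot derivative from $f_{P_\epsilon}$ to $f_0$, so the pathwise derivative equals $\dot{V}(f_0,P_0;h) = E_0\{\phi_0(Z)s(Z)\}$. Thus $\phi_0$ is a gradient of $\Psi$; since the model is nonparametric its tangent space is all of $L_2^0(P_0)$, the gradient is unique, and $\phi_0$ --- already mean-zero --- is the canonical gradient, i.e.\ the EIF. Finally, a regular asymptotically linear estimator whose influence function equals the EIF attains the nonparametric efficiency bound, so $v_n$, and with it $\psi_{n,s}$, is efficient.
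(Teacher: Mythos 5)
Your proposal is correct and follows essentially the same route as the paper's proof: the same three-term decomposition, with the first term linearized via (A2) and the delta method, the second term killed by (A1)+(B1), and the remainder split into the difference of linearization errors (handled by the uniform form of (A2), with the paper resolving the data-dependent-index issue you flag by conditioning on the event $\|f_n-f_0\|_{\mathcal{F}}<\delta$, whose complement is negligible under (B1)) plus the empirical-process term $(P_n-P_0)g_n$, controlled by Donsker asymptotic equicontinuity under (B2)--(B3) exactly as you describe. Your efficiency argument likewise mirrors the paper's submodel computation, in which (A1)+(A3) annihilate the first-slot contribution and (A4) transfers the derivative from $f_{0,\epsilon}$ to $f_0$.
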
 This result implies, in particular, that the plug-in estimator $v_n$ of $v_0$ is often consistent as well as asymptotically normal and efficient. A similar theorem applies to the study of the estimator $v_{n,s}:=V(f_{n,s},P_n)$ of residual oracle predictiveness $v_{0,s}$ upon replacing instances of $f_n$, $f_0$ and $\mathcal{F}$ by $f_{n,s}$, $f_{0,s}$ and $\mathcal{F}_s$ in the conditions above, and denoting the resulting influence function by $\phi_{0,s}$. Thus, under the collection of all such conditions, the estimator $\psi_{n,s}$ of the VIM value $\psi_{0,s}$ is asymptotically linear with influence function $\varphi_{0,s}:z\mapsto \dot{V}(f_0,P_0;\delta_z-P_0)-\dot{V}(f_{0,s},P_0;\delta_z-P_0)$ and nonparametric efficient. If $\psi_{0,s} > 0$ and $0<\tau^2_{0,s}:=E_0\{\varphi^2_{0,s}(Z)\}<\infty$, this suggests that the asymptotic variance of $n^{1/2}\left(\psi_{n,s}-\psi_{0,s}\right)$ can be estimated by
\[\tau^2_{n,s}:=\frac{1}{n}\sum_{i=1}^{n}\left[\dot{V}(f_n,P_n;\delta_{Z_i}-P_n)-\dot{V}(f_{n,s},P_n;\delta_{Z_i}-P_n)\right]^2,\]
and that $(\psi_{n,s}-z_{1-\alpha/2}\tau_{n,s}n^{-1/2},\psi_{n,s}+z_{1-\alpha/2}\tau_{n,s}n^{-1/2})$ is an interval for $\psi_{0,s}$ with asymptotic coverage $1-\alpha$, where  $z_{1-\alpha/2}$ denotes the $(1-\alpha/2)^{th}$ quantile of the standard normal distribution. \revision{This procedure is summarized in Algorithm \ref{alg:est}.} We discuss settings in which $\psi_{0,s} = 0$ (and therefore $\tau^2_{0,s}=0$) in Section~\ref{sec:test}.

\vspace{.1in}
\begin{algorithm}
\caption{\revision{Inference on VIM value $\psi_{0,s}$ (valid in non-null settings)}}
\label{alg:est}
\begin{spacing}{1.1}
\revision{
\begin{algorithmic}[1]
\vspace{.1in}
\State construct estimators $f_{n}$ of $f_0$ and $f_{n,s}$ of $f_{0,s}$;
\State construct empirical distribution estimator $P_{n}$ of $P_0$;
\State compute estimator ${\psi}_{n,s}:=V(f_{n}, P_{n})-V(f_{n,s}, P_{n})$ of $\psi_{0,s}$;
\State compute estimator \vspace{-.08in}\[\tau_{n,s}^2:=\frac{1}{n}\sum_{i=1}^n\{\dot{V}(f_{n}, P_{n}; \delta_{Z_i} - P_{n})-\dot{V}(f_{n,s}, P_{n}; \delta_{Z_i} - P_{n})\}^2\vspace{-.08in}\] of the asymptotic variance $\tau^2_{0,s}$ of $n^{1/2}\,(\psi_{n,s}-\psi_{0,s})$.\vspace{.1in}
\end{algorithmic}
}
\end{spacing}
\end{algorithm}

Condition (A1) ensures that there is no first-order contribution that results from estimation of $f_0$. As indicated above, this condition can generally be established as a consequence  of the optimality of $f_0$.  However, in each particular problem, appropriate regularity conditions on $P_0$ and $\mathcal{F}$ must be determined for this condition to hold. We have provided details for Examples 1--4 in the Supplementary Material, though we summarize our findings here. In Example 1, we have that $|V(f,P_0)-V(f_0,P_0)|=E_0\{f(X)-f_0(X)\}^2/\sigma^2_0$ as long as $\mu_0\in\mathcal{F}$, and so, condition (A1) holds with $C=1/\sigma^{2}_0$ and $\|\cdot\|_\mathcal{F}$ taken to be either the $L_2(P_0)$ or supremum norm. In Example 2, provided that all elements of $\mathcal{F}$ are bounded between $\gamma$ and $1-\gamma$ for some $\gamma\in(0,1)$, and that $\mu_0\in\mathcal{F}$, then condition (A1) holds with $C=\{\gamma \log\left(1-\gamma\right) \}^{-1}$ and $\|\cdot\|_\mathcal{F}$ taken to be either the $L_2(P_0)$ or supremum norm. In Example 3, condition (A1) holds for $C=4\kappa$ and $\|\cdot\|_{\mathcal{F}}$ the supremum norm provided the classification margin condition $P_0\left\{|\mu_0(X)-0.5|\leq t\right\}\leq \kappa t$ holds for some $0<\kappa<\infty$ and all $t$ small. Similarly, in Example 4, condition (A1) holds for $C=2\kappa/\{\pi_0(1-\pi_0)\}$ with $\pi_0:=P_0\left(Y=1\right)$ and $\|\cdot\|_{\mathcal{F}}$ the supremum norm provided the margin condition $P_0\left\{|\mu_0(X_1)-\mu_0(X_2)|\leq t\right\}\leq \kappa t$ holds for some $0<\kappa<\infty$ and all $t$ small, where $X_1$ and $X_2$ are independent draws from $P_0$.

Condition (A2) is a form of locally uniform Hadamard differentiability of $P \mapsto V(f_0, P)$ at $P_0$ in a neighborhood of $f_0$. It can be readily verified in Examples 1--4; in fact, in Examples 1--3, this condition holds for any $\delta>0$. Condition (A3) requires that the optimizer $f_P$ vary smoothly in $P$ around $P_0$, and is often straightforward to verify when $f_P$ has a closed analytic form. Condition (A4) instead requires that the Hadamard derivative of $P\mapsto V(f,P)$ at $P_0$ vary smoothly in $f$ around $f_0$. Condition (B1) requires that $f_0$ be estimated at a sufficiently fast rate in order for second-order terms to be asymptotically negligible, while condition (B2) states that a particular parameter-specific functional of $f_n$ must tend to the corresponding evaluation of $f_0$, and is thus implied by consistency of $f_n$ with respect to some norm under which this functional is continuous. Condition (B3) restricts the complexity of the algorithm used to generate $f_n$. We note that conditions (B1)--(B3) depend not only on the predictiveness measure chosen and on the true data-generating mechanism but also on properties of the estimator of the oracle prediction function.

\subsection{Implementation based on cross-fitting}

Condition (B3) puts constraints on the complexity of the algorithm used to generate $f_n$. This condition is prone to violations when flexible machine learning tools are employed, as discussed in \citet{zheng2011} and \citet{chernozhukov2018}, for example. However, it can be eliminated by dividing the entire dataset into two parts (say, training and test sets), estimating $f_0$ using the training data, and then evaluating the predictiveness measure on the test data. This readily extends to $K$-fold cross-fitting. To construct a cross-fitted estimator in the current context, we begin by randomly partitioning the dataset into $K$ subsets of roughly equal size. Setting aside one such subset, we construct an estimator $f_{k,n}$ of $f_0$ based on the bulk of the data, and then store $v_{k,n}:=V(f_{k,n},P_{k,n})$, where $P_{k,n}$ is the empirical distribution estimator based on the data set aside. We note that $f_{k,n}$ and $P_{k,n}$ are therefore estimated using non-overlapping subsets of the data. After repeating this operation for each of the $K$ subsets, we finally construct the cross-fitted estimator $v^*_{n}:=\frac{1}{K}\sum_{k=1}^{n}v_{k,n}$ of $v_0$.

To describe the large-sample behavior of $v_n^*$, we require an adaptation of the previously defined conditions (B1) and (B2) to the context of cross-fitted estimators. Below, the random function $g_{k,n}$ is defined identically as $g_n$ but with $f_n$ replaced by $f_{k,n}$.
\begin{enumerate}
     \item[(B1')] (\textit{minimum rate of convergence}) $\|f_{k,n}-f_0\|_\mathcal{F}=o_P(n^{-1/4})$ for each $k\in\{1,\ldots,K\}$;
     \item[(B2')] (\textit{weak consistency}) \revision{$\int \{g_{k,n}(z)\}^2dP_0(z)=o_P(1)$} for each $k\in\{1,\ldots,K\}$.
\end{enumerate}
The resulting cross-fit estimator $v_n^*$ enjoys desirable large-sample properties under weaker conditions than those imposed on $v_n$, as the theorem below states. In particular, condition (B3), which in practice limits the complexity of machine learning tools used to estimate $f_0$, is no longer required.

\begin{thm}\label{thm:cv}
If conditions (A1)--(A2) and (B1')--(B2') hold, then $v_n^*$ is an asymptotically linear estimator of $v_0$ with influence function  equal to $\phi_0:x\mapsto \dot{V}(f_0,P_0;\delta_z-P_0)$, that is, \[v_n^*-v_0\, =\, \frac{1}{n}\sum_{i=1}^{n}\dot{V}(f_0,P_0;\delta_{Z_i}-P_0)+o_P(n^{-1/2})\]under sampling from $P_0$.  If conditions (A3)--(A4) also hold, then $v^*_n$ is nonparametric efficient.
\end{thm}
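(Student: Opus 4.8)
The plan is to mirror the expansion behind Theorem~\ref{thm:general_vim}, applied fold by fold, and to exploit the independence built into cross-fitting to dispense with the Donsker condition (B3). Fix a fold index $k$ and let $\mathcal{T}_k$ denote the $\sigma$-field generated by the training data used to construct $f_{k,n}$, so that conditionally on $\mathcal{T}_k$ the estimator $f_{k,n}$ is a fixed element of $\mathcal{F}$ and $P_{k,n}$ is the empirical distribution of observations drawn i.i.d.\ from $P_0$ and independent of $\mathcal{T}_k$. For each $k$ I would decompose, exactly as in \eqref{eq:plug_in_expansion},
\begin{align*}
v_{k,n}-v_0 = \{V(f_0,P_{k,n})-V(f_0,P_0)\} + \{V(f_{k,n},P_0)-V(f_0,P_0)\} + r_{k,n},
\end{align*}
with $r_{k,n}:=\{V(f_{k,n},P_{k,n})-V(f_{k,n},P_0)\}-\{V(f_0,P_{k,n})-V(f_0,P_0)\}$, and then treat the three terms separately before summing over $k$.

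For the first term, $P\mapsto V(f_0,P)$ is (locally uniformly) Hadamard differentiable at $P_0$ by (A2), so the functional delta method applied to $P_{k,n}$ yields $V(f_0,P_{k,n})-V(f_0,P_0)=\tfrac{1}{n_k}\sum_{i\in k}\phi_0(Z_i)+o_P(n^{-1/2})$, using $\|P_{k,n}-P_0\|_\infty=O_P(n^{-1/2})$ and $n_k\asymp n/K$. The second term is handled purely by optimality: since $f_{k,n}\to f_0$ in $\|\cdot\|_\mathcal{F}$ by (B1'), condition (A1) gives $|V(f_{k,n},P_0)-V(f_0,P_0)|\le C\|f_{k,n}-f_0\|_\mathcal{F}^2=o_P(n^{-1/2})$.

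The crux is the remainder $r_{k,n}$, which I would split into two pieces. Writing $\dot V(f,P_0;P_{k,n}-P_0)=\int \dot V(f,P_0;\delta_z-P_0)\,d(P_{k,n}-P_0)(z)$ by linearity of the G\^{a}teaux derivative in its direction argument, the first piece collects the two von Mises remainders $\{V(f,P_{k,n})-V(f,P_0)-\dot V(f,P_0;P_{k,n}-P_0)\}$ evaluated at $f=f_{k,n}$ and $f=f_0$; since $f_{k,n}$ eventually lies in the $\delta$-ball about $f_0$, the locally uniform differentiability in (A2) forces both to be $o_P(n^{-1/2})$. The second piece is exactly $(P_{k,n}-P_0)g_{k,n}$, with $g_{k,n}$ as defined before the theorem. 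This is where cross-fitting earns its keep: conditionally on $\mathcal{T}_k$, the function $g_{k,n}$ is fixed with $P_0$-mean zero, and the fold-$k$ observations are i.i.d.\ and independent of $\mathcal{T}_k$, so $E[\{(P_{k,n}-P_0)g_{k,n}\}^2\mid\mathcal{T}_k]=n_k^{-1}\mathrm{Var}_{P_0}(g_{k,n})\le n_k^{-1}\int g_{k,n}^2\,dP_0$. Taking expectations and invoking (B2') gives $n_k\,E[\{(P_{k,n}-P_0)g_{k,n}\}^2]\to 0$, whence $(P_{k,n}-P_0)g_{k,n}=o_P(n^{-1/2})$ by Chebyshev's inequality, with no appeal to any Donsker property.

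Collecting the pieces and averaging over the $K$ folds, the leading terms combine to $\tfrac1K\sum_k \tfrac{1}{n_k}\sum_{i\in k}\phi_0(Z_i)=\tfrac1n\sum_{i=1}^n\phi_0(Z_i)$ for equal fold sizes, while the $O(K)$ remainder terms, being finite in number, sum to $o_P(n^{-1/2})$; this establishes the claimed asymptotic linearity with influence function $\phi_0$. The efficiency claim under (A3)--(A4) then requires no new work beyond Theorem~\ref{thm:general_vim}: the same argument identifying $\phi_0$ with the nonparametric EIF of $P\mapsto V(f_P,P)$ at $P_0$ applies verbatim, since (A3) controls the smoothness of $P\mapsto f_P$ and (A4) the continuity of the derivative in $f$, together ensuring the perturbation of the optimizer contributes nothing at first order. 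I expect the main obstacle to be the conditioning argument that upgrades the mean-square bound on $(P_{k,n}-P_0)g_{k,n}$ to an unconditional $o_P(n^{-1/2})$ statement, together with verifying that (A2) can be invoked with the random, non-convergent empirical direction $P_{k,n}-P_0$ --- precisely the step where Hadamard rather than merely G\^{a}teaux differentiability is essential.
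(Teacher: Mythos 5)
Your proposal is correct and follows essentially the same path as the paper's proof: the identical three-term fold-wise decomposition, the delta method under (A2) for the first term, (A1) with (B1') for the second, and a condition-on-the-training-fold argument that lets (B2') replace the Donsker condition for the remainder. The only substantive difference is the final moment bound on $(P_{k,n}-P_0)g_{k,n}$: the paper bounds its conditional \emph{first} moment via a symmetrization/Orlicz-norm lemma (Lemma~\ref{lem:inequality}), whereas you bound the conditional \emph{second} moment directly and apply Chebyshev --- an interchangeable and, if anything, more elementary route to the same conclusion.
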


The cross-fitted construction can be used to obtain an improved estimator $v^*_{n,s}$ of $v_{0,s}$ as well, thereby resulting in a cross-fitted estimator $\psi^*_{n,s}:=v^*_{n}-v^*_{n,s}$ of the VIM value $\psi_{0,s}$. \revision{Cross-fitting can also be used to obtain an improved estimator $\tau_{n,s,*}^2$ of the asymptotic variance $\tau^2_{0,s}$. We summarize this construction in Algorithm~\ref{alg:cv_est}}, and provide the explicit form of $\psi^*_{n,s}$ for Examples 1--4 in the Supplementary Material. As before, Theorem \ref{thm:cv} readily provides conditions under which $v_{n,s}^*$ is an asymptotically linear and nonparametric efficient estimator of $v_{0,s}$, and so, under which $\psi_{n,s}^*$ is an asymptotically linear and nonparametric efficient estimator of the VIM value $\psi_{0,s}$. Based on these theoretical results as well as numerical experiments, we recommend this implementation whenever machine learning tools are used to estimate $f_0$ and $f_{0,s}$.
\vspace{.1in}
\begin{algorithm}
\caption{\revision{Cross-fitted inference on VIM value $\psi_{0,s}$ (valid in non-null settings)}}
\label{alg:cv_est}
\begin{spacing}{1.1}
\revision{
\begin{algorithmic}[1]
\vspace{.1in}
\State generate $B_n \in \{1, \ldots, K\}^n$ by sampling uniformly from $\{1, \ldots, K\}$ with replacement, and for $j=1,\ldots,K$, denote by $D_j$ the subset of observations with index in $S_j:= \{i: B_{n,i}=j\}$;
\For{$k = 1, \ldots, K$}
\State using only data in $\cup_{j\neq k}D_j$, construct estimators $f_{k,n}$ of $f_0$ and $f_{k,n,s}$ of $f_{0,s}$;
\State using only data in $D_k$, construct empirical distribution estimator $P_{k,n}$ of $P_0$;
\State with $n_k := \sum_{i=1}^n I\{i \in S_k\}$, compute ${\psi}_{k,n,s}:=V(f_{k,n}, P_{k,n})-V(f_{k,n,s}, P_{k,n})$ and \vspace{-.08in}\[\tau^2_{k,n,s}:=\frac{1}{n_k}\sum_{i\in S_k}\{\dot{V}(f_{k,n}, P_{k,n}; \delta_{Z_i} - P_{k,n})-\dot{V}(f_{k,n,s}, P_{k,n}; \delta_{Z_i} - P_{k,n})\}^2;\vspace{-.1in}\]
\EndFor
\State compute estimator ${\psi}_{n,s}^*:= \frac{1}{K}\sum_{k=1}^K {\psi}_{k,n,s}$ of $\psi_{0,s}$;
\State compute estimator $\tau_{n,s,*}^2:= \frac{1}{K}\sum_{k=1}^K \tau_{k,n,s}^2$ of the asymptotic variance $\tau^2_{0,s}$ of $n^{1/2}\,(\psi^*_{n,s}-\psi_{0,s})$.
\vspace{.1in}
\end{algorithmic}
}
\end{spacing}
\end{algorithm}

\subsection{Inference under the zero-importance null hypothesis}\label{sec:test}

When $\psi_{0,s}=0$, in which case the variable group considered has null importance, the influence function of $\psi_{n,s}$ is identically zero. In these cases, even after standardization, $\psi_{n,s}$ generally does not tend to a non-degenerate law. As such, deriving an implementable test of the null hypothesis $\psi_{0,s}=0$ or a confidence interval valid even when $\psi_{0,s}=0$ is difficult. In such cases, standard Wald-type confidence intervals and tests based on $\tau_{n,s}^2$ will typically have incorrect coverage or type I error, as illustrated in numerical simulations reported in \cite{williamson2020a}. While in parametric settings $n$-rate inference is possible under this type of degeneracy, this is not expected to be the case in nonparametric models, because the second-order contribution from estimation of $f_0$ and $f_{0,s}$ will generally have a rate slower than $n^{-1}$.

We note that, although $\psi_{n,s}$ has degenerate behavior under the null, each of $v_n$ and $v_{n,s}$ are asymptotically linear with non-degenerate (but possibly identical) influence functions. Except for extreme cases in which the entire set of covariates has null predictiveness, we may leverage this fact to circumvent null degeneracy via sample-splitting. Indeed, if $v_n$ and $v_{n,s}$ are constructed using different subsets of the data, then the resulting estimator $\psi_{n,s}$ is asymptotically linear with a non-degenerate influence function even if $\psi_{0,s}=0$, so that a valid Wald test of the strict null $H_0:\psi_{0,s}=0$ versus $H_1:\psi_{0,s}>0$ can be constructed using $\psi_{n,s}$ and an estimator of the standard error of $\psi_{n,s}$. Of course, the same holds for the corresponding cross-fitted procedures, as we consider below. We emphasize here that sample-splitting and cross-fitting are distinct operations with distinct goals. Sample-splitting is used to ensure valid inference under the zero-importance null hypothesis, whereas cross-fitting is used to eliminate the need for Donsker class conditions, which otherwise limit how flexible the learning strategies for estimating the oracle prediction functions can be. Sample-splitting and cross-fitting can be used simultaneously --- Figure~\ref{fig:samplesplit} provides an illustration of the subdivision of a dataset when equal subsets are used for sample-splitting and six splits are used for cross-fitting.

In practice, a group of variables may be considered scientifically unimportant even when $\psi_{0,s}$ is nonzero but small, yet such grouping would be deemed statistically significant in large enough samples. For this reason, given a threshold $\beta>0$, it may be more scientifically appropriate to consider testing the $\beta$-null $H_0: \psi_{0,s} \in[0, \beta]$ versus its complement alternative $H_1: \psi_{0,s} > \beta$. The $\beta$-null approaches the strict null as $\beta$ decreases to 0. The idea of sample-splitting also allows us to tackle $\beta$-null testing. Suppose that mutually exclusive portions of the dataset, say of respective sizes $n-n_s$ and $n_s$, are used to construct $v_n^*$ and $v_{n,s}^*$. Suppose further that $\eta^2_n$ and $\eta^2_{n,s}$ are consistent estimators of $\eta_0^2:=E_0\{\phi_0(Z)\}^2$ and $\eta_{0,s}^2:=E_0\{\phi_{0,s}(Z)\}^2$, respectively. Then, provided $v_0>0$, we may consider rejecting the $\beta$-null hypothesis $H_0$ in favor of its complement $H_1$ if and only if
\begin{equation}\label{eq:null_test}
t_n := \omega_{n,s}^{-1/2}\left(v_n^* - v_{n,s}^* - \beta\right)>z_{1-\alpha}\ ,
\end{equation}where $\omega_{n,s}:=\eta_n^2/(n-n_s)+\eta^2_{n,s}/n_s$ and $z_{1-\alpha}$ is the $(1-\alpha)^\text{th}$ quantile of the standard normal distribution. The implementation of the resulting test, including computation of the corresponding $p$-value, is summarized in Algorithm~\ref{alg:sscf}. Its validity is guaranteed under conditions of Theorem 2 directly applied on the split used to estimate $v_0$ and modified appropriately (by replacing instances of $f_n$, $f_0$ and $\mathcal{F}$ by $f_{n,s}$, $f_{0,s}$ and $\mathcal{F}_s$ in all conditions) for the split used to estimate $v_{0,s}$. We note that, although there is no degeneracy under the $\beta$-null whenever $\psi_{0,s}\in(0,\beta)$, sample-splitting is still required for proper type I error control since the strict null $\psi_{0,s}=0$ is contained in the $\beta$-null and must therefore be guarded against. We emphasize here that the use of distinct subsets of the data is critical for constructing $v_n^*$ and $v_{n,s}^*$. If instead $\psi_{0,s} = 0$ and $v_n^*$ and $v_{n,s}^*$ were constructed using the same data, the behavior of any testing procedure based on an estimator $\kappa_{n,s}$ of the standard error of $\psi_{n,s}$ would depend on the relative rates of convergence to zero of both $\psi_{n,s}$ and $\kappa_{n,s}$. In particular, this would lead to either uncontrolled type I error or type I error tending to zero depending on the procedures used to obtain $f_n$ and $f_{n,s}$. \revision{Inference based on a cross-fitted version of this sample-split procedure is described in Algorithm~\ref{alg:sscf}.}

\begin{figure}
\centering
\includegraphics[width = 0.9\textwidth]{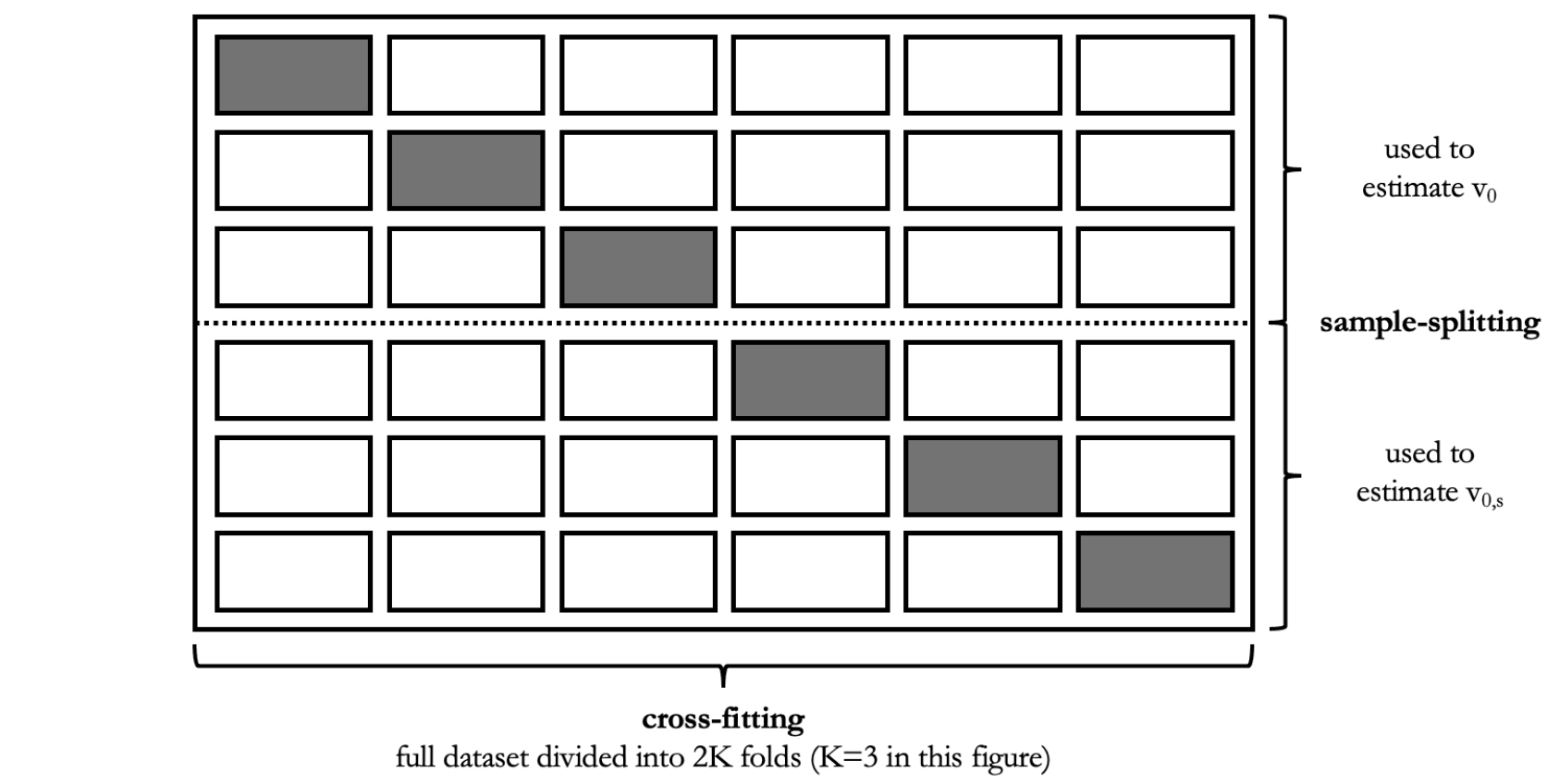}
\caption{\revision{Illustration of dataset subdivision when sample-splitting and cross-fitting are used simultaneously for valid inference under the zero-importance hypothesis (sample-splitting) without requiring Donsker class conditions (cross-fitting). Each row represents the entire dataset with a different subset singled out (in grey) as testing set. To estimate $v_0$, the top three rows are used. In each such row, $f_0$ is estimated using data in the white cells, and $v_0$ is estimated using the resulting estimate of $f_0$ and data in the grey cells. Row-specific estimates of $v_0$ are then averaged. The process is repeated for estimating $v_{0,s}$ but instead using the bottom three rows and estimating $f_{0,s}$ rather than $f_0$.}}
\label{fig:samplesplit}
\end{figure}

\vspace{.1in}

\begin{algorithm}
\caption{\revision{Sample-split, cross-fitted inference on VIM value $\psi_{0,s}$}}
\label{alg:sscf}
\begin{spacing}{1.1}
\revision{
\begin{algorithmic}[1]
\vspace{.1in}
\State generate $B_n \in \{1, \ldots, 2K\}^2$ by sampling uniformly from $\{1,2\}$ with replacement, and for $j = 1, \ldots, 2K$, denote by $D_j$ the set of observations with index in $S_j := \{i: B_{n,i} = j\}$ and $n_j:=|D_j|$;
\For{$k = 1, \ldots, 2K$}
    \State using only data in $\cup_{j \neq k}D_j$, construct estimators $f_{k,n}$ of $f_0$ and $f_{k,n,s}$ of $f_{0,s}$;
    \State using only data in $D_k$, construct estimator $P_{n,k}$ of $P_0$;
    \State if $k$ is odd, compute $\eta_{k,n}^2:=\frac{1}{n_k}\sum_{i\in S_k}\dot{V}(f_{k,n}, P_{k,n}; \delta_{Z_i} - P_{k,n})^2$ and $v_{k,n}:=V(f_{k,n},P_{k,n})$;
    \State if $k$ is even, compute $\eta_{k,n,s}^2:=\frac{1}{n_k}\sum_{i\in S_k}\dot{V}(f_{k,n,s}, P_{k,n}; \delta_{Z_i} - P_{k,n})^2$ and $v_{k,n,s}:=V(f_{k,n,s},P_{k,n})$;
\EndFor
\State compute $v_n^* := \frac{1}{K}\sum_{k = 1}^K v_{2k-1,n}$, $v_{n,s}^* := \frac{1}{K}\sum_{k = 1 }^K v_{2k,n,s}$ and estimator $\psi^*_{n,s}:=v_n^*-v_{n,s}^*$ of $\psi_{0,s}$;
\State compute $\eta_n^2 := \frac{1}{K}\sum_{k=1}^{K}\eta_{2k-1,n}^2$, $\eta_{n,s}^2 := \frac{1}{K}\sum_{k=1}^{K}\eta_{2k,n,s}^2$ and estimator $\omega_{n,s}:=\eta^2_n/(n-n_s)+\eta_{n,s}^2/n_s$ of the variance of $\psi^*_{n,s}$;
\State to test $H_0: \psi_{0,s} \in [0,\beta]$ vs $H_1:\psi_{0,s}>\beta$ at level $1-\alpha$, reject $H_0$ in favor of $H_1$ iff $p_n:=1-\Phi(t_n) < \alpha$ with $t_n:=\omega^{-1/2}_{n,s}(\psi_{n,s}^*-\beta)$ and $\Phi$ the standard normal distribution function.
\vspace{.1in}
\end{algorithmic}
}
\end{spacing}
\end{algorithm}

The above testing procedure can be readily inverted to yield a one-sided confidence interval for $\psi_{0,s}$. Specifically, under regularity conditions and provided $v_0>0$, the random interval $(v_n^*-v^*_{n,s}-z_{1-\alpha}\omega_{n,s}^{1/2},+\infty)$ contains $\psi_{0,s}$ with probability no less than $1-\alpha$ asymptotically, even when $\psi_{0,s}=0$. Then, rejecting the null hypothesis $H_0$ is equivalent to verifying that zero is contained in this one-sided interval. A two-sided confidence interval is instead given by $(v_n^*-v_{n,s}^*-z_{1-\alpha/2}\omega_{n,s}^{1/2},v_n^*-v_{n,s}^*+z_{1-\alpha/2}\omega_{n,s}^{1/2})$. While the latter interval has the advantage of giving both a lower and upper bound on possible values for $\psi_{0,s}$ supported by the data, using it for testing purposes  necessarily results in a reduction in power since the null value of $\psi_{0,s}$ is at the edge of the parameter space.

\section{Extensions to more complex settings}\label{sec:complex}

In all examples studied thus far, the primary role $P$ plays in $V(f,P)$ is to indicate the  population with respect to which a particular measure of prediction performance should be averaged. In these cases, $P\mapsto V(f,P)$ is well-defined on discrete probability measures and sufficiently smooth so that $V(f_0,P_n)-V(f_0,P_0)$ is in first order a linear estimator in view of the functional delta method. However, there are other examples in which this requirement may not be true. In these examples, $V(f,P)$ involves $P$ in a complex manner beyond some form of averaging, rendering $V(f,P)$ undefined for discrete $P$, let alone Hadamard differentiable. Complex predictiveness measures often arise when the sampling mechanism precludes from observation the ideal data unit on which a (possibly simpler) predictiveness measure is defined, and identification formulas must therefore be established to express predictiveness in terms of the observed data-generating distribution.

As a concrete illustration, we begin with an example from the causal inference literature.
As before, we denote by $Y$ and $X$ the outcome of interest and a covariate vector, respectively. We suppose that larger values of $Y$ correspond to better clinical outcomes, and consider a binary intervention $A\in\{0,1\}$. A given treatment rule $f:\mathcal{X}\rightarrow\{0,1\}$ for assigning the value of $A$ based on $X$ can be adjudicated, for example, on the basis of the population mean outcome that would arise if everyone in the population were treated according to $f$. We can consider the ideal data structure to be $\mathbbs{Z}:=(X,A,Y(0),Y(1))\sim \mathbbs{P}_0$, where for each $a\in\{0,1\}$,  $Y(a)$ denotes the counterfactual outcome corresponding to the intervention that deterministically sets $A=a$. The ideal-data predictiveness of $f$ is then $\mathbbs{V}(f,\mathbbs{P}_0):=E_{\mathbbs{P}_0}\left\{Y(f(X))\right\}$. In contrast, the observed data structure is $Z:=(X,A,Y)\sim P_0$, and we must find some observed-data predictiveness measure $V$ such that $V(f,P_0)=\mathbbs{V}(f,\mathbbs{P}_0)$ to establish identification and proceed with estimation and inference. Defining the outcome regression $Q_P(a,x):=E_P\left(Y\mid A=a,X=x \right)$, it is not difficult to verify that \[V(f,P):=E_P\left[Q_P(f(X),X)\right]\] provides a valid identification of $\mathbbs{V}(f,\mathbbs{P})$ under standard causal identification conditions. We note that this predictiveness measure involves $P$ through more than simple averaging, as the outcome regression $Q_P$ also appears in the definition of $V(f,P)$. Unless the distribution of $X$ is discrete under $P$, $Q_P$ is ill-defined on the empirical distribution $P_n$, thus \revision{violating conditions (A1) and (A4) defined in Section~\ref{sec:est}}. We also remark that, in this example, the moniker `prediction function' is not entirely fitting for $f$, which represents a treatment rule and maps into the treatment (rather than outcome) space. Nevertheless, the proposed framework for variable importance remains applicable, underscoring the fact that it is sufficiently flexible to unify a large swath of variable importance problems. Restrictions imposed on the data structure and on the properties of the prediction function in Section~\ref{sec:vimp} were largely for the sake of concreteness.

The simple plug-in approach described in Section~\ref{sec:est} may fail in applications with more complex predictiveness measures. In such cases, we can instead employ a more general strategy based on nonparametric debiasing techniques to make valid inference about $V(f_0,P_0)$. For each $P\in\mathcal{M}$, we denote by $f_P$ any optimizer of $f\mapsto V(f,P)$ over $\mathcal{F}$, and define the parameter mapping $V^*:P\mapsto V(f_P,P)$ so that $v_0$ can be expressed as $V^*(P_0)$. If $\widehat{P}_n\in\mathcal{M}$ is an estimator of $P_0$, the plug-in estimator $V^*(\widehat{P}_n)$ generally fails to be asymptotically linear unless $\widehat{P}_n$ was purposefully constructed to ensure that it is indeed so. This happens because the plug-in estimator $V^*(\widehat{P}_n)$ generally suffers from excessive bias whenever flexible learning techniques have been used, for example, because $V^*(P_0)$ involves local features of $P_0$ (e.g., the conditional mean or density function) --- see \citet{pfanzagl1982} and \citet{vanderlaan2011}. This fact renders the use of debiasing approaches necessary. In contrast, the one-step  estimator \[v_{n,OS}:=V^*(\widehat{P}_n)+\frac{1}{n}\sum_{i=1}^{n}\phi_n(Z_i)\ ,\] where $\phi_n$ is the nonparametric EIF of $V^*$ at $\widehat{P}_n$, is nonparametric efficient \citep{pfanzagl1982} under regularity conditions. Alternatively, the framework of  targeted minimum loss-based estimation describes how to convert $\widehat{P}_n$ into a revised estimator $\widehat{P}_n^*$ such that $V^*(\widehat{P}^*_n)$ is itself nonparametric efficient without the need for further debiasing \citep{vanderlaan2011}. Similarly as in Section~\ref{sec:est}, cross-fit versions of these debiasing procedures (see, e.g., \citealp{zheng2011,chernozhukov2018}) can be used to improve performance when flexible estimation algorithms are used.

The generic approach above relies on deriving the nonparametric EIF of $V^*$. The definition of $V^*$ involves $P$ in various ways, including through the $P$-optimal prediction function $f_P$. While in our examples $f_P$ has a simple closed-form expression, this may not always be so. This fact can greatly complicate the derivation of the required EIF. However, as we shall see, the optimality of $f_P$ often implies that  $f_P$ does not contribute to the nonparametric EIF of $P\mapsto V(f_P,P)$.

Before stating a formal result to this effect, we introduce a regularity condition. Below, $L_2^0(P_0)$ refers to the subset of all functions in $L_2(P_0)$ that have mean zero under $P_0$.

\begin{enumerate}
\item[(A5)] There exists a dense subset $\mathcal{H}$ of $L_2^0(P_0)$ such that, for each $h\in\mathcal{H}$ and regular univariate parametric submodel $\{P_{0,\epsilon}\}\subset\mathcal{M}$ through $P_0$ at $\epsilon=0$ and with score for $\epsilon$ equal to $h$ at $\epsilon=0$ (see, e.g., \citealp{bkrw}), the following conditions hold, with $f_{0,\epsilon}$ denoting $f_{P_{0,\epsilon}}$: \begin{itemize}
\item[(A5a)] (\emph{second-order property of predictiveness perturbations})\\$V(f_{0,\epsilon},P_{0,\epsilon})-V(f_{0,\epsilon},P_0)=V(f_{0},P_{0,\epsilon})-V(f_0,P_0)+o(\epsilon)$ holds;
\item[(A5b)] (\emph{differentiability})  the mapping $\epsilon\mapsto V(f_{0,\epsilon},P_0)$ is differentiable in a neighborhood of $\epsilon=0$;
\item[(A5c)] (\emph{richness of function class}) the optimizer $f_{0,\epsilon}$ is in $\mathcal{F}$ for small enough $\epsilon$.
\end{itemize}
\end{enumerate}

Condition (A5a) essentially requires the pathwise derivative of $P\mapsto V(f,P)$ at $P_0$ to be insensitive to infinitesimal perturbations of $f$ around $f_0$. In such case, the difference-in-differences term appearing in the condition can indeed be expected to be second-order in $\epsilon$. Condition (A5b) will generally hold provided the functionals $f\mapsto V(f,P_0)$ and $P\mapsto f_P$ are sufficiently smooth around $f_0$ and $P_0$, respectively. Finally, condition (A5c) requires that $\mathcal{F}$ be sufficiently rich around $f_0$ so that, for a dense collection of paths through $P_0$, $\mathcal{F}$ contains $f_{0,\epsilon}$ for small enough $\epsilon$ .

\begin{thm}\label{thm:same_eif}
Provided condition (A5) holds, if $P\mapsto V(f_0,P)$ is pathwise differentiable at $P_0$ relative to the nonparametric model $\mathcal{M}$, then so is $P\mapsto V(f_P,P)$, and the two parameters have the same EIF.
\end{thm}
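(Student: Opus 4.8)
The plan is to fix an arbitrary regular univariate parametric submodel $\{P_{0,\epsilon}\}\subset\mathcal{M}$ through $P_0$ at $\epsilon=0$ with score $h\in\mathcal{H}$, and to show that the pathwise derivative of $V^*:P\mapsto V(f_P,P)$ along this path coincides with that of $P\mapsto V(f_0,P)$. Writing $f_{0,\epsilon}$ for $f_{P_{0,\epsilon}}$, and noting that $V^*(P_0)=V(f_0,P_0)$ and $V^*(P_{0,\epsilon})=V(f_{0,\epsilon},P_{0,\epsilon})$, I would decompose
\[V^*(P_{0,\epsilon})-V^*(P_0)=\underbrace{\{V(f_{0,\epsilon},P_{0,\epsilon})-V(f_{0,\epsilon},P_0)\}}_{(\mathrm{I})}+\underbrace{\{V(f_{0,\epsilon},P_0)-V(f_0,P_0)\}}_{(\mathrm{II})}.\]
The two terms isolate, respectively, the effect of perturbing the distribution argument (with the optimizer allowed to move) and the effect of perturbing the optimizer argument (with the distribution held fixed at $P_0$). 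The goal is to show that $(\mathrm{II})$ contributes nothing to first order and that $(\mathrm{I})$ contributes exactly the pathwise derivative of $P\mapsto V(f_0,P)$.

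For term $(\mathrm{II})$, I would invoke the optimality of $f_0$. Since $f_0$ maximizes $f\mapsto V(f,P_0)$ over $\mathcal{F}$ and, by condition (A5c), $f_{0,\epsilon}\in\mathcal{F}$ for $\epsilon$ small, the scalar map $\epsilon\mapsto V(f_{0,\epsilon},P_0)$ attains a maximum at $\epsilon=0$. Condition (A5b) guarantees this map is differentiable in a neighborhood of $\epsilon=0$, so its derivative must vanish at the maximizer, giving $(\mathrm{II})=o(\epsilon)$. This is the envelope-type step at the heart of the argument: optimality of $f_0$ annihilates the first-order sensitivity of predictiveness to perturbations of $f$. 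For term $(\mathrm{I})$, condition (A5a) applies directly, since its left-hand side is precisely $(\mathrm{I})$, yielding $(\mathrm{I})=\{V(f_0,P_{0,\epsilon})-V(f_0,P_0)\}+o(\epsilon)$ and thereby replacing the moving optimizer $f_{0,\epsilon}$ by the fixed $f_0$ at the cost of a second-order remainder.

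Combining the two terms gives $V^*(P_{0,\epsilon})-V^*(P_0)=\{V(f_0,P_{0,\epsilon})-V(f_0,P_0)\}+o(\epsilon)$. Dividing by $\epsilon$, passing to the limit, and using the assumed pathwise differentiability of $P\mapsto V(f_0,P)$ at $P_0$ with EIF $\phi_0$, I obtain
\[\left.\frac{d}{d\epsilon}V^*(P_{0,\epsilon})\right|_{\epsilon=0}=\left.\frac{d}{d\epsilon}V(f_0,P_{0,\epsilon})\right|_{\epsilon=0}=E_0\{\phi_0(Z)h(Z)\}.\]
Since this identity holds for every $h$ in the dense subset $\mathcal{H}\subseteq L_2^0(P_0)$, and the tangent space of the nonparametric model $\mathcal{M}$ is all of $L_2^0(P_0)$, the canonical gradient of $V^*$ is pinned down uniquely: $V^*$ is pathwise differentiable at $P_0$ with EIF equal to $\phi_0$, the same as that of $P\mapsto V(f_0,P)$.

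I expect term $(\mathrm{II})$ --- the vanishing of the first-order contribution of the moving optimizer --- to be the main conceptual obstacle, since it is precisely where optimality of $f_0$ must be converted into a first-order condition; the care required lies in legitimately applying the maximizer-derivative argument, which is why both (A5b) (differentiability along the path) and (A5c) (membership $f_{0,\epsilon}\in\mathcal{F}$) are needed. A secondary technical point is that (A5) only furnishes scores in a dense subset $\mathcal{H}$, so identifying $\phi_0$ as the EIF of $V^*$ requires appealing to uniqueness of the canonical gradient together with the density of $\mathcal{H}$ in $L_2^0(P_0)$, rather than verifying the derivative formula for every score directly.
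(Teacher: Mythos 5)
Your proposal is correct and follows essentially the same route as the paper: the only cosmetic difference is that you add and subtract $V(f_{0,\epsilon},P_0)$ (so that term (I) is literally the left-hand side of (A5a)), whereas the paper adds and subtracts $V(f_0,P_{0,\epsilon})$ and then invokes (A5a) to rearrange --- the two decompositions are algebraically equivalent. The key steps --- killing the optimizer-perturbation term via optimality of $f_0$ together with (A5b)--(A5c), reducing the distribution-perturbation term to the fixed-$f_0$ parameter via (A5a), and then appealing to pathwise differentiability of $P\mapsto V(f_0,P)$ over the dense class of scores --- match the paper's argument exactly.
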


This theorem indicates that, under a regularity condition, the computation of the nonparametric EIF $\phi_0$ can be done treating $f_P$ as fixed at $f_0$, thereby simplifying considerably this calculation. This fact is also useful because for a fixed prediction function $f$ the parameter $P\mapsto V(f,P)$  will often have already been studied in the literature, thereby circumventing the need for any novel derivation. Armed with this observation, we revisit the motivating example we presented in this section, and also consider an additional example involving missing data.\vspace{.15in}

\noindent\emph{Example 5: mean outcome under a binary intervention rule}\\
\noindent As described above, in this example, the ideal-data parameter of interest, $\mathbbs{V}(f,\mathbbs{P}):=E_{\mathbbs{P}}\left\{Y(f(X))\right\}$, can be identified by the observed-data parameter $V(f,P)=E_{P}\left\{Q_P(f(X),X)\right\}$ when the observed data unit consists of $Z=(X,A,Y)\sim P$. The map $f\mapsto V(f,P_0)$ is maximized over the unrestricted class $\mathcal{F}$ by the intervention rule  $f_0:x\mapsto I\{Q_0(1,x)>Q_0(0,x)\}$, and over its subset $\mathcal{F}_s$ by $f_{0,s}:x\mapsto I\{Q_{0,s}(1,x)>Q_{0,s}(0,x)\}$, where we define $Q_{0,s}$ pointwise as $Q_{0,s}(a,x):=E_0\left\{Q_0(a,X)\mid X_{-s}=x_{-s}\right\}$. Furthermore, the parameter $P\mapsto V(f_0,P)$ is pathwise differentiable at a distribution $P_0$ if, for example, $Q_0(1,W)-Q_0(0,W)\neq 0$ occurs $P_0$-almost surely. The nonparametric EIF of $P\mapsto V(f_0,P)$ at $P_0$ is given by \[\phi_0:z\mapsto \frac{I\{a=f_0(x)\}}{g_0(f_0(x),x)}\left\{y-Q_0(f_0(x),x)\right\}+Q_0(f_0(x),x)-V(f_0,P_0)\ ,\] where we define the propensity score $g_0(a,x):=P_0\left(A=a\mid X=x\right)$ for each $a\in\{0,1\}$. Thus, under regularity conditions, the one-step debiased estimator \[v_{n,OS}:=\frac{1}{n}\sum_{i=1}^{n}\left[\frac{I\{A_i=f_n(X_i)\}}{g_n(f_n(X_i),X_i)}\left\{Y_i-Q_n(f_n(X_i),X_i)\right\}+Q_n(f_n(X_i),X_i)\right]\] of $v_0$ is nonparametric efficient, where $Q_n$ and $g_n$ are estimators of $Q_0$ and $g_0$, respectively, and  $f_n$ is defined pointwise as  $f_n(x):=I\{Q_n(1,x)>Q_n(0,x)\}$. The one-step debiased estimator of $v_{0,s}$ is defined similarly, with $f_n$ replaced by any appropriate estimator of $f_{0,s}$, such as $f_{n,s}(x):=I\{Q_{n,s}(1,x)>Q_{n,s}(0,x)\}$ with $Q_{n,s}(a,x)$ obtained by flexibly regressing outcome $Q_n(a,X)$ onto $X_{-s}$ for each $a\in\{0,1\}$.\vspace{.15in}

\noindent\emph{Example 6: Classification accuracy under outcome missingness}\\
Suppose the ideal-data structure consists of $\mathbbs{Z}:=(X,Y)\sim\mathbbs{P}$ and the predictiveness measure of interest based on this ideal data structure is the classification accuracy measure, $\mathbbs{V}(f,\mathbbs{P}):=\mathbbs{P}\{Y = f(X)\}$, described in Example 3. Suppose that the outcome $Y$ is subject to missingness, so that the observed data structure is  $Z:=(X,\Delta,U)$, where $\Delta$ is the indicator of having observed the outcome $Y$, and we have defined $U:=\Delta Y$.  The observed-data predictiveness measure
\[V(f,P):=E_P\left[P\{U = f(X) \mid \Delta = 1, X\}\right]\]
equals the ideal-data accuracy measure provided that (a) $\Delta$ and $Y$ are independent given $X$, and (b) $P\left(\Delta=1\mid X=x\right)>0$ for $P$-almost every value $x$. In other words, the provided identification holds provided the outcome is missing at random (relative to $X$), and there is no subpopulation of patients (as defined by the value of $X$) for which the outcome can never be observed. Defining $\pi_0(x):=P_0\left(U=1\mid\Delta=1,X=x\right)$, the unrestricted optimizers $f_0$ and $f_{0,s}$  are given pointwise by $f_0(x)=I\left\{\pi_0(x)>0.5\right\}$ and  $f_{0,s}(x)=I\left\{E_0\left\{\pi_0(X)\mid X_{-s}=x_{-s}\right\}>0.5\right\}$. Finally, the nonparametric EIF of $P\mapsto V(f_0,P)$ at $P_0$ is given by
\begin{align*}
\phi_0:z\mapsto\frac{\delta}{g_0(x)}\left[I\{u=f_0(x)\}-Q_0(x)\right]+Q_0(x)-V(f_0,P_0)\ ,
\end{align*} where we now have defined the nuisance parameters $g_0(x):=P_0\left(\Delta = 1 \mid X = x\right)$ and $Q_0(x):=P_0\left\{Y=f_0(x)\mid \Delta=1,X=x\right\}=f_0(x)\pi_0(x)+\{1-f_0(x)\}\{1-\pi_0(x)\}$, so that $Q_0(x)$ is no more than a simple transformation of $\pi_0(x)$.  Under regularity conditions, the one-step debiased estimator \[v_{n,OS}:=\frac{1}{n}\sum_{i=1}^{n}\frac{\Delta_i}{g_n(X_i)}\left[I\{U_i=f_n(X_i)\}-Q_n(X_i)\right]+Q_n(X_i)\] of $v_0$ is nonparametric efficient, where $g_n$ and $\pi_n$ are consistent estimators  of $g_0$ and $\pi_0$, and we define $f_n$ and $Q_n$  pointwise as $f_n(x):=I\{\pi_n(x)>0.5\}$ and $Q_n(x):= f_n(x)\pi_n(x)+\{1-f_n(x)\}\{1-\pi_n(x)\}=\max\{\pi_n(x),1-\pi_n(x)\}$. The one-step debiased estimator of $v_{0,s}$ is defined identically except that all instances of $f_n$ are replaced by $f_{n,s}$, which we define pointwise as $f_n(x):=I\{\pi_{n,s}(x)>0.5\}$, with $\pi_{n,s}$ representing an appropriate estimator of $\pi_{0,s}:=E_0\{\pi_0(X)\mid X_{-s}=x_{-s}\}$, obtained, for example, by  flexibly regressing outcome $\pi_n(X)$ onto $X_{-s}$.

\section{Numerical experiments}\label{sec:sims}

\subsection{Simulation setup}

\revision{We now present empirical results describing the performance of our proposed plug-in VIM estimator. In all cases, our simulated dataset included independent replicates of $(X,Y)$, where $X$ is a covariate vector with independent  components $X_1,\ldots,X_p$ each following a standard normal distribution and a binary outcome $Y$ following a Bernoulli distribution with success probability $\Phi(\beta_{01}x_1+\ldots+\beta_{0p}x_p)$ conditional on $X=x$, where $\Phi$ is the standard normal distribution function. In Scenario 1, we set $p = 2$ and $\beta_0 = (2.5,3.5)$, whereas in Scenario 2, we took $p = 4$ and $\beta_0=(2.5, 3.5, 0,0)$; thus, in all cases, the first two features had nonzero importance and the remaining features (if any) had zero importance. In this specification, $Y$ follows a probit model. For each scenario considered, we generated 1000 random datasets of size $n \in \{100, 500, 1000, \dots, 4000\}$, and considered the importance of both $X_1$ and $X_2$ in Scenario 1 and the importance of $X_2$ and $X_3$ in Scenario 2. In each scenario, we considered VIMs based on classification accuracy (\emph{Example 3}) and the area under the ROC curve (\emph{Example 4}). The true values of these VIMs implied by the data-generating mechanisms considered under Scenarios 1 and 2 are provided in Table~\ref{tab:truths}. All analyses were performed using our R package \texttt{vimp} and may be reproduced using code available online (see details in the Supplementary Material). Since results were similar for accuracy and AUC, we only display results for accuracy here but provide results for AUC  in the Supplementary Material.}

\begin{table}
\centering
\caption{Approximate values of $\psi_{0,s}$ in the numerical experiments.}
\label{tab:truths}
\begin{tabular}{l|cccc}
    \hline
                   & \multicolumn{4}{c}{Feature of interest} \\
Importance measure & $X_1$ & $X_2$ & $X_3$ & $X_4$ \\
\hline
Accuracy & 0.136 & 0.236 & 0 & 0\\
Area under the ROC curve & 0.105 & 0.221 & 0 & 0\\
\hline
\end{tabular}
\end{table}

\revision{In Scenario 1, we investigate the finite-sample properties of our proposal in a setting in which all features are truly important. We also use this setting to explore the effect of cross-fitting when using flexible estimators of $f_0$ and $f_{0,s}$. Specifically, we compare the performance of our estimation procedure with and without five-fold cross-fitting when using the following estimators of $f_0$ and $f_{0,s}$: a correctly specified (parametric) probit regression model; a generalized additive model \citep[GAM;][implemented in the R package \texttt{mgcv}]{hastie1990}; random forests \citep[RF;][implemented in the R package \texttt{ranger}]{breiman2001}; and the Super Learner \citep[SL;][implemented in the R package \texttt{SuperLearner}]{vanderlaan2007}. The latter estimator is a particular implementation of stacking \citep{wolpert1992} with favorable finite-sample and asymptotic performance guarantees \citep{vanderlaan2007}. For the Super Learner, we used a library consisting of gradient boosted trees \citep[][implemented in the R package \texttt{xgboost}]{friedman2001}, GAMs (implemented in the R package \texttt{gam}), and random forests, each with the default tuning parameter choices, in addition to parametric probit regression, with five-fold cross-validation to determine the optimal convex combination of these learners that minimizes the cross-validated negative log-likelihood risk. The resulting optimal convex combination of these individual algorithms is the Super Learner-based conditional mean estimator we adopt in any case where the Super Learner was fit. We do not use sample-splitting, since the results of Section~\ref{sec:est} are valid under the alternative. We use Algorithm~\ref{alg:cv_est} to compute the cross-fitted point and standard error estimators for the importance of $X_1$ and $X_2$, from which we computed nominal 95\% Wald-type confidence intervals. We then computed the empirical bias scaled by $n^{1/2}$, the empirical variance scaled by $n$, the empirical coverage of confidence intervals, and the width of these intervals.}

\revision{In Scenario 2, we study the properties of our proposal under the null hypothesis. In this case, we used sample-splitting since the importance of $X_3$ and $X_4$ is zero. We again ran both cross-fitted  and non-cross-fitted implementations, and considered the same learning strategies as in Scenario 1, with one exception: in this case, we added the lasso \citep[][implemented in the R package \texttt{glmnet}]{tibshirani1996} to the library of candidate learners in the Super Learner. As before, we computed point estimates and nominal 95\% Wald-type confidence intervals but also obtained $p$-values for the null hypothesis using the sample-splitting procedure of Algorithm~\ref{alg:sscf}. We then computed the empirical bias scaled by $n^{1/2}$, the empirical variance scaled by $n$, the empirical coverage of confidence intervals, and the rejection probability for the proposed hypothesis test.}

\subsection{Primary empirical results}\label{sec:primary-sims}

In Figure~\ref{fig:main_alt_performance}, we display the results of the experiment conducted under Scenario 1, in which both features have nonzero importance. \revision{For ease of visualization, we only display the results for $X_2$; the results for $X_1$ are similar and available in the Supplementary Material. In the top-left panel, we observe that the bias of the proposed estimators decreases to zero at rate faster than $n^{1/2}$ for all non-cross-fitted estimators except those based on random forests and Super Learner, whereas it does so for all cross-fitted estimators. This reflects the need for cross-fitting in cases where the Donsker class conditions of Theorem~\ref{thm:general_vim} may fail to hold. The top-right panel shows that the variance of all estimators is approximately proportional to $n$. In the bottom-left panel, we observe that coverage of nominal 95\% confidence intervals increases to the nominal level with increasing sample size for all cases except the non-cross-fitted estimators based on random forests and Super Learner. In the bottom-right panel, we see that the width of these intervals decreases with increasing sample size, as expected.}

\revision{In Figure~\ref{fig:main_null_performance}, we display the results pertaining to null feature $X_3$ in the experiments conducted under Scenario 2. Here, it appears that the bias vanishes at a rate faster than $n^{-1/2}$ for both the cross-fitted and non-cross-fitted estimators (top-left panel), but that the variance of the non-cross-fitted estimators tends to increase with increasing sample size, especially for the more flexible learning algorithms (top-right panel). We observe that empirical coverage is near the nominal level at all sample sizes (bottom-left panel). Finally, we see that the type I error of the proposed hypothesis test is controlled at the nominal level for all cross-fitted procedures, but not so for their non-cross-fitted counterparts, yielding an inflated type I error in that case (bottom-right panel). In the Supplementary Material, we present results for the non-null feature $X_2$, which show that power of the proposed test is large for all sample sizes considered here.}

\begin{figure}
\centering
\includegraphics[width = 1\textwidth]{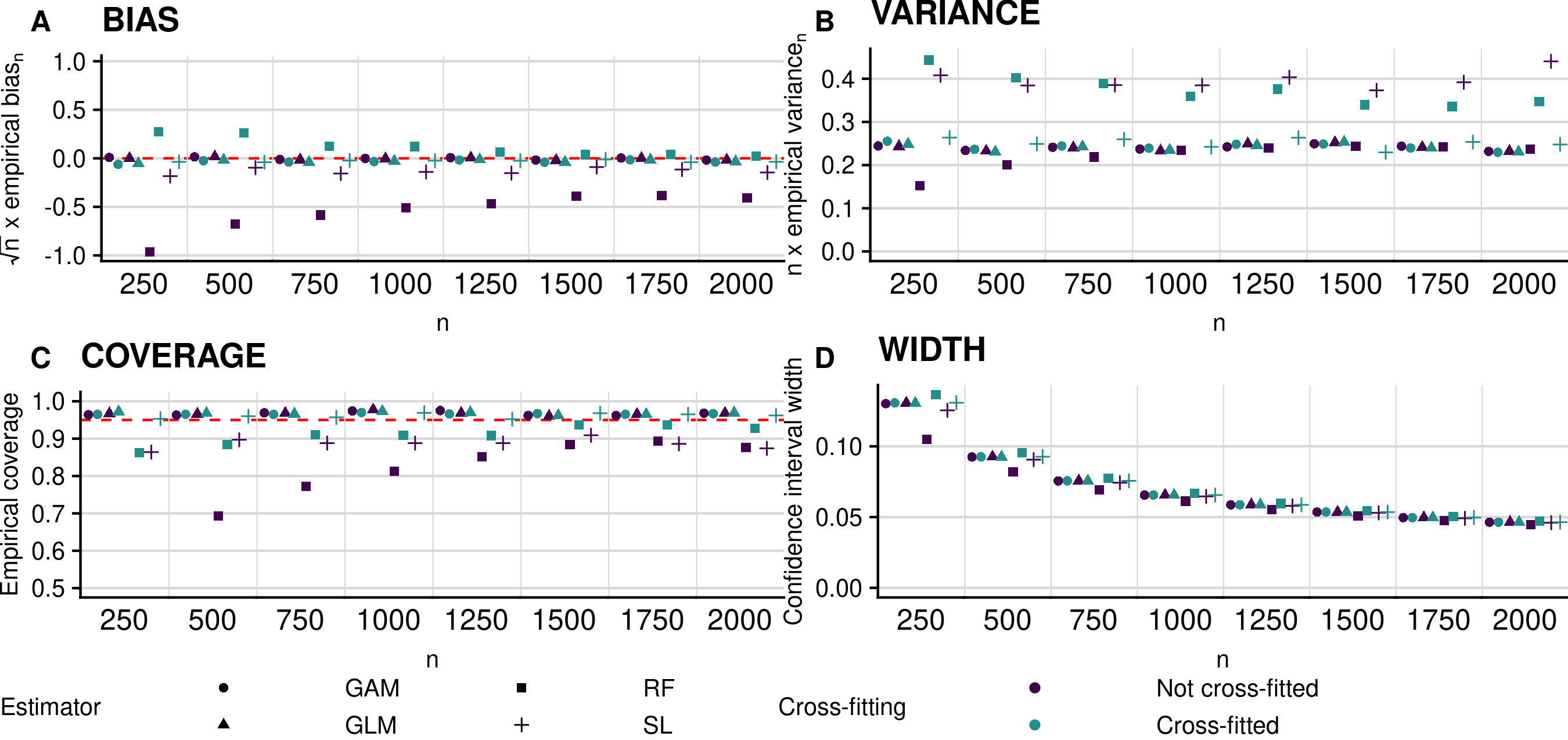}
\caption{Performance of plug-in estimators for estimating (non-zero) importance of $X_2$ in terms of accuracy under Scenario 1 (all features have non-zero importance). Clockwise from top left: empirical bias of the proposed plug-in estimator scaled by $n^{1/2}$; empirical variance scaled by $n$; empirical coverage of nominal 95\% confidence intervals; and average width of these intervals. Circles, triangles, squares and plus symbols denote estimators based on the use of generalized additive models (GAMs), probit regression (GLM), random forests (RF), and the Super Learner (SL), respectively. Blue and green symbols denote non-cross-fitted and cross-fitted estimators, respectively.}
\label{fig:main_alt_performance}
\end{figure}

\begin{figure}
\centering
\includegraphics[width = 1\textwidth]{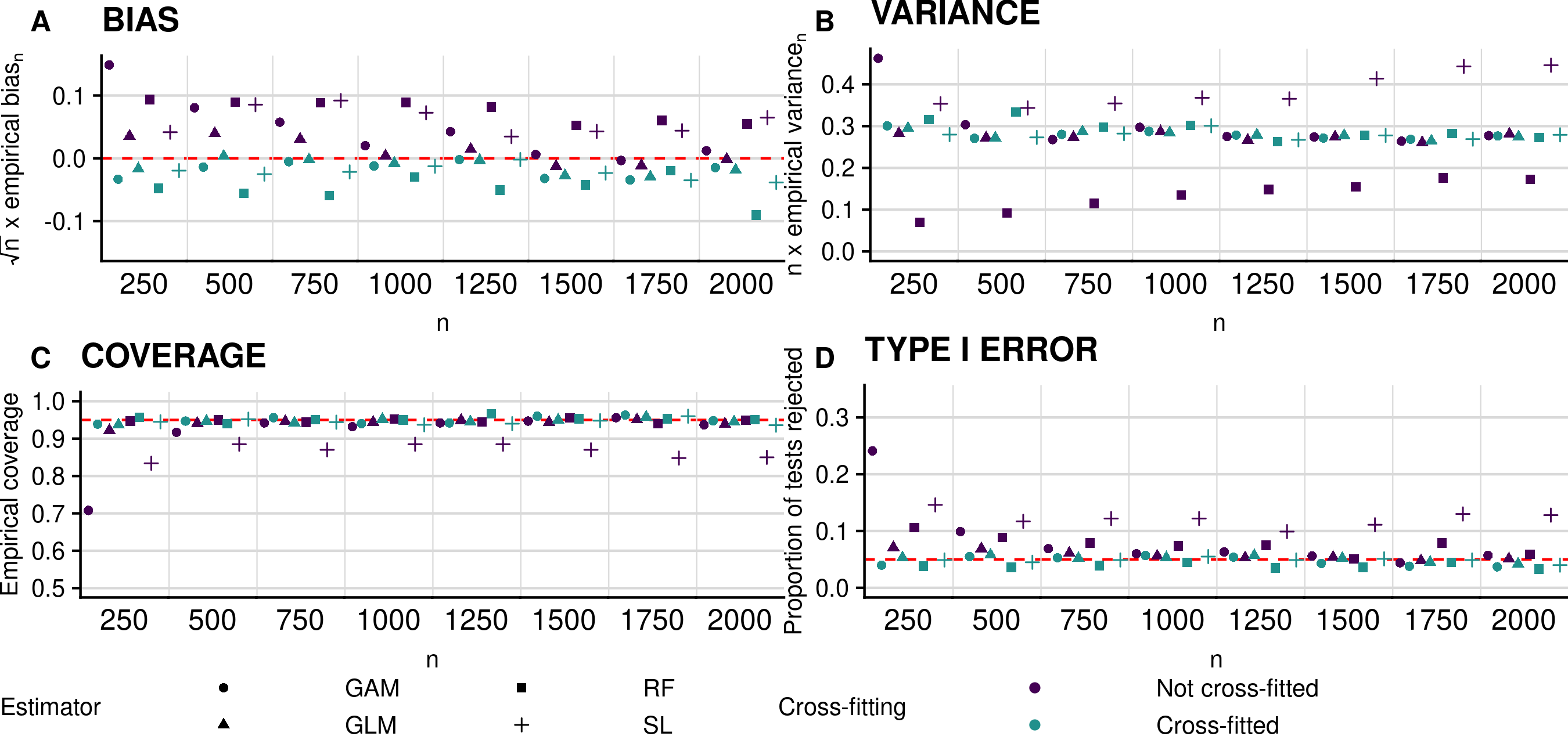}
\caption{Performance of plug-in estimators for estimating (zero) importance of $X_3$ in terms of accuracy under Scenario 2. Clockwise from top left: empirical bias of the proposed plug-in estimator scaled by $n^{1/2}$; empirical variance scaled by $n$; empirical coverage of nominal 95\% confidence intervals; and empirical type I error of the proposed hypothesis test. Circles, triangles, squares and plus symbols denote estimators based on the use of generalized additive models (GAMs), probit regression (GLM), random forests (RF), and the Super Learner (SL), respectively. Blue and green symbols denote non-cross-fitted and cross-fitted estimators, respectively.}
\label{fig:main_null_performance}
\end{figure}

This simulation study suggests that the estimation and inferential procedures proposed, including our null testing approach, have good practical performance and are properly calibrated, as suggested by theory. \revision{Our findings suggest that cross-fitting is critical when flexible algorithms are used, in which case the estimation procedure without cross-fitting performs poorly while its cross-fitted counterpart instead shows good performance. This is the case both for point and interval estimation, as we explicitly show in the Supplementary Material. When correctly-specified parametric regression models are implemented, both procedures (with and without cross-fitting) perform similarly well. This reflects the fact that when parametric estimators are used, condition (B3) is typically satisfied and cross-fitting is then not needed.}

\subsection{Additional empirical results}

\revision{In the Supplementary Material, we present results for additional features under Scenarios 1 and 2, observing similar patterns to those presented in Figures~\ref{fig:main_alt_performance} and \ref{fig:main_null_performance}. We also consider pairing a non-cross-fitted standard error estimator with the cross-fitted estimation procedure, observing reduced coverage compared to the cross-fitted standard error estimator of Algorithm~\ref{alg:cv_est}. Finally, we present results from additional investigations scrutinizing the performance of our proposal in higher dimensions, both with and without correlated features. We found, in small samples, that the presence of many independent null features results in an increased bias in the estimation of the importance of non-null features, with a corresponding decrease in empirical interval coverage. However, this inflated bias and undercoverage dissipate as the sample size increases. A similar pattern was seen in the presence of correlated null features. This suggests that greater dimensionality indeed increases the difficulty of the statistical problem at hand, but that correlation between features does not exarcerbate this challenge beyond rendering more difficult the interpretation of the population VIM values.}

\section{Studying an antibody against HIV-1 infection}\label{sec:data}

Broadly neutralizing antibodies (bnAbs) against HIV-1 neutralize a large fraction of genetic variants of HIV-1. Two harmonized, placebo-controlled randomized trials were conducted to evaluate VRC01, a promising bnAb, for its ability to prevent HIV-1 infection \citep{corey2021}. A secondary objective was to assess how VRC01 prevention efficacy depends on amino acid (AA) sequence features of HIV-1.  Because there are thousands of AA features, the statistical analysis plan for addressing this objective requires first restricting attention to a subset of AA features that putatively affect prevention efficacy. Given the underlying assumption that VRC01 prevents infection via \textit{in vivo} neutralization, a useful approach may be to rank AA features based on their estimated VIM for predicting \textit{in vitro} neutralization --- whether or not an HIV-1 virus is sensitive to neutralization by VRC01 --- and select only the top-ranked features \revision{for further analyses}.

\revision{In an effort to determine these important AA features, \citet{magaret2019} analyzed the HIV-1 envelope (Env) AA sequence features of 611 publicly-available HIV-1 Env pseudoviruses made from blood samples of HIV-1 infected individuals. All analyses accounted for the geographic region of the infected individuals. Among AA sequence features, approximately 800 individual features and 13 groups of features were of interest, e.g., polymorphic AA positions in Env AA that comprise the VRC01 antibody footprint to which VRC01 binds. These groups of features are described more fully in the Methods section of \cite{magaret2019}.}
\revision{There, the authors focused on a definition of variable importance as the difference in nonparametric $R^2$, and used as outcome an indicator of whether or not the 50\% inhibitory concentration, IC$_{50}$ (defined as the concentration of VRC01 necessary to neutralize 50\% of viruses \textit{in vitro}, with large values of the IC$_{50}$ indicating that the virus was resistant to neutralization; \citealp{montefiori2009}), was right-censored. However, the AMP trials have identified the 80\% inhibitory concentration (IC$_{80}$) as a possible biomarker of prevention efficacy, with 75.4\% estimated efficacy against the most sensitive viruses (IC$_{80} < 1$). Since many observations in our dataset are missing IC$_{80}$ values, we use as outcome the binary indicator that $\text{IC}_{50} < 1$. Here, analyzing the same data set, we compare results based on the outcome of \cite{magaret2019} with a variable importance analysis based on classification accuracy and AUC and the AMP-based outcome $\text{IC}_{50} < 1$. We consider a \textit{marginal} VIM value, evaluating the intrinsic importance of each feature group of interest relative to geographic confounding variables -- this can be achieved by considering the full feature vector in \eqref{eq:vimp_functional} to be simply the geographic confounders plus the feature group of interest. We provide a replication of \cite{magaret2019} using a harmonized outcome in the Supplementary Material.}

We used the Super Learner with a large library of candidate learners to estimate the involved regression functions. These learners included the lasso, random forests, and boosted decision trees, each with varying tuning parameters. Details on our library of learners are described in the Supplementary Material. Our resulting estimator is the convex combination of the candidate estimators, where we used five-fold cross-validation to determine the convex combination that minimized the negative log-likelihood risk. Finally, to make inference on the VIM values considered, we used the sample-split cross-fitted method \revision{(Algorithm~\ref{alg:sscf}) studied in the simulations under Scenario 2}.

In Figure~\ref{fig:amp_results}, we display the results of this analysis and the feature groups of interest. The top-ranked feature groups do not differ much between different VIMs but the magnitude of both importance and $p$-values depends greatly on the measure chosen. Both VIMs result suggest that the CD4 binding sites, the VRC01 binding footprint, \revision{sites with sufficient exposed surface area (ESA sites), sites with residues that co-vary with the VRC01 binding footprint (co-varying sites), and sites for indicating N-linked glycosylation (glycosylation sites)} are the \revision{five} most important groups. \revision{The finding that CD4 binding sites are in the most important groups across VIMs matches our expectations from basic science experiments that have identified AA substitutions at CD4 binding sites that altered VRC01 neutralization sensitivity}. This result is in line with \cite{magaret2019}. Based on our proposed hypothesis test, we computed $p$-values for a test of the strict null hypothesis (that is, $\beta = 0$) for each group. We found that AA features in the CD4 binding sites (group 2), VRC01 binding footprint (group 1), ESA sites (group 3), co-varying sites (group 5), and glycosylation sites (group 8) had $p$-values of $6.98\times 10^{-9}$, $8.14\times 10^{-9}$, $1.69\times 10^{-7}$, $4.66\times 10^{-6}$, and $8.07\times 10^{-6}$, respectively, based on AUC (denoted by stars in Figure~\ref{fig:amp_results}). Based on these analyses, AA features in these groups may be prioritized for the forthcoming trial data analyses. Additionally, taking the set of top-ranked features above a minimum threshold may help to narrow the set of gp160 AA sequence features to pre-specify for the analysis of the AMP trial data sets. \revision{Our recommendation, nonetheless, is to analyze all feature sets in secondary or supporting analysis of the AMP trial data sets to ensure that the results generated are comprehensive.}

\begin{figure}
\centering
\includegraphics[width=1\textwidth]{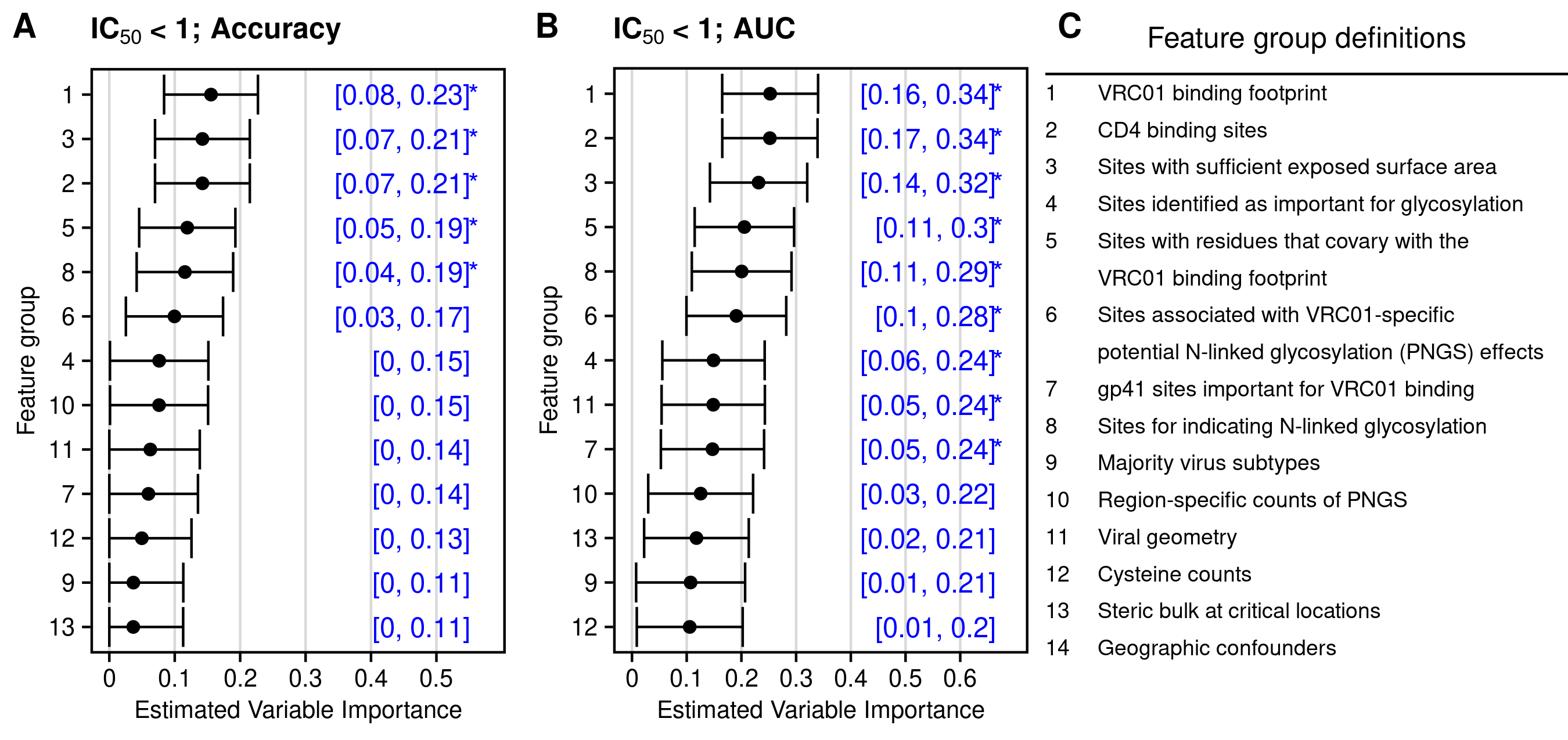}
\caption{Variable importance measured by accuracy (panel A) and AUC (panel B) for the groups defined in panel C. Stars denote importance deemed statistically significantly different from zero at the 0.0038 (0.05 / 13) level.}
\label{fig:amp_results}
\end{figure}

\section{Discussion}\label{sec:conclusions}

We have proposed a \revision{general} model-agnostic framework for statistical inference on population-level VIMs. These measures are summaries of the true data-generating mechanism, defined as a contrast between the predictiveness of the best possible prediction function based on all available features versus all features but those under consideration. We found that plug-in estimators of these VIMs are asymptotically linear and nonparametric efficient under regularity conditions. Through examples, we showed that many simple and commonly used VIMs fall within this framework. We found in numerical experiments that our proposed cross-fitted VIM estimator enjoys good operating characteristics, and that these characteristics match our theoretical expectations. More complex predictiveness measures and sampling scenarios, including missing data, may also be analyzed within our proposed framework, though these cases typically require more effort, including the computation of an influence function. Interpretation of the estimated VIMs depends on the application, and may include considering the ranked VIM values, or considering features with VIM values above some scientifically meaningful threshold.

\revision{Defining the importance of individual features in cases with large amounts of correlation is challenging. In practice, we recommend making use of any available background scientific knowledge either to group variables that are expected to be highly correlated or to develop an appropriate causal model. In settings where this knowledge is lacking, it may be useful to consider, for example, unsupervised methods to cluster variables before assessing variable importance; however, further work is needed to determine how to preserve inferential validity with any such procedure. One alternative approach to handling correlated features is to consider \emph{marginal} importance, wherein each feature in turn could be considered as the `full set of covariates' and its importance could be assessed relative to the null feature vector; if there are concerns about confounding factors, these can constitute the `null feature vector' and each feature could be added to the potential confounders. A second alternative approach is to use measures like the Shapley Population VIM \citep[SPVIM; ][]{williamson2020c}. Since SPVIM is defined as the average increase in predictive power from including a particular feature in \emph{all possible subsets} of the remaining features, use of this approach comes at the cost of significantly increased complexity.}

The inferential procedures following Theorems 1 and 2 can be used whenever it is known a priori that the features of interest have non-zero importance. \revision{We note that, as an alternative, a nonparametric bootstrap scheme could be used in which $f_0$ and $f_{0,s}$ are not re-estimated over bootstrap samples but rather fixed at their original estimates. The use of this bootstrap is illustrated in the Supplementary Material, where it is shown to yield similar results as the inferential procedures described in this paper. If the features of interest may have zero importance, inference should generally be conducted using sample-splitting, as described in Section~\ref{sec:test}. There, we propose confidence intervals valid even when a feature of interest has zero importance and a test of the zero-importance hypothesis.} Our numerical results suggest that the resulting test controls type I error rate at the desired level. However, since our procedure involves sample-splitting without data reuse, it does not fully exploit the information available in the data\revision{, and may possibly be improved upon.} \revision{Use of the bootstrap in this context is complicated by the need to re-estimate $f_0$ and $f_{0,s}$.} Developing a more powerful test of the null importance hypothesis is an important unresolved need. This objective could be achieved, on one hand, by considering modifications of our current approach, including averaging results over multiple splits of the dataset or choosing split sizes more judiciously, or on the other hand, by utilizing more complex analytical tools, including approximate higher-order influence functions. These ideas are being pursued in ongoing research.

\section*{Software and supplementary material}\label{sec:software}
We implement the methods discussed above in R and Python packages. Additional technical details are available in the Supplementary Material. All results may be reproduced using code available online.
The data from Section~\ref{sec:data} are available at \href{https://github.com/benkeser/vrc01/tree/1.0}{https://github.com/benkeser/vrc01/tree/1.0}.


\begin{appendices}

\section{Special case: standardized V-measures}\label{sec:app-a}

\revision{Beyond smoothness requirements, the results  presented in Section 3 do not impose much structure on the predictiveness measure.} However, it is often the case that the predictiveness measure has the form $V(f,P)=a+V_1(f,P)/V_2(P)$ with
\begin{align*}
    V_1(f, P) := E_P\left\{G\left((Y_1,f(X_1)), \dots, (Y_m,  f(X_m))\right)\right\}
\end{align*}
for some symmetric function $G: (\mathcal{Y}\times \mathcal{Y})^m \to \mathbb{R}$, where $a\in\mathbb{R}$ is a fixed constant, $V_2:\mathcal{M}\rightarrow\mathbb{R}$ is  Hadamard differentiable, and the expectation defining $V_1$ is over the distribution of independent draws $(X_1,Y_1),\ldots,(X_m,Y_m)$ from $P$. In this case, the plug-in estimator $V_1(f_n,P_n)$ of $V_1(f_0,P_0)$ is a $V$-statistic of degree $m$ \citep{hoeffding1948}, whereas the denominator $V_2(P_0)$ does not depend on $f_0$ and typically serves as a normalization constant. As such, we refer to any predictiveness measure of this form as a \emph{standardized V-measure}. We note that each example \revision{presented in Section~\ref{subsec:examples}} is a standardized V-measure, defined respectively by: \begin{flalign*}\mbox{[1]}\ &\ a=1,\ G((u,v))=-(u-v)^2,\ V_2(P)=var_P(Y),\ m=1;&&\\
\mbox{[2]}\ &\ a=1,\ G((u,v))=-\{u\log v+(1-u)\log(1-v)\},&&\\
 &\ V_2(P)=P\left(Y=1\right)\log P\left(Y=1\right)+P\left(Y=0\right)\log P\left(Y=0\right),\ m=1;&&\\
\mbox{[3]}\ &\ a=0,\ G((u,v))=I(u=v),\ V_2(P)=1,\ m=1;&&\\
\mbox{[4]}\ &\ a=0,\ G((u_1,v_1),(u_2,v_2))=\{I(u_1=0,u_2=1,v_1<v_2)+I(u_2=0,u_1=1,v_2< v_1)\}/2,&&\\
&\ V_2(P)=P\left(Y=1\right)P\left(Y=0\right),\ m=2.&&
\end{flalign*} This is useful to note because whenever $V$ is a standardized $V$-measure, the influence function $\phi_0$ of $V(f_n,P_n)$ can be described more explicitly. Specifically, its pointwise evaluation $\phi_0(z)$ at a given observation value $z=(x,y)$ is given by \[m\left[\frac{E_0\left\{G\left((y,f_0(x)),(Y_2,f_0(X_2)),\ldots,(Y_m,f_0(X_m))\right)\right\}}{V_2(P_0)}-V(f_0,P_0)\right]-\frac{\dot{V}_2(P_0;\delta_z-P_0)}{V_2(P_0)}V(f_0,P_0)\] with $\dot{V}_2(P_0;\delta_z-P_0)$ denoting the G\^{a}teaux derivative of $V_2$ at $P_0$ in the direction $h=\delta_z-P_0$.  Except for the influence function of the normalization estimator $V_2(P_n)$, which is typically straightforward to compute, this is an explicit form. In Examples 1--4, the influence function of $V(f_n,P_n)$ can thus be derived respectively as: \begin{flalign*}
\mbox{[1]}\ &\ \phi_0(z)=&&\hspace{-.1in}-\left\{y-\mu_0(x)\right\}^2/\sigma^2_0+v_0\left\{2-(y-\mu_0)^2/\sigma_0^2\right\};&&\\
\mbox{[2]}\ &\ \phi_0(z)=&&\hspace{-.1in}-2\left[y\log\mu_0(x) + (1-y)\log\left\{1 - \mu_0(x)\right\}\right]/\overline{\pi}_0+
v_0\left[2\log\left\{\pi_0/(1-\pi_0)\right\}(y - \pi_0)/\overline{\pi}_0 - 1\right];&&\\
\mbox{[3]}\ &\ \phi_0(z)=&&\hspace{-.1in}\ y I\left\{\mu_0(x)>0.5\right\}+(1-y)I\left\{\mu_0(x)\leq 0.5\right\}-v_0;&&\\
\mbox{[4]}\ &\  \phi_0(z)=&&\hspace{-.1in}\ (1-y)P_0\left\{\mu_0(X)>\mu_0(x)\mid Y=1\right\}/(1-\pi_0)+yP_0\left\{\mu_0(x)>\mu_0(X)\mid Y=0\right\}/\pi_0&&\\
& && -v_0\left[2+(1-2\pi_0)(y-\pi_0)/\{\pi_0(1-\pi_0)\}\right]\, ,&&
\end{flalign*}
where here we have used the shorthand notation $\mu_0(x):=E_0\left(Y\mid X=x\right)$, $\mu_0:=E_0\left(Y\right)$, $\sigma^2_0:=var_0\left(Y\right)$, $\pi_0:=P_0\left(Y=1\right)$, and $\overline{\pi}_0:=\pi_0\log\pi_0 + (1 - \pi_0)\log(1 - \pi_0)$. Furthermore, for standardized $V$-measures, condition (A2) is often easier to verify. For example, if $m=1$, then it holds trivially since $V_1(f,P)$, the only component of $V(f,P)$ involving $f$,  is linear in $P$.

\end{appendices}

\vspace{0.1in}

{\small
\bibliographystyle{chicago}
\bibliography{brian-papers}
}

\newpage

\section*{SUPPLEMENTARY MATERIAL}

\section{Proof of theorems}\label{sec:proofs}

\subsection{Proof of Theorem \ref{thm:general_vim}}\label{sec:pf_general_vim}

Writing $r_n:=\{V(f_n,P_n)-V(f_n,P_0)\}-\{V(f_0,P_n)-V(f_0,P_0)\}$, we first decompose
\begin{align*}
    v_n - v_0 =& \ \{V(f_0, P_n) - V(f_0, P_0)\} + \{V(f_n, P_0) - V(f_0, P_0)\}+r_n\ .
\end{align*} In view of condition (A2), the functional delta method is applicable and yields that
\begin{align*}
    V(f_0,P_n) - V(f_0,P_0)\ &=\ \dot{V}(f_0, P_0; P_n - P_0) + o_P(n^{-1/2}) \\
    &=\ \frac{1}{n}\sum_{i=1}^n \dot{V}(f_0, P_0; \delta_{Z_i} - P_0) + o_P(n^{-1/2})\ ,
\end{align*}
where $\dot{V}(f_0, P_0; h)$ is the G\^ateaux derivative of the mapping $P \mapsto V(f_0, P)$ at $P_0$ in the direction $h$ and $\delta_z$ is the degenerate distribution on ${z}$. Under condition (A1), we have that $\lvert V(f_n,P_0) - V(f_0,P_0) \rvert \leq C\lVert f_n - f_0 \rVert^2_\mathcal{F} = o_P(n^{-1/2})$ under condition (B1). It remains to show that $r_n=o_P(n^{-1/2})$ as well. For any given $\epsilon>0$, $h\in\mathcal{Q}$ and $f\in\mathcal{F}$, we define
    \[R_0(f,\epsilon,h):=\frac{V(f,P_0+\epsilon h)-V(f,P_0)}{\epsilon}-\dot{V}(f,P_0;h)\ .\]  Setting $\epsilon_n := n^{-1/2}$ and $h_n := n^{1/2}(P_n - P_0)$, we have that
\begin{align*}
    n^{1/2}r_n\ &=\ \frac{[\{V(f_n, P_n) - V(f_n, P_0)\} - \{V(f_0, P_n) - V(f_0, P_0)\}]}{\epsilon_n} \\
    &=\ \{\dot{V}(f_n,P_0;h_n)+R_0(f_n,\epsilon_n,h_n)\}-\{\dot{V}(f_0,P_0;h_n)+R_0(f_0,\epsilon_n;h_n))\}\ =A_n+B_n\ ,
\end{align*}where $A_n:=\dot{V}(f_n,P_0;h_n)-\dot{V}(f_0,P_0;h_n)$ and $B_n:=R_0(f_n,\epsilon_n,h_n)-R_0(f_0,\epsilon_n,h_n)$, and so, we can write that $
P_0\left(n^{1/2}|r_n|>\epsilon\right)\leq P_0\left(|A_n|>\epsilon/2\right)+P_0\left(|B_n|>\epsilon/2\right)$. On one hand, since we can rewrite $A_n=\dot{V}(f_n,P_0;h_n)-\dot{V}(f_0,P_0;h_n)=n^{1/2}\int g_n(z)d(P_n-P_0)(z)$, under conditions (B2) and (B3), an application of Lemma 19.24 of van der Vaart (2000) yields that $A_n=o_P(1)$ under $P_0$, and so, $P_0(|A_n|>\epsilon/2)\longrightarrow 0$. On the other hand, we can write \begin{align*}
P_0\left(|B_n|>\epsilon/2\right)\ &=\ P_0\left(|B_n|>\epsilon/2,\|f_n-f_0\|_\mathcal{F}<\delta\right)+P_0\left(|B_n|>\epsilon/2,\|f_n-f_0\|_\mathcal{F}\geq \delta\right)\\
&\hspace{-0.7in}\leq\ P_0\left(\sup\textstyle{_{f\in\mathcal{F}:\|f-f_0\|_\mathcal{F}<\delta}}R_0(f,\epsilon_n,h_n)>\epsilon/4,\|f_n-f_0\|_\mathcal{F}<\delta\right)+P_0\left(\|f_n-f_0\|_\mathcal{F}\geq \delta\right)\\
&\hspace{-0.7in}\leq\ P_0\left(\sup\textstyle{_{f\in\mathcal{F}:\|f-f_0\|_\mathcal{F}<\delta}}R_0(f,\epsilon_n,h_n)>\epsilon/4\right)+P_0\left(\|f_n-f_0\|_\mathcal{F}\geq \delta\right).
\end{align*} Since the first and second summands tend to zero by conditions (A2) and (B1), respectively, it follows that $P_0\left(|B_n|>\epsilon/2\right)\longrightarrow 0$. In summary, under conditions (A1)--(A2) and (B1)--(B3), we find that \[v_n-v_0\ =\ \frac{1}{n}\sum_{i=1}^{n}\dot{V}(f_0,P_0;\delta_{Z_i}-P_0)+o_P(n^{-1/2})\] under sampling from $P_0$, as claimed.

Now, we verify the claim of asymptotic efficiency. Let $s$ be any bounded element of $L_2^0(P_0)$. We construct the parametric submodel $\{P_{0,\epsilon}\}$ with univariate index $\epsilon$ defined in a neighborhood of zero and with corresponding distribution function defined pointwise as $F_{0,\epsilon}(z):= F_0(z)+\epsilon \int_{(-\infty,z]}s(u)F_0(du)$, where $F_0$ denotes the distribution function of $P_0$ and $(-\infty,z]$ is interpreted as an orthant in the dimension of $\mathcal{Z}$. We note that $z\mapsto \int_{(-\infty,z]}s(u)F_0(du)$ induces a finite signed measure $h(s)\in\mathcal{R}$ since it is cadlag and has finite total variation norm. We then write that \begin{align*}
&\left|\frac{V(f_{0,\epsilon},P_{0,\epsilon})-V(f_0,P_0)}{\epsilon}-\dot{V}(f_0,P_0;h(s))\right|\\
&\leq\ \left|\frac{V(f_{0,\epsilon},P_{0,\epsilon})-V(f_{0,\epsilon},P_0)+V(f_{0,\epsilon},P_0)-V(f_0,P_0)}{\epsilon}-\dot{V}(f_{0,\epsilon},P_0;h(s))+\dot{V}(f_{0,\epsilon},P_0;h(s))-\dot{V}(f_0,P_0;h(s))\right|\\
&\leq\ U_1(\epsilon)+U_2(\epsilon)+U_3(\epsilon)\ ,
\end{align*} where we have defined the summands \[U_1(\epsilon):=\left|\frac{V(f_{0,\epsilon},P_{0,\epsilon})-V(f_{0,\epsilon},P_0)}{\epsilon}-\dot{V}(f_{0,\epsilon},P_0;h(s))\right|\mbox{,\ \ }U_2(\epsilon):=\left|\frac{V(f_{0,\epsilon},P_0)-V(f_0,P_0)}{\epsilon}\right|\] and $U_3(\epsilon):=|\dot{V}(f_{0,\epsilon},P_0;h(s))-\dot{V}(f_{0},P_0;h(s))|$. By conditions (A1) and (A3), we can bound $U_2(\epsilon)$ above by $C\|f_{0,\epsilon}-f_0\|^2_\mathcal{F}/\epsilon=O(\epsilon)$. By condition (A4), we have that $U_3(\epsilon)=O(\epsilon)$. Since $\|f_{0,\epsilon}-f_0\|_{\mathcal{F}}=O(\epsilon)$ by condition (A3), then for small enough $\epsilon$ we have that \[U_1(\epsilon)\ \leq\ \sup_{f\in\mathcal{F}:\|f-f_0\|_\mathcal{F}\leq \delta}\left|\frac{V(f,P_{0,\epsilon})-V(f,P_0)}{\epsilon}-\dot{V}(f,P_0;h(s))\right|,\] where the right-hand side of the inequality itself tends to zero as $\epsilon\rightarrow 0$ in view of condition (A2). In other words, we find that $U_1(\epsilon)=O(\epsilon)$. Thus, we find that \[\left|\frac{V(f_{0,\epsilon},P_{0,\epsilon})-V(f_0,P_0)}{\epsilon}-\dot{V}(f_0,P_0;h(s))\right|=O(\epsilon)\ ,\]which implies that the derivative $\epsilon\mapsto V(f_{0,\epsilon},P_{0,\epsilon})$ at $\epsilon=0$ equals $\dot{V}(f_0,P_0;h(s))$.  In view of \cite{frangakis2015biometrics} and \cite{luedtke2015biometrics}, the evaluation of the nonparametric efficient influence function at observation value $z$ is obtained by choosing $s$ so that $h(s)=\delta_z-P_0$, establishing that $v_n$ is indeed asymptotically efficient relative to a nonparametric model.

\subsection{Proof of Theorem \ref{thm:cv}}\label{sec:pf_cv}

As before, we denote by $B_n \in \{1, \ldots, K\}^n$ a random vector generated by sampling uniformly from $\{1, \ldots, K\}$ with replacement, and by $D_k$ the subset of observations with index in $\{i:B_{n,i}=k\}$ for $k=1,\ldots,K$. Additionally, we denote by ${f}_{k,n}$ an estimator of $f_0$ constructed using the data in $\cup_{j\neq k}D_j$, and we write $P_{k,n}$ for the empirical distribution estimator of $P_0$ based on the data in $D_k$. Recalling that $v_n^* = \tfrac{1}{K}\sum_{k=1}^K V(f_{k,n}, P_{k,n})$,  we note that $v_n^* - v_0=A_{1,K,n}+A_{2,K,n}+A_{3,K,n}$, where
$A_{1,K,n}:=\frac{1}{K}\sum_{k=1}^{K}\left\{V(f_0,P_{k,n}) - V(f_0,P_0)\right\}$, $A_{2,K,n}:=\frac{1}{K}\sum_{k=1}^{K}\left\{V(f_{k,n},P_0) - V(f_0,P_0)\right\}$ and $A_{3,K,n}:=\frac{1}{K}\sum_{k=1}^{K}r_{k,n}$ with $r_{k,n}:=\{V(f_{k,n},P_{k,n})-V(f_{k,n},P_0)\}-\{V(f_0,P_{k,n})-V(f_0,P_0)\}$. We will study separately each of these three summands.

Under condition (A2), the functional delta method can be used to establish the representation $V(f_0,P_{k,n})-V(f_0,P_0)=\dot{V}(f_0,P_0;P_{k,n}-P_0)+o_P(n^{-1/2}_k)=\frac{1}{n_k}\sum_{i\in D_k}\dot{V}(f_0,P_0;\delta_{Z_i}-P_0)+o_P(n^{-1/2}_k)$
 for each $k\in\{1,\ldots,K\}$, from which it follows that \begin{align*}
\left|A_{1,K,n}-\frac{1}{n}\sum_{i=1}^{n}\dot{V}(f_0,P_0;\delta_{Z_i}-P_0)\right|\ &\leq\ \max_{k}\left|\frac{n}{Kn_k}-1\right|\cdot\frac{1}{n}\sum_{i=1}^{n}\dot{V}(f_0,P_0;\delta_{Z_i}-P_0)+\frac{1}{K}\sum_{k=1}^{K}o_P(n_k^{-1/2})\\
&=\ O_P(n^{-1})+o_P(n^{-1/2})\ =\ o_P(n^{-1/2})\ .
\end{align*} Under conditions (A1) and (B1'), we have that \begin{align*}
\left|A_{2,K,n}\right|\ &\leq\ \max_k\left|V(f_{k,n},P_0)-V(f_0,P_0)\right|\ \leq\ C\max_k \|f_{k,n}-f_0\|^2 _\mathcal{F}\ =\ o_P(n^{-1/2})\ .
\end{align*} Finally, we show that $|A_{3,K,n}|=o_P(n^{-1/2})$ by showing that $|r_{k,n}|=o_P(n^{-1/2})$ for each $k$. Similarly as in the proof of Theorem \ref{thm:general_vim}, setting $\epsilon_{k,n}:=n_k^{-1/2}$ and $h_{k,n}:=n^{1/2}_{k}(P_{k,n}-P_0)$, we can write that $
n_k^{1/2}r_{k,n}=A_{k,n}+B_{k,n}$, where  we have defined the terms $A_{k,n}:=\dot{V}(f_{k,n},P_0;h_{k,n})-\dot{V}(f_{0},P_0;h_{k,n})$ and $B_{k,n}:=R_0(f_{k,n},\epsilon_{k,n},h_{k,n})-R_0(f_0,\epsilon_{k,n},h_{k,n})$. Following the same argument made for $B_n$ in the proof of Theorem \ref{thm:general_vim}, we can show that $B_{k,n}=o_P(1)$.  \revision{We then note that $A_{k,n}=n_k^{1/2}\int g_{k,n}(z)d(P_{k,n}-P_0)(z)$. For any $\varepsilon>0$, by Chebyshev's inequality, we have that \[0\ \leq\ P_0\left(|A_{k,n}|>\varepsilon\,\middle|\, \cup_{j\neq k}D_j\right)\ \leq\  \frac{var_0\left[\,g_{k,n}(Z)\,|\,\cup_{j\neq k}D_j\,\right]}{\varepsilon^2}\ \leq\ \frac{P_0g^2_{k,n}}{\varepsilon^2}\ .\]Thus, by condition (B2'), we have that $P_0\left(|A_{k,n}|>\varepsilon\,\middle|\, \cup_{j\neq k}D_j\right)=o_P(1)$. Since  $P_0\left(|A_{k,n}|>\varepsilon\,\middle|\, \cup_{j\neq k}D_j\right)$ is uniformly bounded by virtue of being a probability, this implies that  $E_0[P_0\left(|A_{k,n}|>\varepsilon\,\middle|\, \cup_{j\neq k}D_j\right)]=o(1)$, and so, $P_0\left(|A_{k,n}|>\varepsilon\right)=o(1)$. Thus, we find that $A_{k,n}=o_P(1)$.} As such, we have found that $|r_{k,n}|=o_P(n_k^{-1/2})$, and since $n/n_k\stackrel{P}{\longrightarrow}K$, this implies that $|r_{k,n}|=o_P(n^{-1/2})$.

The proof of nonparametric asymptotic efficiency is identical to that provided for Theorem 1.

\subsection{Proof of Theorem \ref{thm:same_eif}}

Fix an arbitrary $h\in \mathcal{H}$, and let $\{P_{0,\epsilon}\}\subset\mathcal{M}$ be an arbitrary regular univariate parametric submodel through $P_0$ at $\epsilon=0$ and with score $h$ for $\epsilon$ at $\epsilon=0$. Write $f_{0,\epsilon}:=f_{P_{0,\epsilon}}$ for brevity. We note that \begin{align}
V(f_{0,\epsilon},P_{0,\epsilon})-V(f_0,P_0)\ &=\ V(f_{0,\epsilon},P_{0,\epsilon})-V(f_0,P_{0,\epsilon})+V(f_0,P_{0,\epsilon})-V(f_0,P_0)\notag\\
&=\ V(f_{0,\epsilon},P_0)-V(f_0,P_0)+V(f_0,P_{0,\epsilon})-V(f_0,P_0)+o(\epsilon)\ ,\label{sameeif}
\end{align} where the second line follows from the first in view of condition (A5a). By the nonparametric pathwise differentiability of $P\mapsto V(f_0,P)$ at $P_0$, we have that $V(f_0,P_{0,\epsilon})-V(f_0,P_0)=\epsilon\int d_0(z)h(z)dP_0(z)+O(\epsilon^2)$, where $d_0$ is the nonparametric EIF of $P\mapsto V(f_0,P)$ at $P_0$. Condition (A5b) and (A5c) together indicate that \[\left.\frac{d}{d\epsilon}V(f_{0,\epsilon},P_0)\right|_{\epsilon=0}=0\ ,\] and furthermore, that $V(f_{0,\epsilon},P_0)-V(f_0,P_0)=o(\epsilon)$. So, in view of equation \ref{sameeif}, we obtain the representation $V(f_{0,\epsilon},P_{0,\epsilon})-V(f_0,P_0)=\epsilon\int d_0(z)h(z)dP_0(z)+o(\epsilon)$, which implies that $P\mapsto V(f_P,P)$ is pathwise differentiable at $P_0$ relative to the nonparametric model $\mathcal{M}$ and has nonparametric EIF $d_0$.

\section{Explicit description of estimation procedure for Examples 1--4}

In this section, we provide the explicit form of our proposed estimator for Examples 1--4. For each example, we describe both the simple plug-in estimator and the cross-fitted estimator. When we discuss cross-fitting, recall that we generate a random partition assignment vector $B_n \in \{1, \ldots, K\}^n$ by sampling uniformly from $\{1, \ldots, K\}$ with replacement, and denote by $D_k$ the subset of observations with index in $\{i:B_{n,i}=k\}$ for $k=1,\ldots,K$. For each $k = 1,\ldots,K$, we denote by $f_{k,n}$ and $f_{k,n,s}$ estimators of $f_0$ and $f_{0,s}$, respectively, constructed on the data in $\bigcup_{j \neq k}D_j$, and we denote by $P_{k,n}$ the empirical distribution estimator of $P_0$ based on the data in $D_k$.

\noindent\emph{Example 1: $R^2$}\\
The difference in $R^2$ VIM estimator is
\begin{align*}
    \psi_{n,s} =& \ \left[1 - \frac{\sum_{i=1}^n \{Y_i - f_n(X_i)\}^2}{\sum_{i=1}^n (Y_i - \overline{Y}_n)^2}\right] - \left[1 - \frac{\sum_{i=1}^n \{Y_i - f_{n,s}(X_i)\}^2}{\sum_{i=1}^n (Y_i - \overline{Y}_n)^2}\right],
\end{align*}
where $\overline{Y}_n := \tfrac{1}{n}\sum_{i=1}^n Y_i$ is the marginal empirical mean of $Y$. In this example, $f_n = \mu_n$ and $f_{n,s} = \mu_{n,s}$, where $\mu_n$ and $\mu_{n,s}$ are estimators of $\mu_0$ and $\mu_{0,s}$, respectively.
For each $k = 1,\ldots,K$, the fold-specific difference in $R^2$ VIM estimator is
\begin{align*}
    \psi_{k,n,s} =& \ \left[1 - \frac{\tfrac{1}{n_k}\sum_{i \in D_k} \{Y_i - f_{k,n}(X_i)\}^2}{\tfrac{1}{n_k}\sum_{i \in D_k} (Y_i - \overline{Y}_{k,n})^2}\right] - \left[1 - \frac{\tfrac{1}{n_k}\sum_{i \in D_k} \{Y_i - f_{k,n,s}(X_i)\}^2}{\tfrac{1}{n_k}\sum_{i \in D_k} (Y_i - \overline{Y}_{k,n})^2}\right],
\end{align*}
where $n_k := \sum_{i=1}^nI(i \in D_k)$ is the number of observations in fold $k$, and $\overline{Y}_{k,n} := \tfrac{1}{n_k}\sum_{i\in D_k}^n Y_i$ is the marginal empirical mean of $Y$ in fold $k$. The cross-fitted estimator is  then $\psi_{n,s}^* = \tfrac{1}{K}\sum_{k=1}^K \psi_{k,n,s}$.
\vspace{.15in}

\noindent\emph{Example 2: deviance}\\
The difference in deviance VIM estimator is
\begin{align*}
    \psi_{n,s} =& \ \left[1 - \frac{\tfrac{1}{n}\sum_{i=1}^n \{Y_i\log f_n(X_i) + (1 - Y_i)\log (1 - f_n(X_i))\}}{\pi_n \log(\pi_n) + (1 - \pi_n)\log(1 - \pi_n)}\right] \\
    &\ -\left[1 - \frac{\tfrac{1}{n}\sum_{i=1}^n \{Y_i\log f_{n,s}(X_i) + (1 - Y_i)\log (1 - f_{n,s}(X_i))\}}{\pi_n \log(\pi_n) + (1 - \pi_n)\log(1 - \pi_n)}\right],
\end{align*}
where $\pi_n := \tfrac{1}{n}\sum_{i=1}^n Y_i$ is the empirical estimator of the marginal probability $P_0\left(Y=1\right)$. Again, in this example, $f_n = \mu_n$ and $f_{n,s} = \mu_{n,s}$.
For each $k = 1,\ldots,K$, the fold-specific difference in deviance VIM estimator is
\begin{align*}
    \psi_{k,n,s} =& \ \left[1 - \frac{\tfrac{1}{n_k}\sum_{i\in D_k} \{Y_i\log f_{k,n}(X_i) + (1 - Y_i)\log (1 - f_{k,n}(X_i))\}}{\pi_{k,n} \log(\pi_{k,n}) + (1 - \pi_{k,n})\log(1 - \pi_{k,n})}\right] \\
    &\ - \left[1 - \frac{\tfrac{1}{n_k}\sum_{i\in D_k} \{Y_i\log f_{k,n,s}(X_i) + (1 - Y_i)\log (1 - f_{k,n,s}(X_i))\}}{\pi_{k,n} \log(\pi_{k,n}) + (1 - \pi_{k,n})\log(1 - \pi_{k,n})}\right],
\end{align*}
where $\pi_{k,n} := \tfrac{1}{n_k}\sum_{i\in D_k} Y_i$ is the marginal estimator of $P_0\left(Y=1\right)$ in fold $k$. The cross-fitted estimator is then $\psi_{n,s}^* = \tfrac{1}{K}\sum_{k=1}^K \psi_{k,n,s}$.
\vspace{.15in}

\noindent\emph{Example 3: classification accuracy}\\
The difference in classification accuracy VIM estimator is $\psi_{n,s} = \tfrac{1}{n}\sum_{i=1}^nI\{Y_i = f_n(X_i)\} - \tfrac{1}{n}\sum_{i=1}^nI\{Y_i = f_{n,s}(X_i)\}$. Sensible estimators of $f_0$ and $f_{0,s}$ are given by \[f_n:x\mapsto I\left\{\mu_n(x)>0.5\right\}\mbox{\ \ and\ \ }f_{n,s}:x\mapsto I\left\{\mu_{n,s}(x)>0.5\right\}.\] The fold-specific difference in classification accuracy VIM estimator is
\begin{align*}
     \psi_{k,n,s} = \frac{1}{n_k}\sum_{i\in D_k}I\{Y_i = f_{k,n}(X_i)\} - \frac{1}{n_k}\sum_{i\in D_k}I\{Y_i = f_{k,n,s}(X_i)\}\ .
\end{align*}
The cross-fitted estimator is then $\psi_{n,s}^* = \tfrac{1}{K}\sum_{k=1}^K \psi_{k,n,s}$.
\vspace{.15in}

\noindent\emph{Example 4: area under the ROC curve}\\
The difference in AUC VIM estimator is
\begin{align*}
    \psi_{n,s} =& \ \frac{1}{n_0n_1}\sum_{i=1}^n \sum_{j=1}^n I\{f_n(X_i) < f_n(X_j)\}(1-Y_i)Y_j - \frac{1}{n_0n_1}\sum_{i=1}^n \sum_{j=1}^n I\{f_{n,s}(X_i) < f_{n,s}(X_j)\}(1-Y_i)Y_j\ ,
\end{align*}
where $n_1 := \sum_{i=1}^n Y_i$ is the number of observations with corresponding $Y=1$ and $n_0:=n-n_1$. As above, in this example, we can take $f_n = \mu_n$ and $f_{n,s} = \mu_{n,s}$.
The fold-specific difference in AUC VIM estimator is
\begin{align*}
    \psi_{k,n,s} =& \ \frac{1}{n_{k,0}n_{k,1}}\sum_{i\in D_k} \sum_{j \in D_k} I\{f_{k,n}(X_i) < f_{k,n}(X_j)\}(1-Y_i)Y_j \\
    &\ - \frac{1}{n_{k,0}n_{k,1}}\sum_{i\in D_k} \sum_{j\in D_k} I\{f_{k,n,s}(X_i) < f_{k,n,s}(X_j)\}(1-Y_i)Y_j,
\end{align*}
where $n_{k,1} := \sum_{i\in D_k} I(Y_i = 1)$ is the number of observations with corresponding $Y = 1$ in fold $k$ and $n_{k,0}:=n_k-n_{k,1}$. The cross-fitted estimator is then $\psi_{n,s}^* = \tfrac{1}{K}\sum_{k=1}^K \psi_{k,n,s}$.
\vspace{.15in}

\section{Additional technical details}\label{sec:more_technical_details}

\subsection{Bayes classifier maximizes classification accuracy}\label{sec:acc_maximizer_proof}
Suppose that $Y\in\{0,1\}$ is a binary random variable. Define the Bayes classifier $b_0: x \mapsto I\{\mu_0(x) > 1/2\}$ with  $\mu_0(x) = E_0(Y \mid X = x)$. For any fixed $x \in \mathcal{X}$, we have that
\begin{align*}
    P_0\{f(X) = Y \mid X = x\}\ &=\ P_0\{Y = 1, f(X) = 1 \mid X = x\} + P_0\{Y = 0, f(X) = 0 \mid X = x\} \\
    &=\ f(x)P_0(Y = 1 \mid X = x) + \{1-f(x)\}P_0(Y = 0 \mid X = x) \\
    &=\ f(x)\mu_0(x) + \{1-f(x)\}\{1 - \mu_0(x)\}\ ,
\end{align*}which allows us to write that
\begin{align*}
    &P_0\{f(X) = Y \mid X = x\} - P_0\{b_0(X) = Y \mid X = x\}\\
    &\hspace{0.8in}=\ \mu_0(x)\{f(x) - b_0(x)\} + \{1 - \mu_0(x)\}[\{1-f(x)\} - \{1-b_0(x)\}] \\
     &\hspace{0.8in}=\ \{2\mu_0(x) - 1\}\{f(x) - b_0(x)\}\ \leq\ 0
\end{align*}
by definition of $b_0$. It follows then that
\begin{align*}
	P_0\left\{f(X)=Y\right\}-P_0\left\{b_0(X)=Y\right\}\ &=\ E_0\left[ P_0\{f(X) = Y \mid X\}\right]- E_0\left[ P_0\{b_0(X) = Y \mid X\} \right]\\
    	&=\ E_0\left[P_0\{f(X) = Y \mid X \} - P_0\{b_0(X) = Y \mid X\} \right]\ \leq\ 0\ ,
\end{align*}so that $b_0$ is the maximizer of the classification accuracy $P_0\{Y = f(X)\}$.

\subsection{Conditional mean maximizes the area under the ROC curve}\label{sec:auc_maximizer_proof}

Suppose that $Y \in \{0, 1\}$ is a binary random variable. For a given function $f \in \mathcal{F}$, we define the conditional distribution functions \begin{align*}
F_{1}(P_0, f)(c):=P_0\left\{f(X) \leq c \mid Y = 1\right\} \text{\ \ and\ \ }F_{0}(P_0, f)(c):=P_0\left\{f(X) \leq c \mid Y = 0\right\}\ .
\end{align*} If $Y$ denotes the presence of a disease, then $1-F_1(P_0,f)(c)$ and $F_0(P_0,f)(c)$ denote the sensitivity and specificity of a medical test that flags the presence of disease if and only if $f(X)>c$. The AUC value corresponding to $f$ and $P_0$ can be written as
\begin{align*}
P_0\left\{f(X_1)<f(X_2)\mid Y_1=0,Y_2=1\right\}\ &=\ \int_0^\infty \left\{1 - F_{1}(P_0,f)(c)\right\}F_{0}(P_0,f)(dc) \\
&=\ \int_0^1 \left\{1 - F_{1}(P_0,f)(F_{0}^{-1}(P_0,f)(w)) \right\}dw\ .
\end{align*} For a fixed $w$, the integrand $1 - F_{1}(P_0,f)(F_{0}^{-1}(P_0,f)(w))$ is the sensitivity of a test based on $f$ and a cutoff that results in specificity $w$. By an application of the Neyman-Pearson Lemma, it is known that, for any fixed specificity level, any strictly increasing transformation of the likelihood ratio mapping $x\mapsto P_0\left(Y=1\mid X=x\right)/P_0\left(Y=0\mid X=x\right)=\mu_0(x)/\{1-\mu_0(x)\}$ gives an optimal choice of $f$. in particular, the function $f:x\mapsto \mu_0(x)$ is optimal. Since this is true irrespective of the fixed specificity level, it holds uniformly across specificity levels and hence also maximizes the AUC value, as claimed.

\subsection{Verification of conditions (A1) and (A2) for Examples 1--4}

\vspace{.15in}
\noindent\emph{Example 1: $R^2$}\\
We have that $|V(f,P_0)-V(f_0,P_0)|=E_0\{f(X)-f_0(X)\}^2/\sigma^2(P_0)$ so that $|V(f,P_0)-V(f_0,P_0)|=O(\|f-f_0\|^2_\mathcal{F})$ and condition (A1) holds. We can verify that $\dot{V}(f, P_0; h) = -\int \{y - f(x)\}^2 h(dz)/\sigma^2(P_0)$. Since $P\mapsto E_P\{Y-f(X)\}^2$ is linear and thus Hadamard differentiable uniformly in $f$, condition (A2) can be shown to hold for any $\delta>0$ provided the marginal distribution of $Y$ under $P_0$ has bounded support.

\vspace{.15in}
\noindent\emph{Example 2: deviance}\\
Using that $f_0=\mu_0$ and setting $a_0:=-2/\{\log P_0(Y=0)+\log P_0(Y=1)\}$, a standard argument based on Taylor approximations allows to write that \begin{align*}
    |V(f,P_0)-V(f_0,P_0)|\ &=\ a_0\left|E_0\left[f_0(X)\log\left\{\frac{f(x)}{f_0(x)}\right\}+\left\{1-f_0(x)\right\}\log\left\{\frac{1-f(x)}{1-f_0(x)}\right\} \right]\right|\\
    &\leq\ \frac{a_0}{2}E_0\left[\left\{f(x)-f_0(x)\right\}^2\left\{\frac{f_0(x)}{\xi_0(x)}+\frac{1-f_0(x)}{1-\xi_1(x)}\right\}\right]
\end{align*}
for some $\xi_0,\xi_1:\mathcal{X}\rightarrow \mathcal{Y}$ lying pointwise between $f$ and $f_0$. If $f(X),f_0(X)\in(\delta,1-\delta)$ almost surely under $P_0$, then we find that $|V(f,P_0)-V(f_0,P_0)|\leq a_0\left(\frac{1-\delta}{\delta}\right)\|f-f_0\|_{\mathcal{F}}^2$. Thus, condition (A1) then holds with $\alpha=2$. Since $P\mapsto E_P\left[Y\log f(X)+(1-Y)\log\{1-f(X)\}\right]$ is linear and thus Hadamard differentiable uniformly in $f$, condition (A2) can again be shown to hold for any $\delta>0$.

\vspace{.15in}
\noindent\emph{Example 3: classification accuracy}\\
Using that $f_0:x\mapsto I\{\mu_0(x)>1/2\}$ is an optimizer of accuracy, and writing any candidate prediction function $f: \mathcal{X} \to \{0,1\}$ as $f(x) = I\{\mu(x) > 1/2\}$ for some function $\mu : \mathcal{X} \to [0,1]$, we can write
\begin{align*}
    0\ &\leq\ P_0\left\{Y = f_0(X)\right\} - P_0\left\{Y = f(X)\right\}\ =\ E_0\left[I\left\{Y=f_0(X)\right\}-I\left\{Y=f(X)\right\}\right] \\
    &=\ P_0\left\{Y=f_0(X),Y\neq f(X)\right\}-P_0\left\{Y\neq f_0(X),Y=f(X)\right\}\\
    &=\ P_0\left\{f_0(X)=1,f(X)=0,Y=1\right\}+P_0\left\{f_0(X)=0,f(X)=1,Y=0\right\}\\
    &\hspace{.3in}-P_0\left\{f_0(X)=0,f(X)=1,Y=1\right\}-P_0\left\{f_0(X)=1,f(X)=0,Y=0\right\}\\
    &=\ [P_0\{Y=1\mid \mu_0(X)\geq \tfrac{1}{2}>\mu(X)\}-P_0\{Y=0\mid \mu_0(X)\geq \tfrac{1}{2}>\mu(X)\}]\,P_0\{\mu_0(X)\geq \tfrac{1}{2}>\mu(X)\}\\
    &\hspace{.3in}+[P_0\{Y=0\mid \mu(X)\geq \tfrac{1}{2}>\mu_0(X)\}-P_0\{Y=1\mid \mu(X)\geq \tfrac{1}{2}>\mu_0(X)\}]\,P_0\{\mu(X)\geq \tfrac{1}{2}>\mu_0(X)\}\\
    &=\ [2P_0\{Y=1\mid \mu_0(X)\geq\tfrac{1}{2}>\mu(X)\}-1]\,P\{\mu_0(X)\geq\tfrac{1}{2}>\mu(X)\}\\
    &\hspace{.3in}+[2P_0\{Y=0\mid \mu(X)\geq\tfrac{1}{2}>\mu_0(X)\}-1]\,P\{\mu(X)\geq\tfrac{1}{2}>\mu_0(X)\}\ .
 \end{align*} Now, on one hand, we note that \begin{align*}
 &P_0\{Y=1\mid \mu_0(X)\geq\tfrac{1}{2}>\mu(X)\}-\tfrac{1}{2}\ =\ E_0\{Y\mid \mu_0(X)\geq\tfrac{1}{2}>\mu(X)\}-\tfrac{1}{2}\\
 &\hspace{0.5in}=\ E_0\{\mu_0(X)\mid \mu_0(X)\geq\tfrac{1}{2}>\mu(X)\}-\tfrac{1}{2}\ =\ E_0\{\mu_0(X)-\tfrac{1}{2}\mid \mu_0(X)\geq\tfrac{1}{2}>\mu(X)\}\ ,
 \end{align*} and so it follows that $|P_0\{Y=1\mid \mu_0(X)\geq\tfrac{1}{2}>\mu(X)\}-\tfrac{1}{2}|\leq\|\mu-\mu_0\|_\infty$. We can similarly show that $|P_0\{Y=0\mid \mu(X)\geq\tfrac{1}{2}>\mu_0(X)\}-\tfrac{1}{2}|\leq\|\mu-\mu_0\|_\infty$. On the other hand, in view of the margin condition we impose, we have that \[P_0\{\mu_0(X)\geq\tfrac{1}{2}>\mu(X)\}\ \leq\ P_0\{|\mu_0(X)-\tfrac{1}{2}|<|\mu(X)-\mu_0(X)|\}\ \leq\ \kappa\|\mu-\mu_0\|_{\infty}\]and similarly, $P_0\{\mu(X)\geq\tfrac{1}{2}>\mu_0(X)\}\leq \kappa\|\mu-\mu_0\|_{\infty}$. Combining the inequalities we have derived, we conclude that $0\leq P_0\{Y=f_0(X)\}-P_0\{Y=f(X)\}\leq 4\kappa\|\mu-\mu_0\|_\infty$.

 \vspace{.15in}
\noindent\emph{Example 4: Area under the ROC curve}\\
We begin by writing
\begin{align*}
    0\ &\leq\ P_0\left\{f_0(X_1)<f_0(X_2),Y_1=0,Y_2=1\right\}-P_0\left\{f(X_1)<f(X_2),Y_1=0,Y_2=1\right\}\\
    &=\ E_0\left[I\left\{f_0(X_1)<f_0(X_2),Y_1=0,Y_2=1\right\}-I\left\{f(X_1)<f(X_2),Y_1=0,Y_2=1\right\}\right]\\
    &=\ \tfrac{1}{2}\,E_0\left[I\left\{f_0(X_1)<f_0(X_2),Y_1=0,Y_2=1\right\}+I\left\{f_0(X_1)\geq f_0(X_2),Y_1=1,Y_2=0\right\}\right]\\
    &\hspace{.5in}-\tfrac{1}{2}\,E_0\left[I\left\{f(X_1)<f(X_2),Y_1=0,Y_2=1\right\}+I\left\{f(X_1)\geq f(X_2),Y_1=1,Y_2=0\right\}\right]\\
    &=\ \tfrac{1}{2}\,E_0\left[(Y_2-Y_1)I\left\{f_0(X_1)<f_0(X_2),f(X_1)\geq f(X_2)\right\}\right]\\
    &\hspace{.5in}+\tfrac{1}{2}\,E_0\left[(Y_1-Y_2)I\left\{f_0(X_1)\geq f_0(X_2),f(X_1)< f(X_2)\right\}\right]\\
    &=\ \tfrac{1}{2}\,E_0\left[\{f_0(X_2)-f_0(X_1)\}I\left\{f_0(X_1)<f_0(X_2),f(X_1)\geq f(X_2)\right\}\right]\\
    &\hspace{.5in}+\tfrac{1}{2}\,E_0\left[\{f_0(X_1)-f_0(X_2)\}I\left\{f_0(X_1)\geq f_0(X_2),f(X_1)< f(X_2)\right\}\right]\\
    &\leq\ \tfrac{1}{2}\, E_0\left[|f_0(X_1)-f_0(X_2)|I\left\{[f_0(X_1)-f_0(X_2)][f(X_1)-f(X_2)]<0\right\} \right].
\end{align*} Defining $A:=\{f(X_1)-f_0(X_1)\}+\{f_0(X_2)-f(X_2)\}$, $B:=f_0(X_1)-f_0(X_2)$ and $t:x\mapsto |f(x)-f_0(x)|$, we note that
\begin{align*}
    &\{[f_0(X_1) - f_0(X_2)][f(X_1) - f(X_2)] < 0\}\ = \ \{B(A + B) < 0\}\ =\ \{(\tfrac{1}{2}A+B)^2 - \tfrac{1}{4}A^2 < 0\} \\
    &=\ \{\lvert A \rvert > \lvert B \rvert, AB < 0\}\ \subseteq\ \{\lvert A \rvert > \lvert B \rvert \}\ \subseteq\ \{\lvert f_0(X_1) - f_0(X_2) \rvert < t(X_1) + t(X_2)\}\ .
\end{align*} Using this result and the inequality derived above, and defining $\alpha_0:=\{P_0(Y=1)P_0(Y=0)\}^{-1}$, we have that
\begin{align*}
0\ &\leq\ AUC(f_0,P_0)-AUC(f,P_0)\\
&=\ \alpha_0\,\left[P_0\left\{f_0(X_1)<f_0(X_2),Y_1=0,Y_2=1\right\}-P_0\left\{f(X_1)<f(X_2),Y_1=0,Y_2=1\right\}\right]\\
&\leq\ \tfrac{1}{2}\alpha_0\,E_0\left[|f_0(X_1)-f_0(X_2)|I\left\{[f_0(X_1)-f_0(X_2)][f(X_1)-f(X_2)]<0\right\} \right]\\
&\leq\ \tfrac{1}{2}\alpha_0\,E_0\left[|f_0(X_1)-f_0(X_2)|I\left\{|f_0(X_1)-f_0(X_2)|<t(X_1)+t(X_2)\right\} \right]\\
&\leq\ \tfrac{1}{2}\alpha_0\,E_0\left[|f_0(X_1)-f_0(X_2)|I\left\{|f_0(X_1)-f_0(X_2)|<2\|t\|_{\infty}\right\} \right]\\
    & \leq\ \alpha_0\,\lVert t \rVert_\infty\, P_0\left\{\lvert f_0(X_1) - f_0(X_2) \rvert < 2 \lVert t \rVert_\infty\right\}\ \leq\ 2\alpha_0\, \kappa\, \lVert t \rVert_\infty^2\ ,
\end{align*} where the last inequality follows from the margin condition we impose.

\subsection{Derivation of the EIFs for Examples 5 and 6}

\noindent\emph{Example 5: Mean outcome under a binary intervention rule}

The nonparametric EIF for this example is derived in, for example, Sections 2 and 3 of \cite{luedtke2016} and in Section A.1 of its supplement.

\vspace{.15in}
\noindent\emph{Example 6: Classification accuracy under outcome missingness}

Recall that, in this example, the ideal-data structure consists of $\mathbbs{Z}:=(X,Y)\sim\mathbbs{P}$, and the observed data structure is  $Z:=(X,\Delta,U)$, where $\Delta$ is the indicator of having observed the outcome $Y$, and we have defined $U:=\Delta Y$. The ideal-data nonparametric EIF at $\mathbbs{P}$, following Appendix~\ref{sec:app-a}, is given by
\begin{align*}
    \phi_{\mathbbs{P}}^F(x,y) = I\{y = f_\mathbbs{P}(x)\} - V(f_\mathbbs{P}, \mathbbs{P}).
\end{align*}
Based on results in Chapter 25.5.3 of \cite{vandervaart2000}, the observed-data nonparametric EIF at $P$ is given by
\begin{align}\label{eq:supp-e6}
    \phi_P(z) = \frac{\delta}{g_P(x)}\phi_{\mathbbs{P}}^F(z) + \left\{1 - \frac{\delta}{g_P(x)}\right\}E_P\{\phi_{\mathbbs{P}}^F(Z) \mid \Delta = 1, X = x\}\ .
\end{align}
Defining the nuisance function $Q_P(x) := P\{Y = f_P(X) \mid \Delta = 1, X = x\}$, simple algebraic manipulations then yield that
$E_P\{\phi_\mathbbs{P}^F(Z) \mid \Delta = 1, X = x\} = \ Q_P(x) - V(f_P, P)$.
Plugging this into \eqref{eq:supp-e6} yields the desired form of the EIF.

\section{Additional numerical experiments}\label{sec:more_sims}
\subsection{Replicating all numerical experiments}\label{sec:rep}

All numerical experiments presented here and in the main manuscript can be replicated using code available on GitHub.
\revision{In all cases, we generate data by:
\begin{align*}
&1:\text{ drawing\ }X \sim MVN(0, \Sigma)\\
&2:\text{ drawing }\epsilon\sim N(0,1)\text{ independent of }X,\text{ and setting }Y =I\{x\beta_0 + \epsilon > 0\}\ \text{given }X=x,
\end{align*}where  $\Sigma$ is the $p \times p$ identity matrix and $\beta_0 = (2.5, 3.5, 0,\ldots,0)^\top$. The dimension $p$ is determined by the scenario. The approximate true values of variable importance based on accuracy and AUC under all scenarios considered here are provided in Table~\ref{tab:supp-truths}. The specification of each individual algorithm for estimating $f_0$ and $f_{0,s}$ is provided in Table~\ref{tab:supp-individual-algs}, while the specification of the candidate algorithms used in the Super Learner is provided in Table~\ref{tab:supp-sl-algs}.}

\begin{table}
\centering
\caption{Approximate values of $\psi_{0,s}$ in the numerical experiments.}
\label{tab:supp-truths}
\begin{tabular}{|ll|cccccc|}
    \hline
Importance measure & Scenario & $X_1$ & $X_2$ & $X_3$ & $X_4$ & $(X_1, X_3)$ & $(X_2, X_4)$ \\
\hline
\multirow{2}*{Accuracy} & (1,2,3) & 0.136 & 0.236 & 0 & 0 & 0.136 & 0.236 \\
 & 4 & 0.081 & 0.228 & 0 & 0 & 0.136 & 0.236 \\
\multirow{2}*{Area under the ROC curve} & (1,2,3) & 0.105 & 0.221 & 0 & 0 & 0.105 & 0.221 \\
 & 4 & 0.052 & 0.211 & 0 & 0 & 0.105 & 0.221 \\
\hline
\end{tabular}
\end{table}

\begin{table}
    \centering
    \begin{tabular}{c|ccc}
      Algorithm & R & Tuning Parameter(s) & Tuning parameter  \\
       & Implementation & and possible values & description\\ \hline
        Generalized linear models & \texttt{glm} & -- & -- \\ \hline
        Generalized additive models & \texttt{mgcv} & method = \texttt{"GCV.Cp"} & Smoothing parameter \\
        & \citep{mgcvpkg} & & estimation method \\ \hline
        Random forests & \texttt{ranger} & \texttt{ntree}$^{\ddagger}$ & Number of variables \\
        & \citep{rangerpkg} &  & to possibly split \\
        & & & at in each node \\
        & & \texttt{max.depth}$^{\ddagger}$ & Maximum tree depth \\
        & & \texttt{min.node.size}$^{\ddagger}$ & Minimum node size \\ \hline
    \end{tabular}
    \caption{Individual algorithms considered with their R implementation, tuning parameter values, and description of the tuning parameters. All tuning parameters besides those listed here are set to their default values. In particular, the random forests are grown with 500 trees, \texttt{mtry} = $\sqrt{p}$ ${}^{\dagger}$, and a subsampling fraction of 1; five-fold cross-validation over the grid defined by (\texttt{ntree}, \texttt{max.depth}, \texttt{min.node.size}) was used to select the tuning parameter combination that minimized log-likelihood loss. \\
    ${}^{\dagger}$: $p$ denotes the total number of predictors. \\
    ${}^{\ddagger}$: For setting 1, \texttt{ntree} $\in \{100, 500, 1000\}$, \texttt{max.depth} = 5, \texttt{min.node.size} = 1; for all other settings, \texttt{ntree} $\in \{500, 1000, 1500, 2000, 5000\}$, \texttt{max.depth} $\in \{1, 3, 5\}$, \texttt{min.node.size} = 10.}
    \label{tab:supp-individual-algs}
\end{table}

\begin{table}
    \centering
    \begin{tabular}{c|ccc}
       Candidate Learner & R & Tuning Parameter & Tuning parameter  \\
       & Implementation & and possible values & description\\ \hline
        Generalized linear models & \texttt{glm} & -- & -- \\ \hline
        Generalized additive models & \texttt{gam} & degree $= 2$ & Degree of smooth terms \\
        & \citep{gampkg} & & \\ \hline
        Random forests & \texttt{ranger} & \texttt{mtry} $= \sqrt{p}$ ${}^{\dagger}$ & Number of variables \\
        & \citep{rangerpkg} & & to possibly split \\
        & & & at in each node \\ \hline
        Gradient boosted & \texttt{xgboost} & \texttt{max.depth} $ = 1$ &  Maximum tree depth\\
        trees & \citep{xgboostpkg} & & \\ \hline
        Elastic net$^{\ddagger}$ & \texttt{glmnet} & mixing parameter $\alpha$ & Trade-off between  \\
        & \citep{glmnetpkg} & $ = 1$ & $\ell_1$ and $\ell_2$ regularization \\ \hline
    \end{tabular}
    \caption{Candidate learners in the Super Learner ensemble along with their R implementation, tuning parameter values, and description of the tuning parameters. All tuning parameters besides those listed here are set to their default values. In particular, the random forests are grown with 500 trees, a minimum node size of 5 for continuous outcomes and 1 for binary outcomes, and a subsampling fraction of 1; the boosted trees are grown with a maximum of 1000 trees, shrinkage rate of 0.1, and a minimum of 10 observations per node; and the lasso $\ell_1$ tuning parameter is chosen using 10-fold cross-validation. \\
    ${}^{\dagger}$: $p$ denotes the total number of predictors. \\
    ${}^{\ddagger}$: lasso is only included in cases where $p \geq 4$.}
    \label{tab:supp-sl-algs}
\end{table}

\subsection{Properties of our proposal under the alternative hypothesis}\label{sec:cf-alt}

\revision{In this section, we present additional results under Scenario 1. In this case, $p = 2$. For each scenario presented here, we generated 1000 random datasets of size $n \in \{100, 500, 1000, \dots, 4000\}$, and considered the importance of both $X_1$ and $X_2$. We highlight results for both features using the AUC and for $X_1$ using accuracy, and we provide the coverage of nominal 95\% confidence intervals. We assess performance in the same way as in the main manuscript.}

\revision{We present results for AUC and for the accuracy-based importance of $X_1$ in Figures~\ref{fig:supp_alt_accuracy_1}--\ref{fig:supp_alt_auc_2}. The results for both features and both importance measures are largely similar to those presented in Section 5.2 of the main manuscript. The need for cross-fitting is particularly striking in Figure~\ref{fig:supp_alt_auc_2}, where we observed coverage near zero for intervals based on a non-cross-fitted random forest estimator of the oracle prediction functions. In Figure~\ref{fig:supp_alt_nocfse}, we show the coverage of nominal 95\% intervals based on the non-cross-fitted standard error estimator. Here, we observe reduced coverage in some cases compared to the results presented above. Taken together, these results highlight that} when using simple estimators of the conditional mean functions (e.g., estimators based on correctly-specified parametric models), using cross-fitting appears to have minimal impact on the performance of the proposed inferential procedures and is therefore not needed. In contrast, when flexible nuisance estimators are used, it appears important to use cross-fitting when estimating VIM values \revision{and standard errors}. The elimination of the constraint on nuisance estimator complexity (i.e., the Donsker class condition) achieved via cross-fitting does appear to translate into substantially improved practical performance when complex nuisance estimators are used.

\begin{figure}
\centering
\includegraphics[width = 1\textwidth]{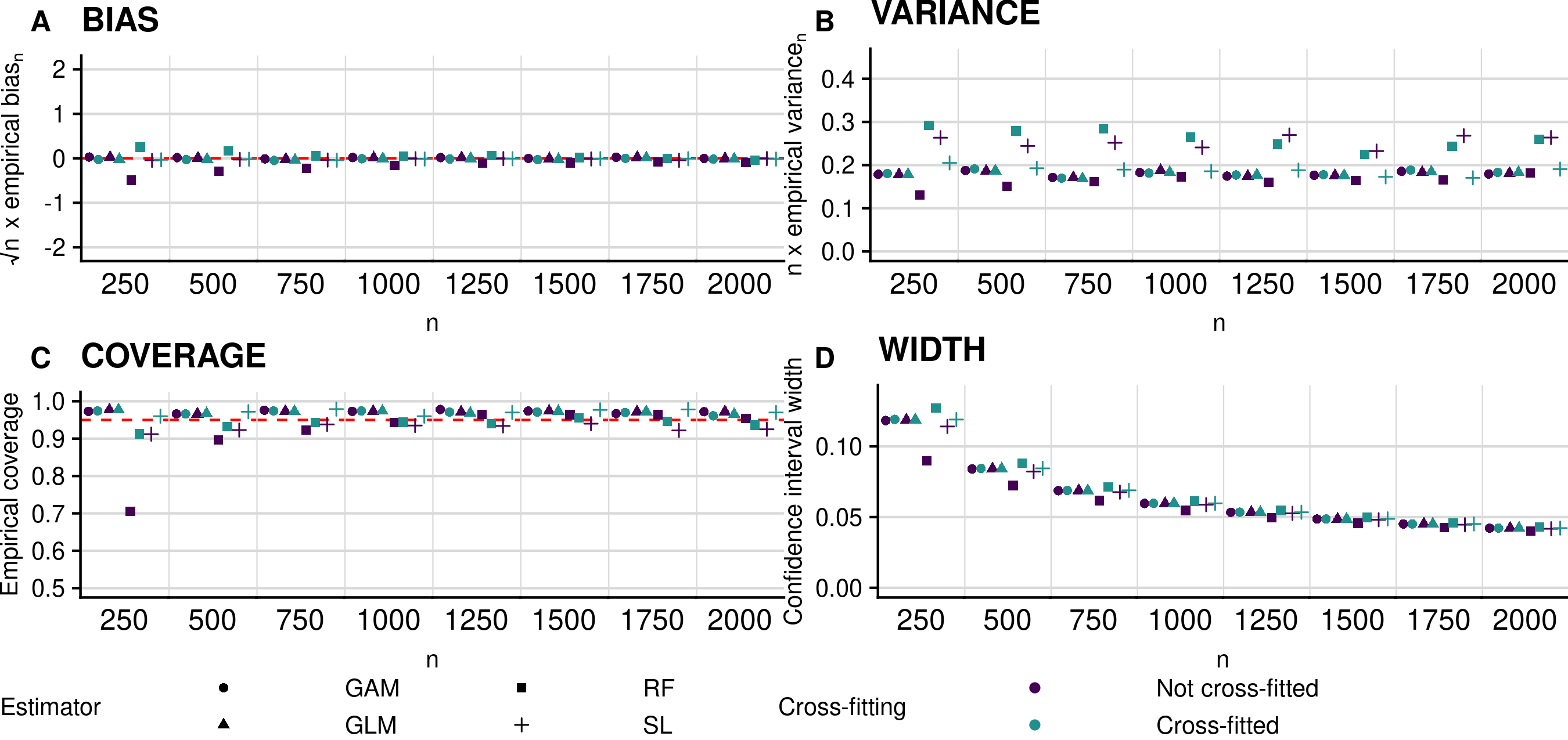}
\caption{Performance of plug-in estimators for estimating (non-zero) importance of $X_1$ in terms of accuracy under Scenario 1 (all features have non-zero importance). Clockwise from top left: empirical bias of the proposed plug-in estimator scaled by $n^{1/2}$; empirical variance scaled by $n$; empirical coverage of nominal 95\% confidence intervals; and width of these intervals. Circles, triangles, squares, and plus symbols denote estimators based on the use of generalized additive models (GAMs), probit regression (GLM), random forests (RF) or the Super Learner (SL), respectively. Blue and green symbols denote non-cross-fitted and cross-fitted estimators, respectively. This figure appears in color in the electronic version of this article.}
\label{fig:supp_alt_accuracy_1}
\end{figure}

\begin{figure}
\centering
\includegraphics[width = 1\textwidth]{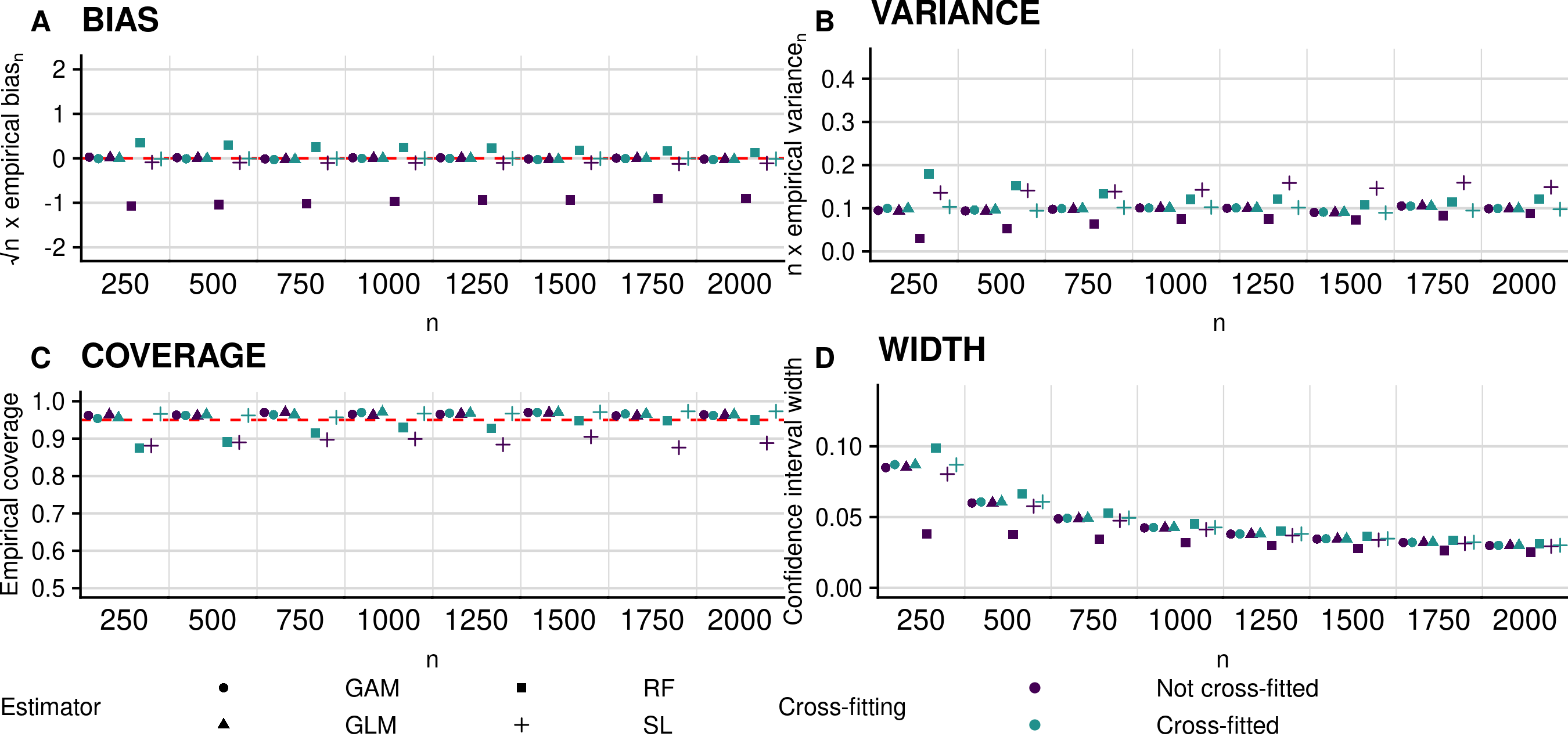}
\caption{Performance of plug-in estimators for estimating (non-zero) importance of $X_1$ in terms of AUC under Scenario 1 (all features have non-zero importance). Clockwise from top left: empirical bias of the proposed plug-in estimator scaled by $n^{1/2}$; empirical variance scaled by $n$; empirical coverage of nominal 95\% confidence intervals; and width of these intervals. Circles, triangles, squares, and plus symbols denote estimators based on the use of generalized additive models (GAMs), probit regression (GLM), random forests (RF) or the Super Learner (SL), respectively. Blue and green symbols denote non-cross-fitted and cross-fitted estimators, respectively. Coverage of intervals based on the non-cross-fitted RF-based estimator never exceeds 0.5 and is as low as zero in some cases. This figure appears in color in the electronic version of this article.}
\label{fig:supp_alt_auc_1}
\end{figure}

\begin{figure}
\centering
\includegraphics[width = 1\textwidth]{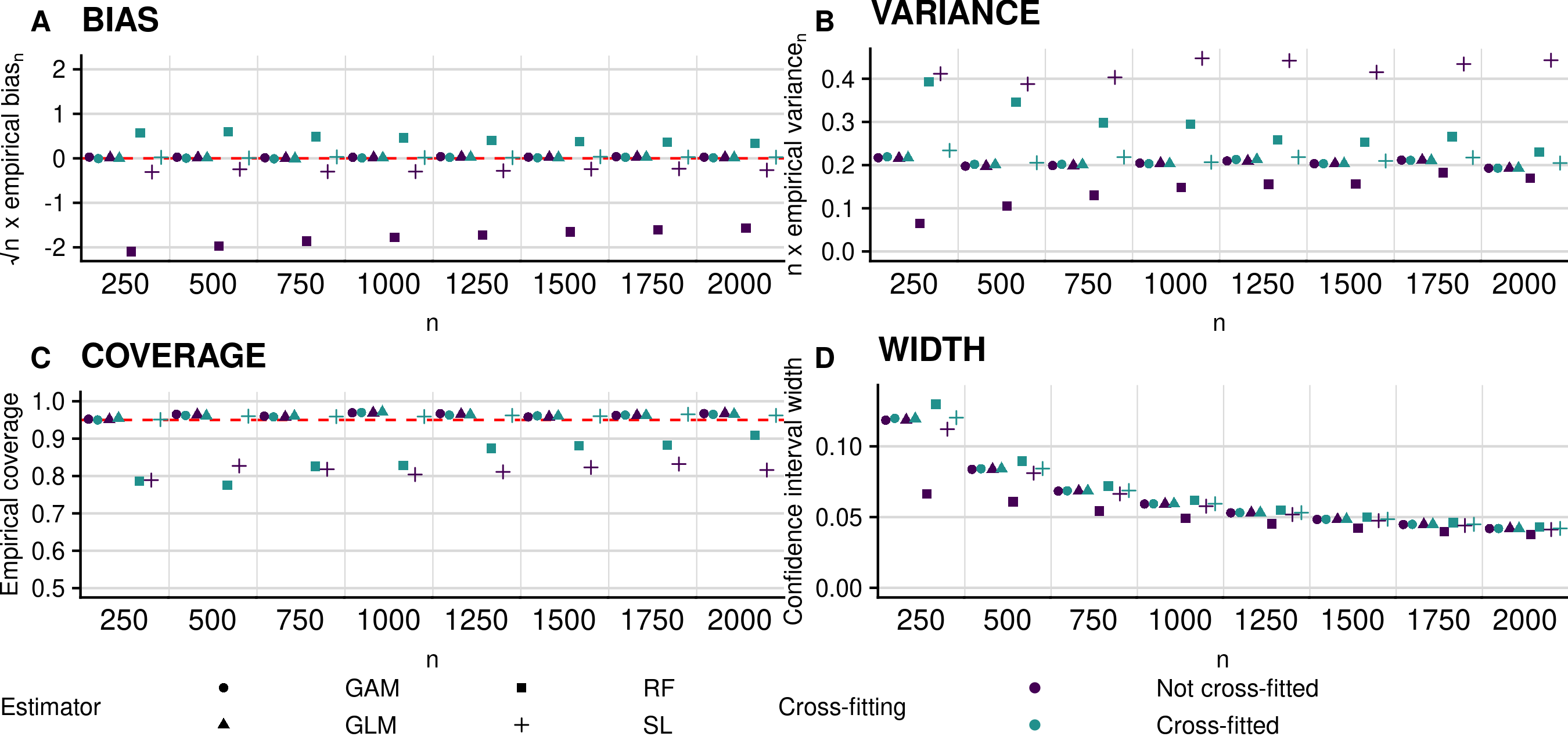}
\caption{Performance of plug-in estimators for estimating (non-zero) importance of $X_2$ in terms of AUC under Scenario 1 (all features have non-zero importance). Clockwise from top left: empirical bias of the proposed plug-in estimator scaled by $n^{1/2}$; empirical variance scaled by $n$; empirical coverage of nominal 95\% confidence intervals; and width of these intervals. Circles, triangles, squares, and plus symbols denote estimators based on the use of generalized additive models (GAMs), probit regression (GLM), random forests (RF) or the Super Learner (SL), respectively. Blue and green symbols denote non-cross-fitted and cross-fitted estimators, respectively. Coverage of intervals based on the non-cross-fitted RF-based estimator never exceeds 0.5 and is as low as zero in some cases. This figure appears in color in the electronic version of this article.}
\label{fig:supp_alt_auc_2}
\end{figure}

\begin{figure}
\centering
\includegraphics[width = 1\textwidth]{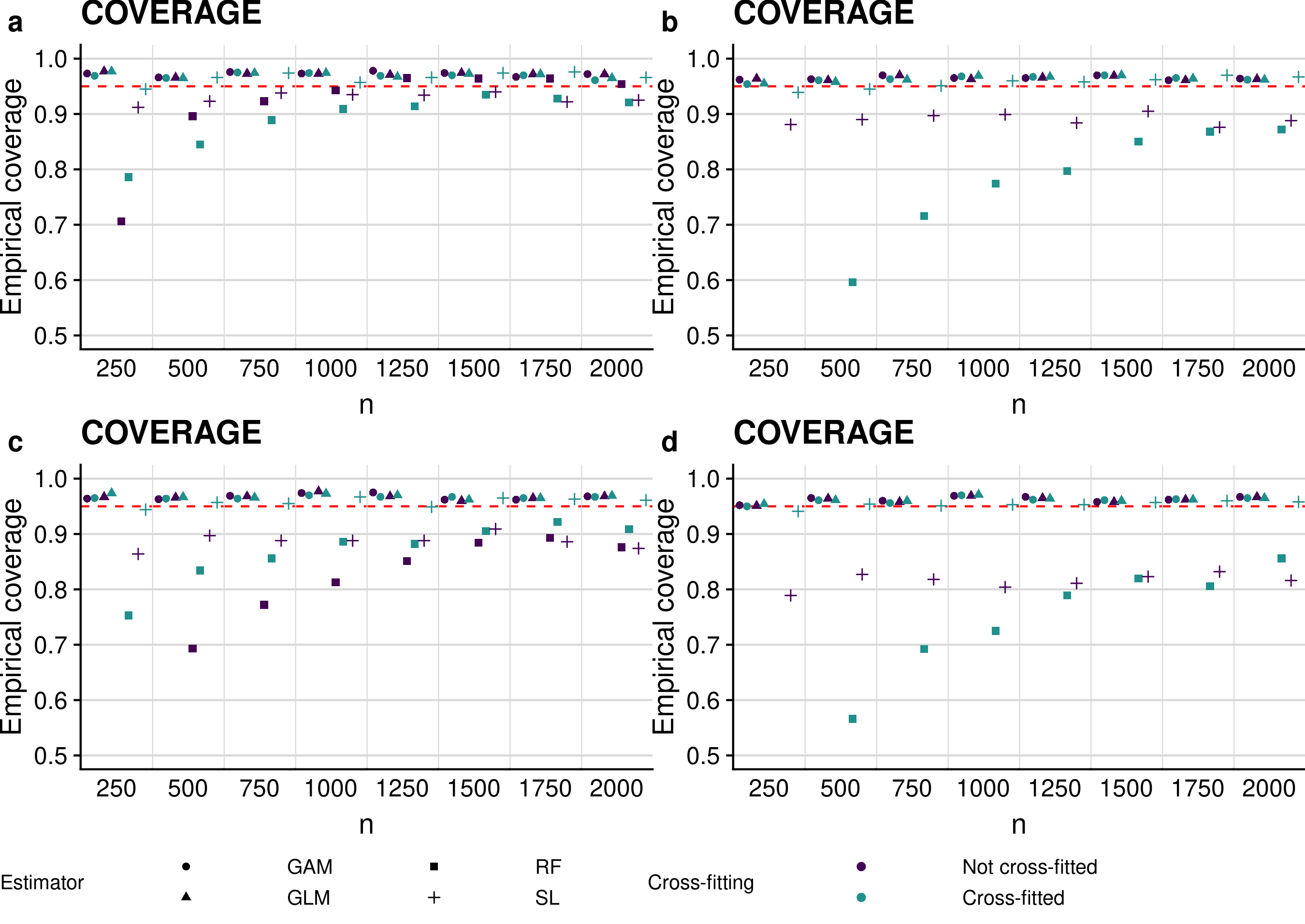}
\caption{Empirical coverage of confidence intervals based on the non-cross-fitted standard error estimator under Scenario 1 (all features have non-zero importance). The rows correspond to the feature of interest, while the columns correspond to accuracy and AUC, respectively. Circles, triangles, squares, and plus symbols denote estimators based on the use of generalized additive models (GAMs), probit regression (GLM), random forests (RF) or the Super Learner (SL), respectively. Blue and green symbols denote non-cross-fitted and cross-fitted VIM estimators, respectively. Coverage of intervals based on the non-cross-fitted RF-based estimator (panels b and d) never exceeds 0.5 and is as low as zero in some cases. This figure appears in color in the electronic version of this article.}
\label{fig:supp_alt_nocfse}
\end{figure}

\subsection{Properties of our proposal under the null hypothesis}\label{sec:cv-null}

In this section, we present additional results under Scenario 2. In this case, $p = 4$. We again generated 1000 random datasets of size $n \in \{100, 500, 1000, \dots, 4000\}$, and \revision{considered the importance of both $X_2$ (a non-null feature) and $X_3$ (a null feature)}. Here, we highlight results for both features based on the AUC and for $X_2$ based on accuracy, \revision{and we provide the coverage of nominal 95\% confidence intervals and proportion of tests rejected}. We assess performance in the same way as in the main manuscript.

\revision{We present the results based on a cross-fitted standard error estimator in Figures~\ref{fig:supp_null_accuracy_2}--\ref{fig:supp_null_auc_3}. In Figures~\ref{fig:supp_null_accuracy_2} and \ref{fig:supp_null_auc_2}, we observe high power across all sample sizes. We again observe residual bias for the non-cross-fitted VIM estimators based on flexible nuisance estimation (random forests and the Super Learner). In Figure~\ref{fig:supp_null_auc_3}, the cross-fitted VIM estimator based on random forests exhibits some residual bias but coverage and type I error are still near the nominal level. It is possible that this bias could be mitigated with cross-validation over a richer grid of tuning parameters. Similarly as in the main manuscript, since the bias for estimating the null feature appears to be small for the non-cross-fitted estimators, type I error is not inflated in these simulations. However, we expect in most cases that cross-fitting will yield a more adequate type I error control. Indeed, we see that this is the case by comparing the results for the cross-fitted estimator and  cross-fitted versus non-cross-fitted standard error estimators (Figure~\ref{fig:supp_null_nocfse}). Here, we see a vastly inflated type I error for the cross-fitted random forests-based estimator, reflecting that in this case the non-cross-fitted standard error appears to be too small.}

\begin{figure}
\centering
\includegraphics[width = 1\textwidth]{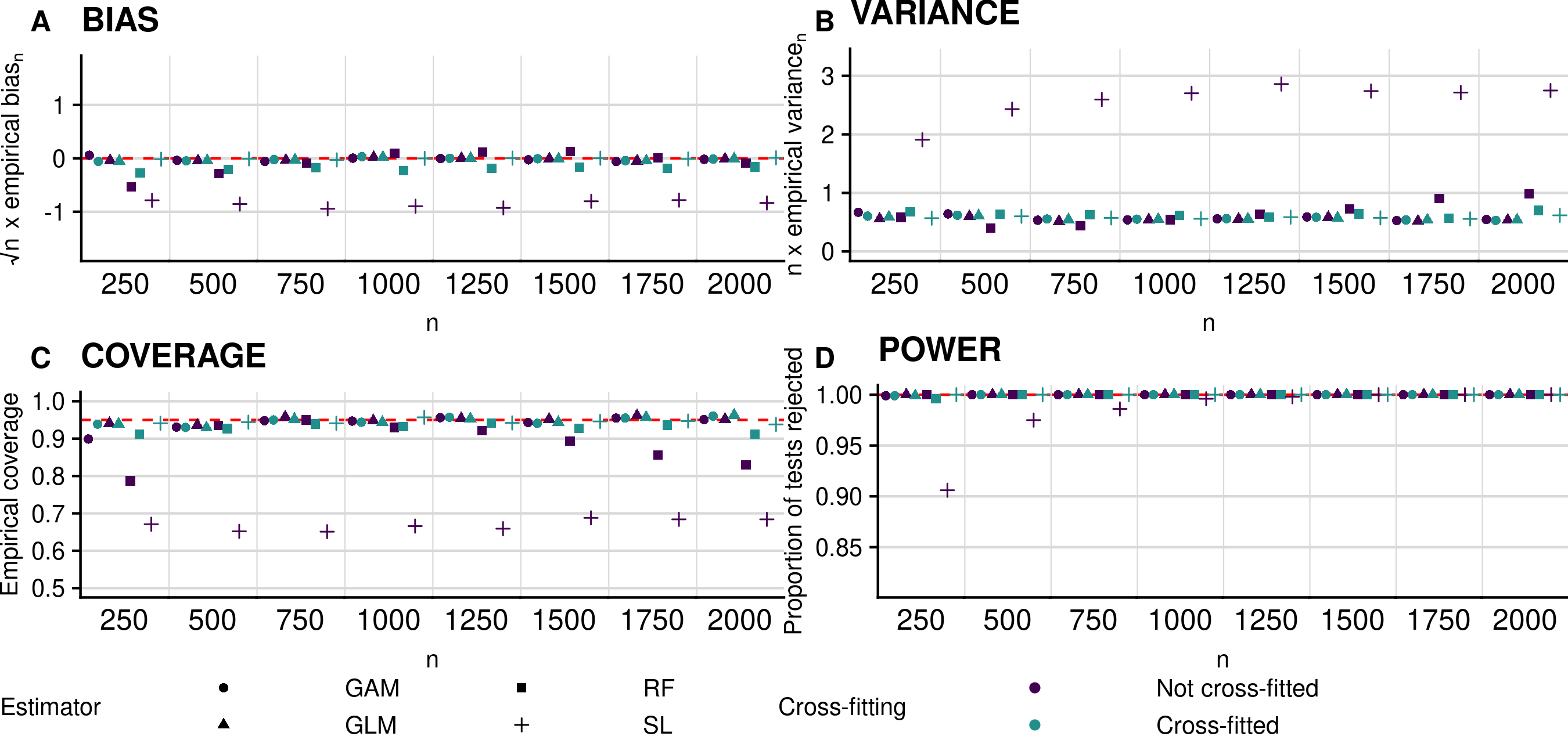}
\caption{Performance of plug-in estimators for estimating (non-zero) importance of $X_2$ in terms of accuracy under Scenario 2. Clockwise from top left: empirical bias of the proposed plug-in estimator scaled by $n^{1/2}$; empirical variance scaled by $n$; empirical coverage of nominal 95\% confidence intervals; and empirical power of the proposed hypothesis test. Circles, triangles, squares, and plus symbols denote estimators based on the use of generalized additive models (GAMs), probit regression (GLM), random forests (RF) or the Super Learner (SL), respectively. Blue and green symbols denote non-cross-fitted and cross-fitted estimators, respectively. This figure appears in color in the electronic version of this article.}
\label{fig:supp_null_accuracy_2}
\end{figure}

\begin{figure}
\centering
\includegraphics[width = 1\textwidth]{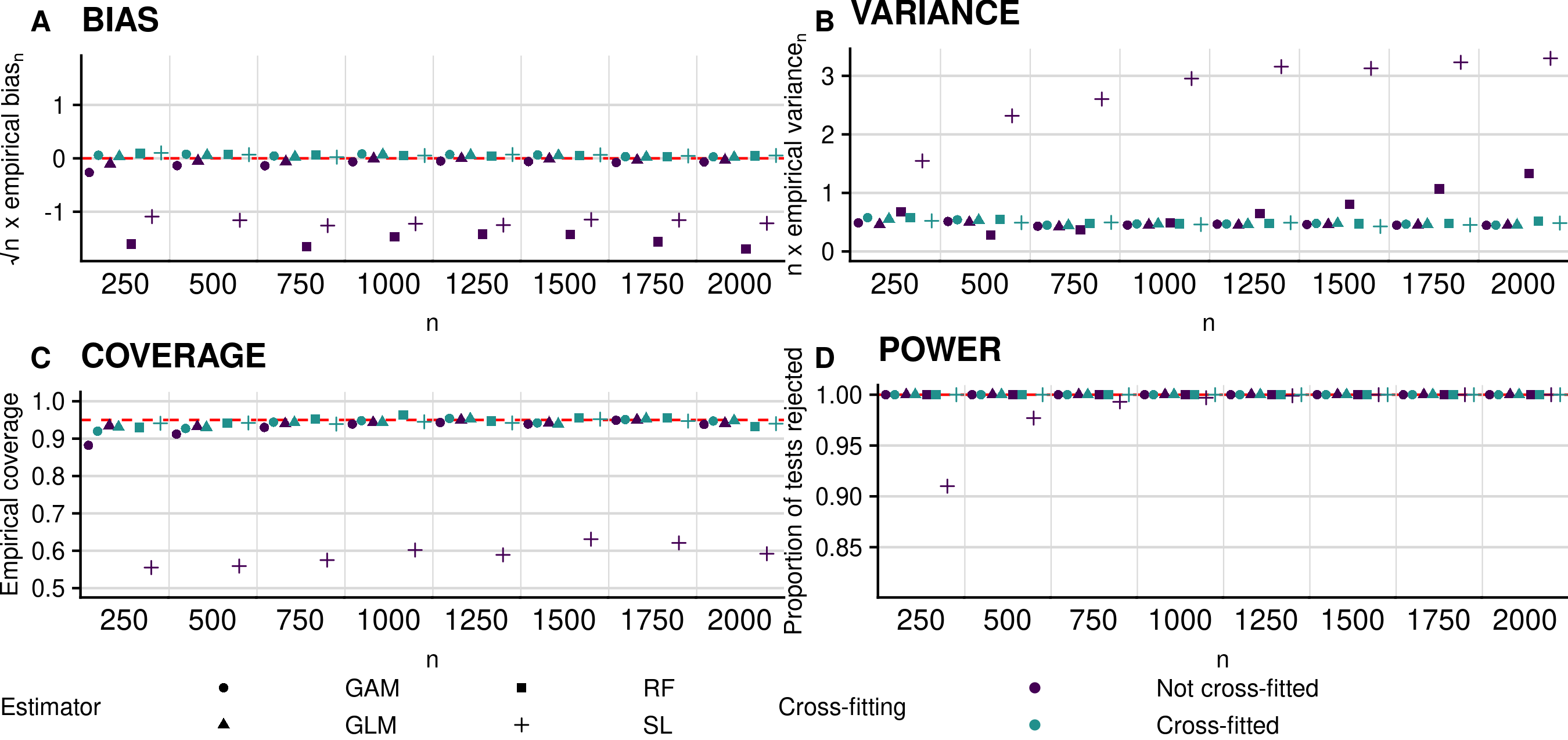}
\caption{Performance of plug-in estimators for estimating (non-zero) importance of $X_2$ in terms of AUC under Scenario 2. Clockwise from top left: empirical bias of the proposed plug-in estimator scaled by $n^{1/2}$; empirical variance scaled by $n$; empirical coverage of nominal 95\% confidence intervals; and empirical power of the proposed hypothesis test. Circles, triangles, squares, and plus symbols denote estimators based on the use of generalized additive models (GAMs), probit regression (GLM), random forests (RF) or the Super Learner (SL), respectively. Blue and green symbols denote non-cross-fitted and cross-fitted estimators, respectively. Coverage of intervals based on the non-cross-fitted RF-based estimator never exceeds 0.5 and is as low as zero in some cases. This figure appears in color in the electronic version of this article.}
\label{fig:supp_null_auc_2}
\end{figure}

\begin{figure}
\centering
\includegraphics[width = 1\textwidth]{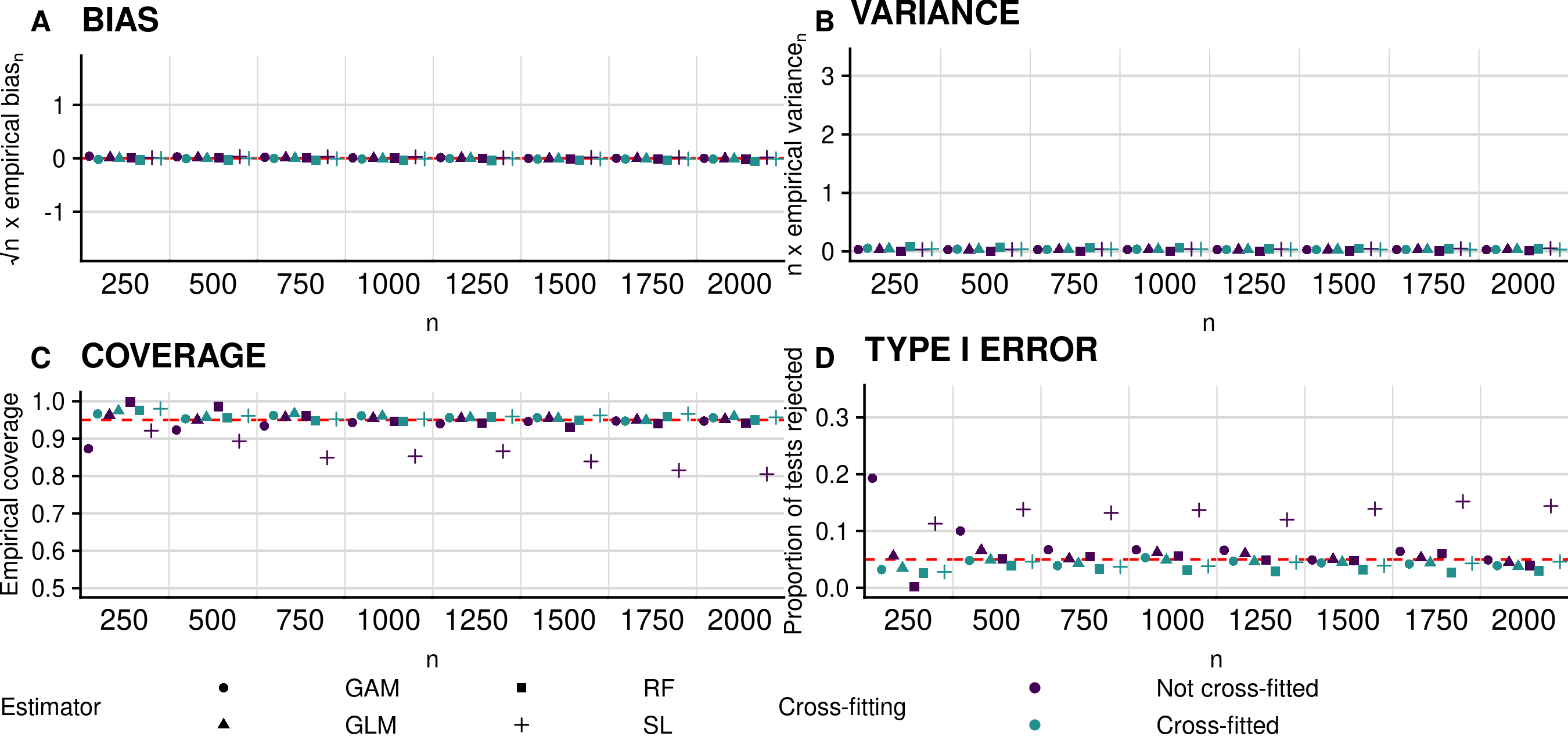}
\caption{Performance of plug-in estimators for estimating (zero) importance of $X_3$ in terms of AUC under Scenario 2. Clockwise from top left: empirical bias of the proposed plug-in estimator scaled by $n^{1/2}$; empirical variance scaled by $n$; empirical coverage of nominal 95\% confidence intervals; and empirical type I error of the proposed hypothesis test. Circles, triangles, squares, and plus symbols denote estimators based on the use of generalized additive models (GAMs), probit regression (GLM), random forests (RF) or the Super Learner (SL), respectively. Blue and green symbols denote non-cross-fitted and cross-fitted estimators, respectively. This figure appears in color in the electronic version of this article.}
\label{fig:supp_null_auc_3}
\end{figure}

\begin{figure}
\centering
\includegraphics[width = 1\textwidth]{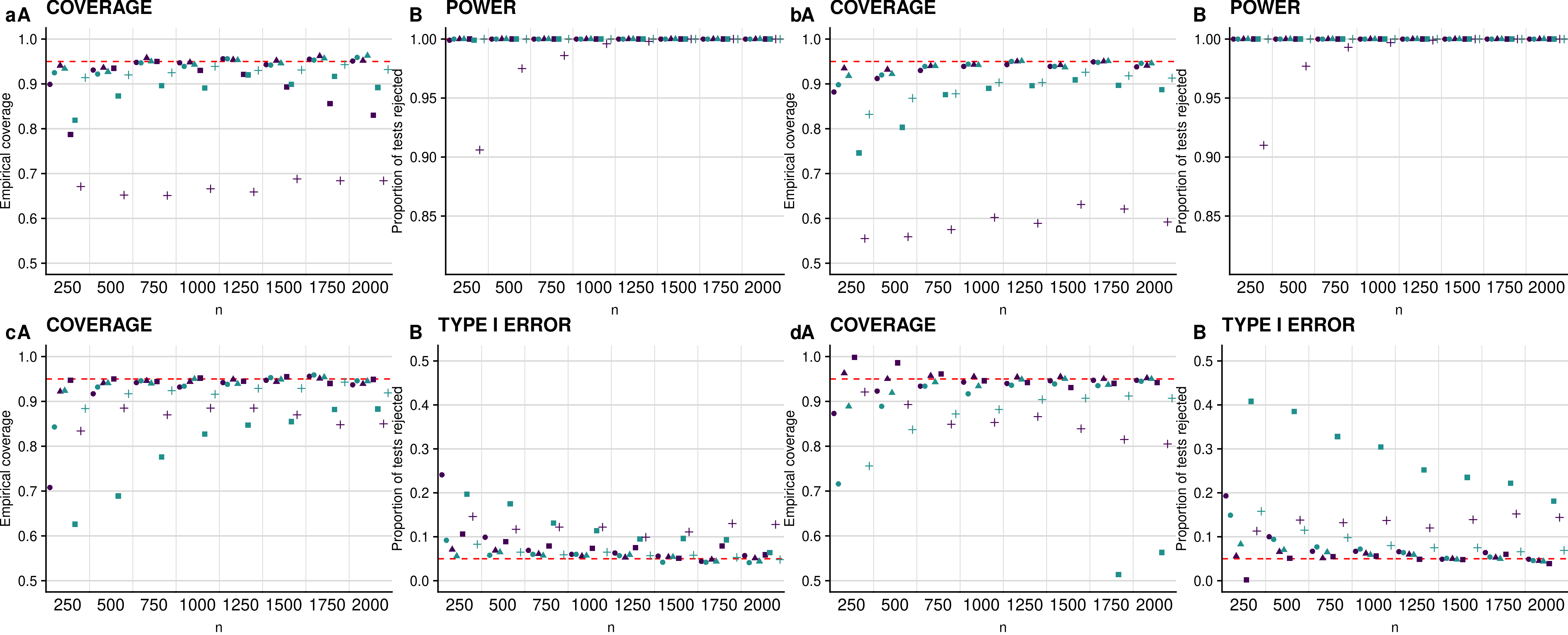}
\caption{Empirical coverage of confidence intervals (A) and proportion of tests rejected (B) based on the non-cross-fitted standard error estimator under Scenario 2. The rows correspond to $X_2$ and $X_3$, respectively, while the columns correspond to accuracy (a,c) and AUC (b,d), respectively. Circles, triangles, squares, and plus symbols denote estimators based on the use of generalized additive models (GAMs), probit regression (GLM), random forests (RF) or the Super Learner (SL), respectively. Blue and green symbols denote non-cross-fitted and cross-fitted estimators, respectively.  Coverage of intervals based on the non-cross-fitted RF-based estimator of importance of $X_2$ (panel bA) never exceeds 0.5 and is as low as zero in some cases. This figure appears in color in the electronic version of this article.}
\label{fig:supp_null_nocfse}
\end{figure}

\subsection{Using the bootstrap for interval estimation}\label{sec:boot}

\revision{In some cases,  particularly those with limited sample sizes, it may be of interest to use a bootstrap scheme for interval estimation rather than a Wald construction using an influence function-based estimator of the asymptotic variance. Because estimation of $f_0$ and $f_{0,s}$ only contributes to the second-order behavior of the plug-in VIM estimator, a valid nonparametric bootstrap here would consist of bootstrapping the empirical distribution $P_n$ but fixing the nuisance estimators $f_n$ and $f_{n,s}$ across all bootstrap runs. Not having  to re-fit estimators of the nuisance functions on each bootstrap sample makes this scheme particularly efficient to implement. Additionally, since we only use the bootstrap for  interval estimation, we do not need to bootstrap the cross-fitting procedure. Our proposed bootstrap procedure in a case with no sample-splitting (i.e., under the alternative hypothesis) is as follows:
\begin{enumerate}
    \item obtain estimator $\psi_{n,s}$ or $\psi_{n,s}^*$ of $\psi_{0,s}$;
    \item obtain estimators $f_{n}$ and $f_{n,s}$ of $f_0$ and $f_{0,s}$ based on the entire dataset;
    \item create $B$ bootstrap resamples of the original dataset;
    \item For $b = 1,2, \ldots, B$:
    \begin{enumerate}
        \item obtain $v_{n,b} := V(f_n, P_{n,b})$ and $v_{n,s,b} := V(f_{n,s}, P_{n,b})$ using the nuisance functions estimated on the entire dataset and the bootstrap empirical distribution $P_{n,b}$;
        \item set $\psi_{n,s,b} := v_{n,b} - v_{n,s,b}$;
    \end{enumerate}
    \item compute bootstrap variance estimator $\tau^2_{n,s,B} := \frac{1}{B}\sum_{b=1}^B \left(\psi_{n,s,b} - \frac{1}{B}\sum_{b=1}^B \psi_{n,s,b}\right)^2$ and resulting Wald-type confidence intervals (using $\psi_{n,s}^*$ or $\psi_{n,s}$), or form a percentile-based confidence interval with endpoints given by the 5th and 95th sample percentiles of $\{\psi_{n,s,1},\psi_{n,s,2},\ldots,\psi_{n,s,B}\}$.
\end{enumerate}
}

\revision{We consider again Scenario 1, where $p = 2$. For each scenario presented here, we generated 1000 random datasets of size $n \in \{100, 500, 1000, \dots, 4000\}$, and considered the importance of both $X_1$ and $X_2$. We assess performance in the same way as in the main manuscript, though we use the bootstrap-based intervals in place of those based on the influence function. We present the results of this experiment in Figures~\ref{fig:supp_boot_accuracy_1}--\ref{fig:supp_boot_auc_2}. The results for bias and variance are unchanged from the previous experiments. Encouragingly, both coverage and width for the bootstrap-based intervals are similar to the coverage and width of the IF-based intervals, though in the smaller sample size settings the bootstrap-based intervals are slightly narrower than the IF-based intervals.}

\begin{figure}
\centering
\includegraphics[width = 1\textwidth]{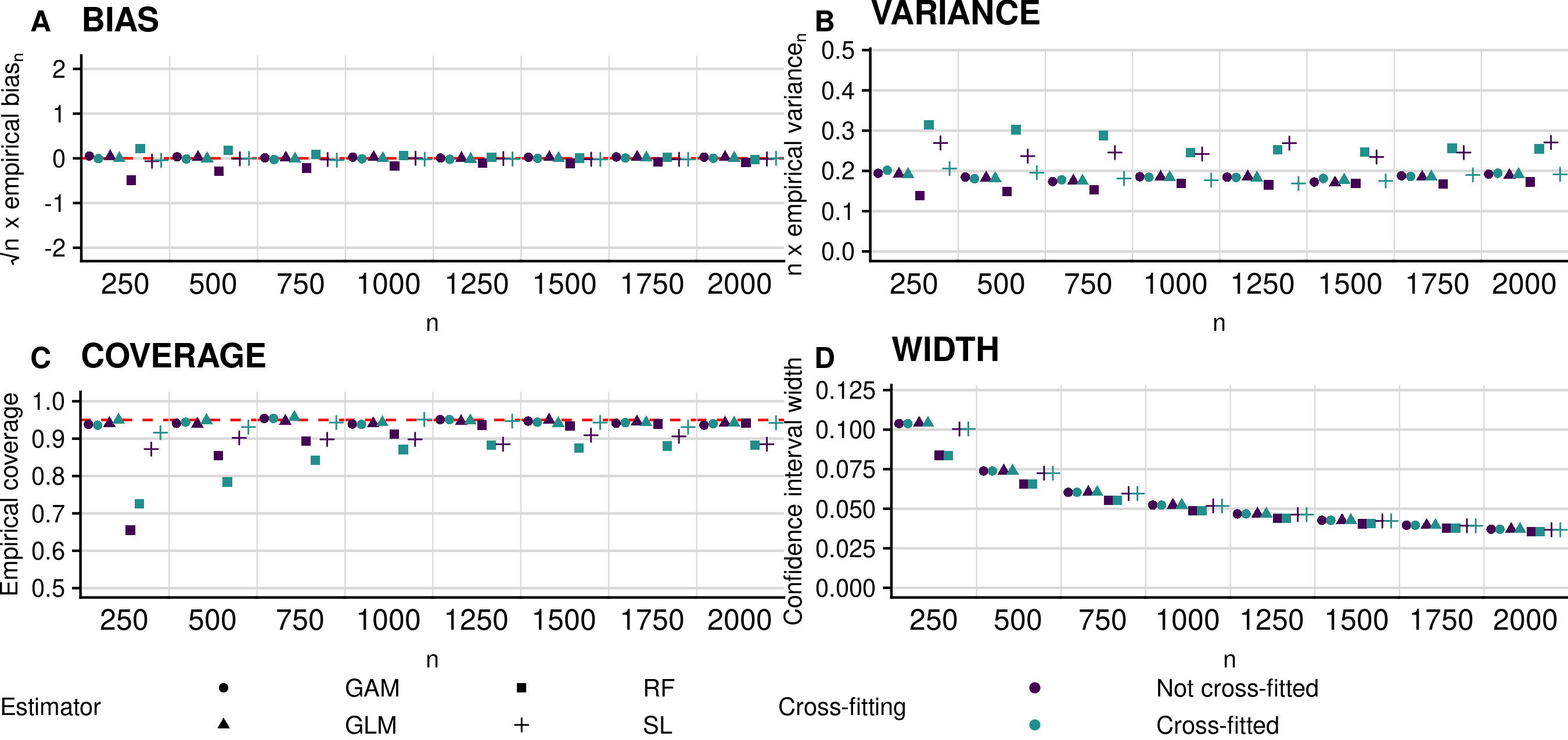}
\caption{Performance of plug-in estimators for estimating (non-zero) importance of $X_1$ in terms of accuracy under Scenario 1, using the bootstrap for interval estimation. Clockwise from top left: empirical bias of the proposed plug-in estimator scaled by $n^{1/2}$; empirical variance scaled by $n$; empirical coverage of nominal 95\% confidence intervals; and average width of these intervals. Circles, triangles, squares and plus symbols denote estimators based on the use of generalized additive models (GAMs), probit regression (GLM), random forests (RF), and the Super Learner (SL), respectively. Blue and green symbols denote non-cross-fitted and cross-fitted estimators, respectively. This figure appears in color in the electronic version of this article.}
\label{fig:supp_boot_accuracy_1}
\end{figure}

\begin{figure}
\centering
\includegraphics[width = 1\textwidth]{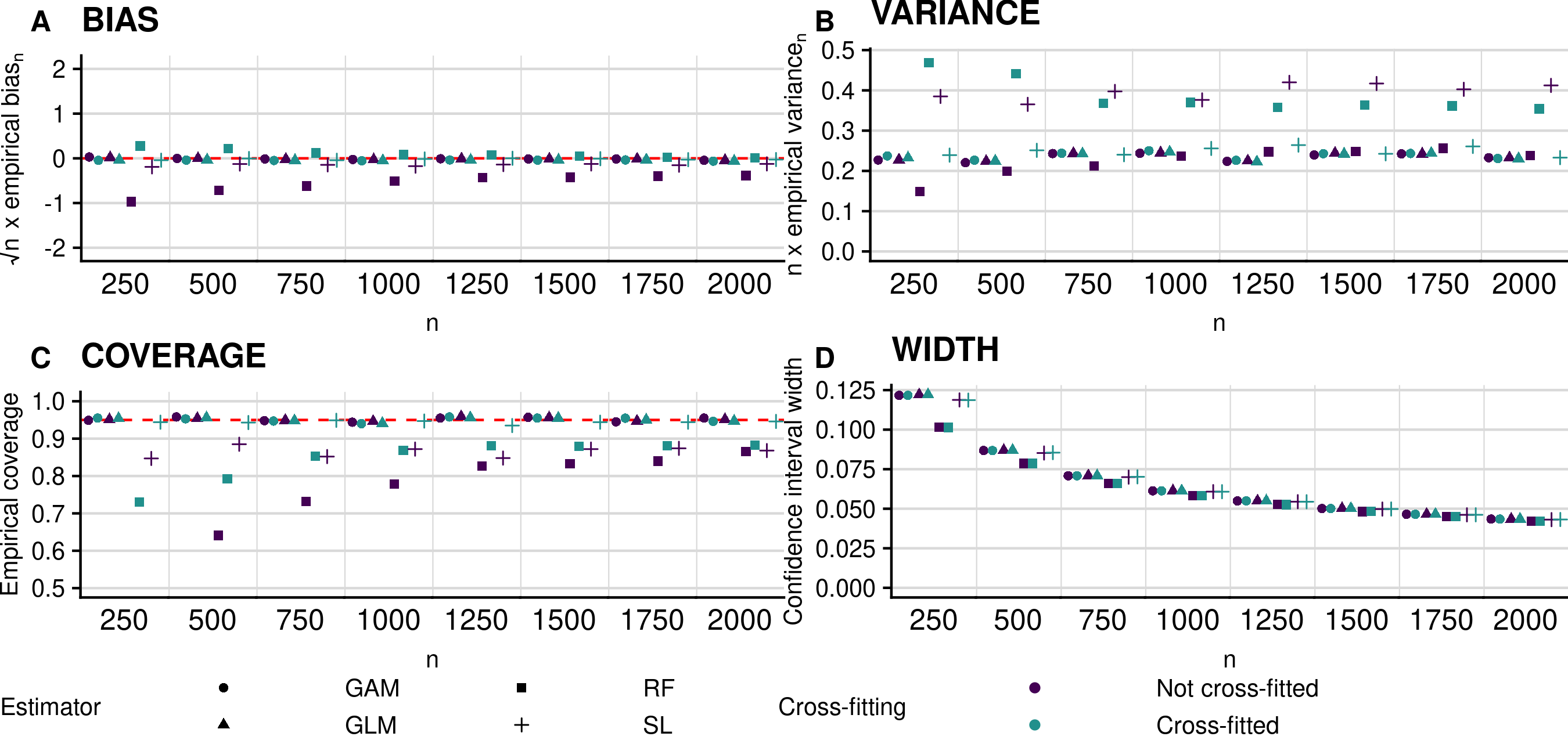}
\caption{Performance of plug-in estimators for estimating (non-zero) importance of $X_2$ in terms of accuracy under Scenario 1, using the bootstrap for interval estimation. Clockwise from top left: empirical bias of the proposed plug-in estimator scaled by $n^{1/2}$; empirical variance scaled by $n$; empirical coverage of nominal 95\% confidence intervals; and average width of these intervals. Circles, triangles, squares and plus symbols denote estimators based on the use of generalized additive models (GAMs), probit regression (GLM), random forests (RF), and the Super Learner (SL), respectively. Blue and green symbols denote non-cross-fitted and cross-fitted estimators, respectively. This figure appears in color in the electronic version of this article.}
\label{fig:supp_boot_accuracy_2}
\end{figure}

\begin{figure}
\centering
\includegraphics[width = 1\textwidth]{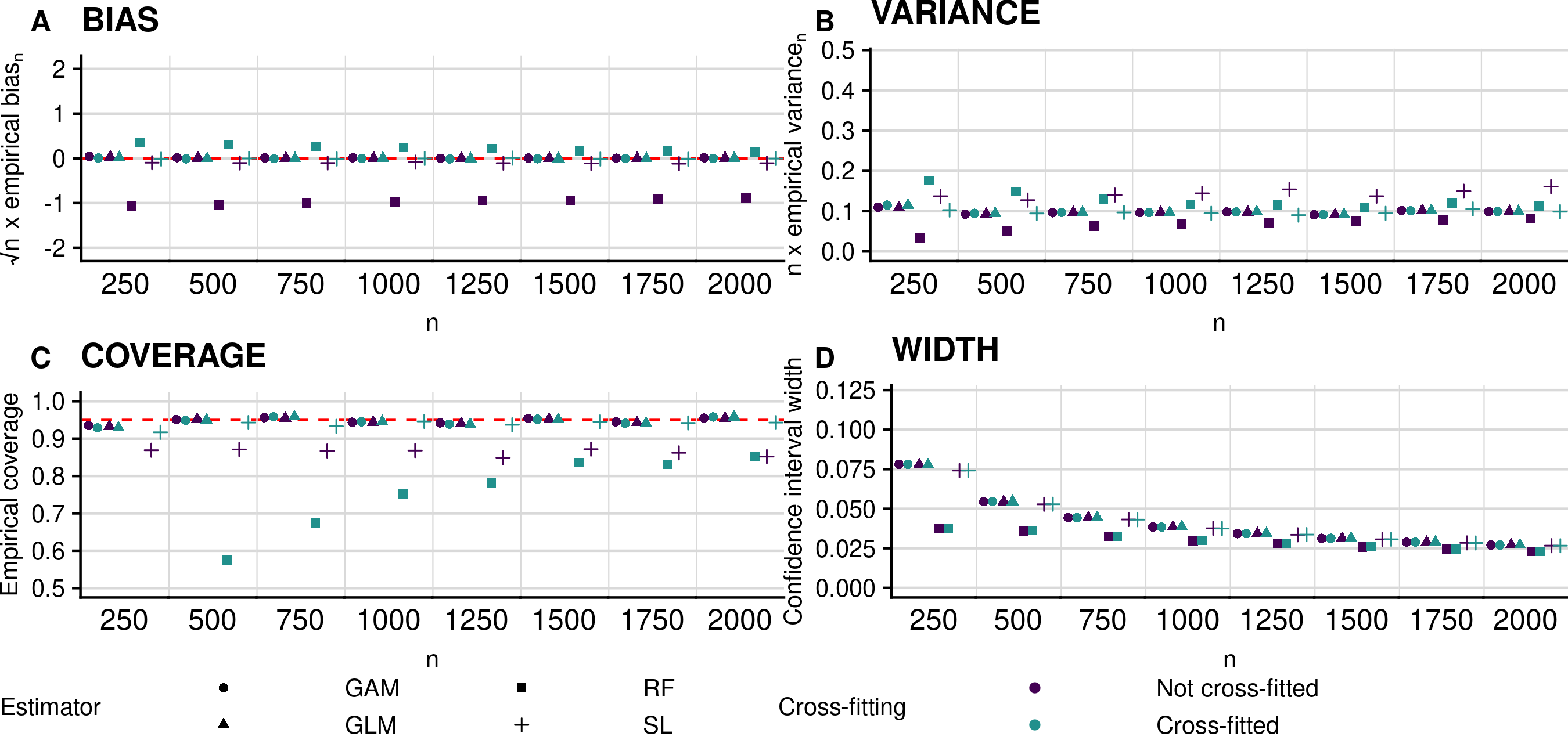}
\caption{Performance of plug-in estimators for estimating (non-zero) importance of $X_1$ in terms of AUC under Scenario 1, using the bootstrap for interval estimation. Clockwise from top left: empirical bias of the proposed plug-in estimator scaled by $n^{1/2}$; empirical variance scaled by $n$; empirical coverage of nominal 95\% confidence intervals; and average width of these intervals. Circles, triangles, squares and plus symbols denote estimators based on the use of generalized additive models (GAMs), probit regression (GLM), random forests (RF), and the Super Learner (SL), respectively. Blue and green symbols denote non-cross-fitted and cross-fitted estimators, respectively. Coverage of intervals based on the non-cross-fitted RF-based estimator never exceeds 0.5 and is as low as zero in some cases. This figure appears in color in the electronic version of this article.}
\label{fig:supp_boot_auc_1}
\end{figure}

\begin{figure}
\centering
\includegraphics[width = 1\textwidth]{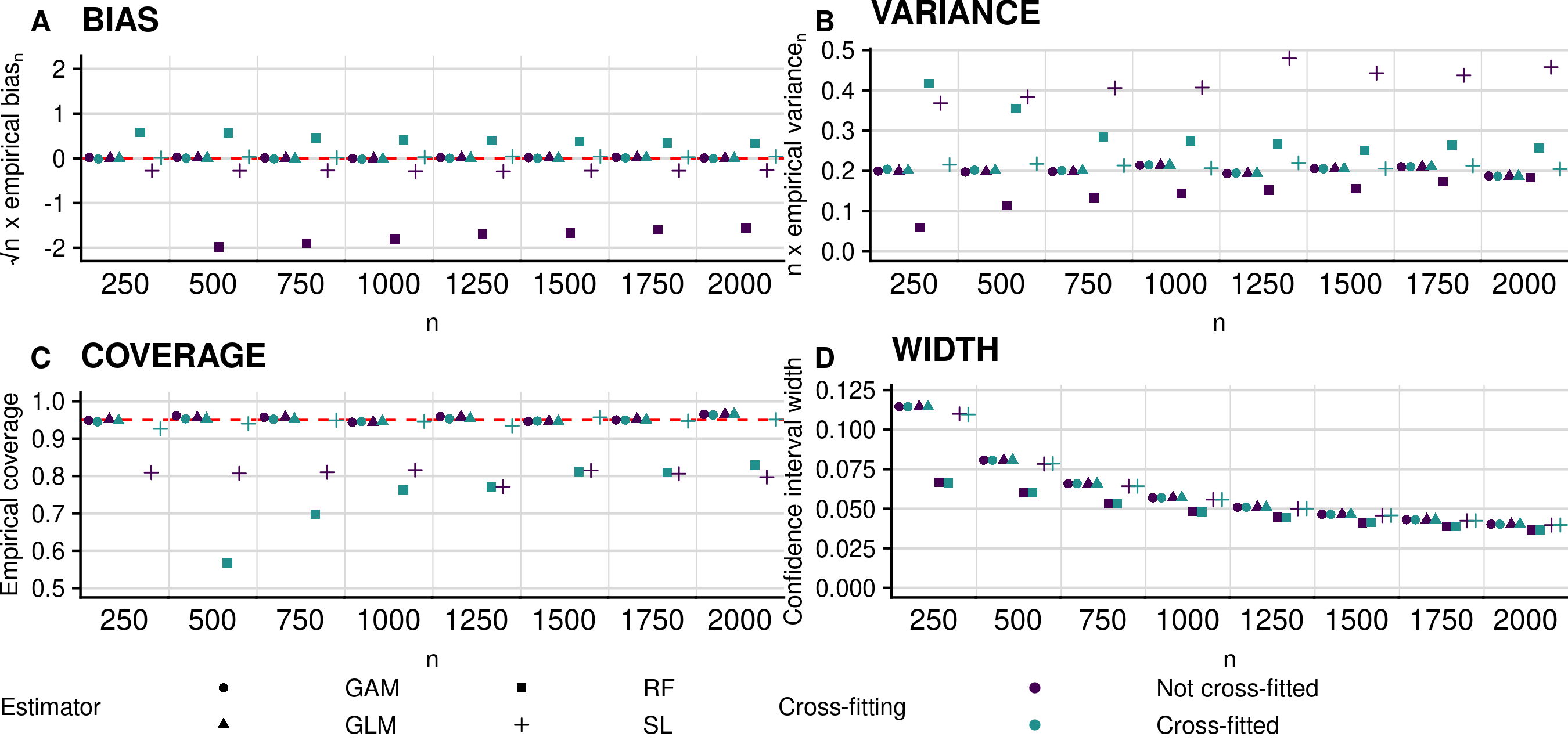}
\caption{Performance of plug-in estimators for estimating (non-zero) importance of $X_2$ in terms of AUC under Scenario 1, using the bootstrap for interval estimation. Clockwise from top left: empirical bias of the proposed plug-in estimator scaled by $n^{1/2}$; empirical variance scaled by $n$; empirical coverage of nominal 95\% confidence intervals; and average width of these intervals. Circles, triangles, squares and plus symbols denote estimators based on the use of generalized additive models (GAMs), probit regression (GLM), random forests (RF), and the Super Learner (SL), respectively. Blue and green symbols denote non-cross-fitted and cross-fitted estimators, respectively. In this experiment, the coverage of non-cross-fitted RF was never above 0.5, and was as low as zero. This figure appears in color in the electronic version of this article.}
\label{fig:supp_boot_auc_2}
\end{figure}

\subsection{Higher dimensions and correlated features}

\revision{We now consider two scenarios under increasing dimension, both with and without correlated features. Here, $p \in \{50, 100, 200\}$ and $\Sigma$ is either a $p \times p$ identity matrix (Scenario 3) or a $p \times p$ diagonal matrix with 1 on the diagonal and all off-diagonal elements equal to zero except $\Sigma_{1,3} = \Sigma_{3,1} = 0.7$ and $\Sigma_{2,4} = \Sigma_{4,2} = 0.2$ (Scenario 4). Thus, in Scenario 4, $X_3$ and $X_4$ are not directly important for predicting the outcome, but might be found to be important in isolation due to their correlation with the important features $X_1$ and $X_2$. In these experiments, we considered $n \in \{500, 3000\}$ for each $p$, and assessed the importance of each individual feature as well as the feature groups $(X_1, X_3)$ and $(X_2, X_4)$, again using both accuracy and AUC. We use cross-fitting to estimate the VIM value in all cases, and we use the Super Learner with candidate library consisting of boosted trees, random forests, and the lasso to estimate $f_0$ and $f_{0,s}$. We then compute the empirical bias scaled by $n^{1/2}$, the empirical variance scaled by $n$, the empirical coverage of nominal 95\% confidence intervals, and the proportion of tests rejected.}

\revision{We display the results under Scenario 3 in Figures~\ref{fig:supp_highdim_accuracy_uncorr} and \ref{fig:supp_highdim_auc_uncorr}. Here, we find that at the smaller sample size ($n = 500$), there is some excess bias for the features with non-null importance, and that this bias increases with increasing $p$; this is accompanied by a decrease in coverage. However, with a larger sample size ($n = 3000$), we recover similar performance to that observed in Section~\ref{sec:sims} of the main manuscript and the preceeding sections of this supplement. Type I error is controlled at the nominal level in all cases.}

\begin{figure}
\centering
\includegraphics[width = 1\textwidth]{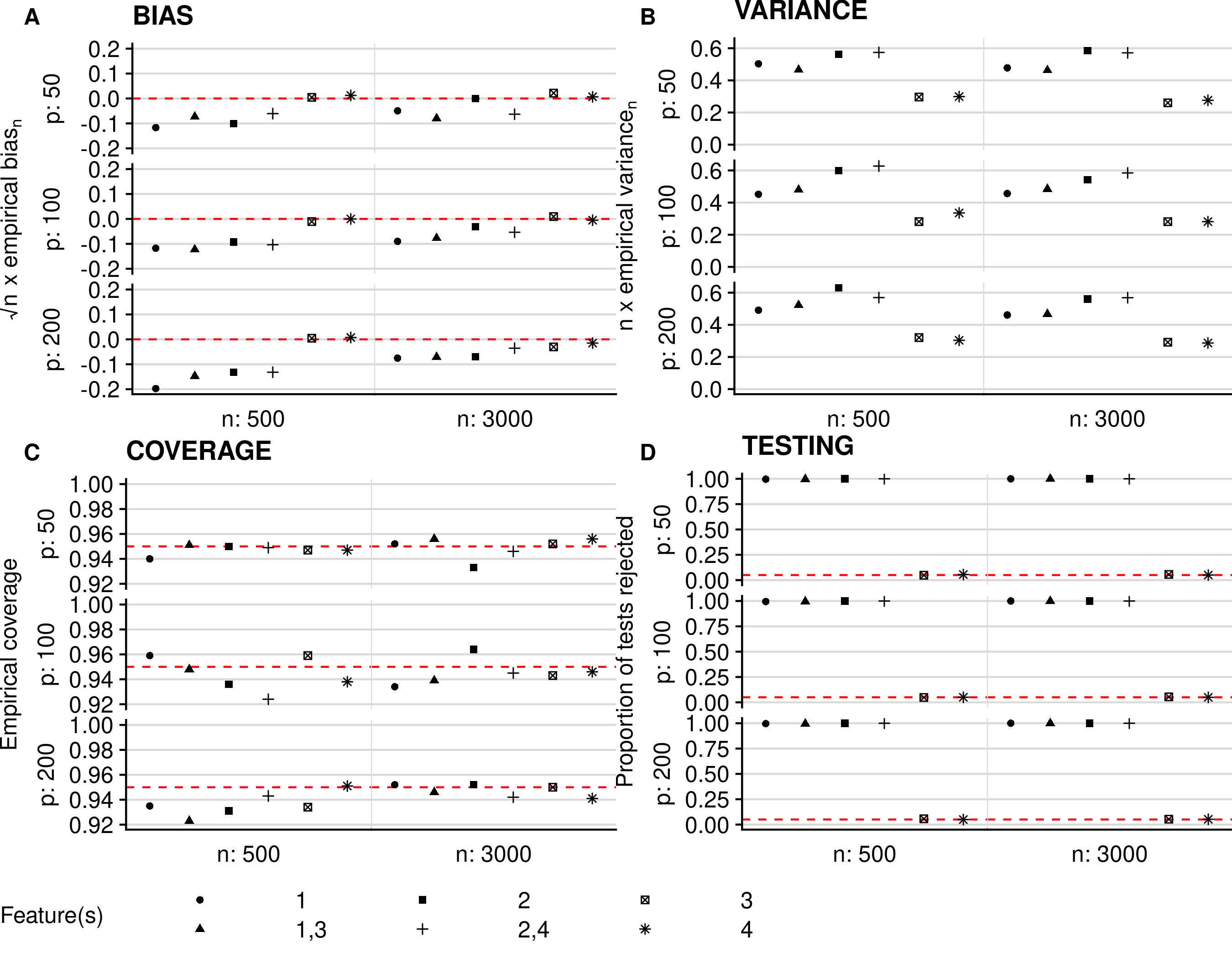}
\caption{Performance of plug-in estimators for estimating importance in terms of accuracy under Scenario 3 (all features are independent). Clockwise from top left: empirical bias for the proposed plug-in estimator scaled by $n^{1/2}$; empirical variance scaled by $n$; empirical coverage of nominal 95\% confidence intervals for the true importance; and empirical type I error of the proposed hypothesis test. The different symbols denote the feature(s) of interest.}
\label{fig:supp_highdim_accuracy_uncorr}
\end{figure}

\begin{figure}
\centering
\includegraphics[width = 1\textwidth]{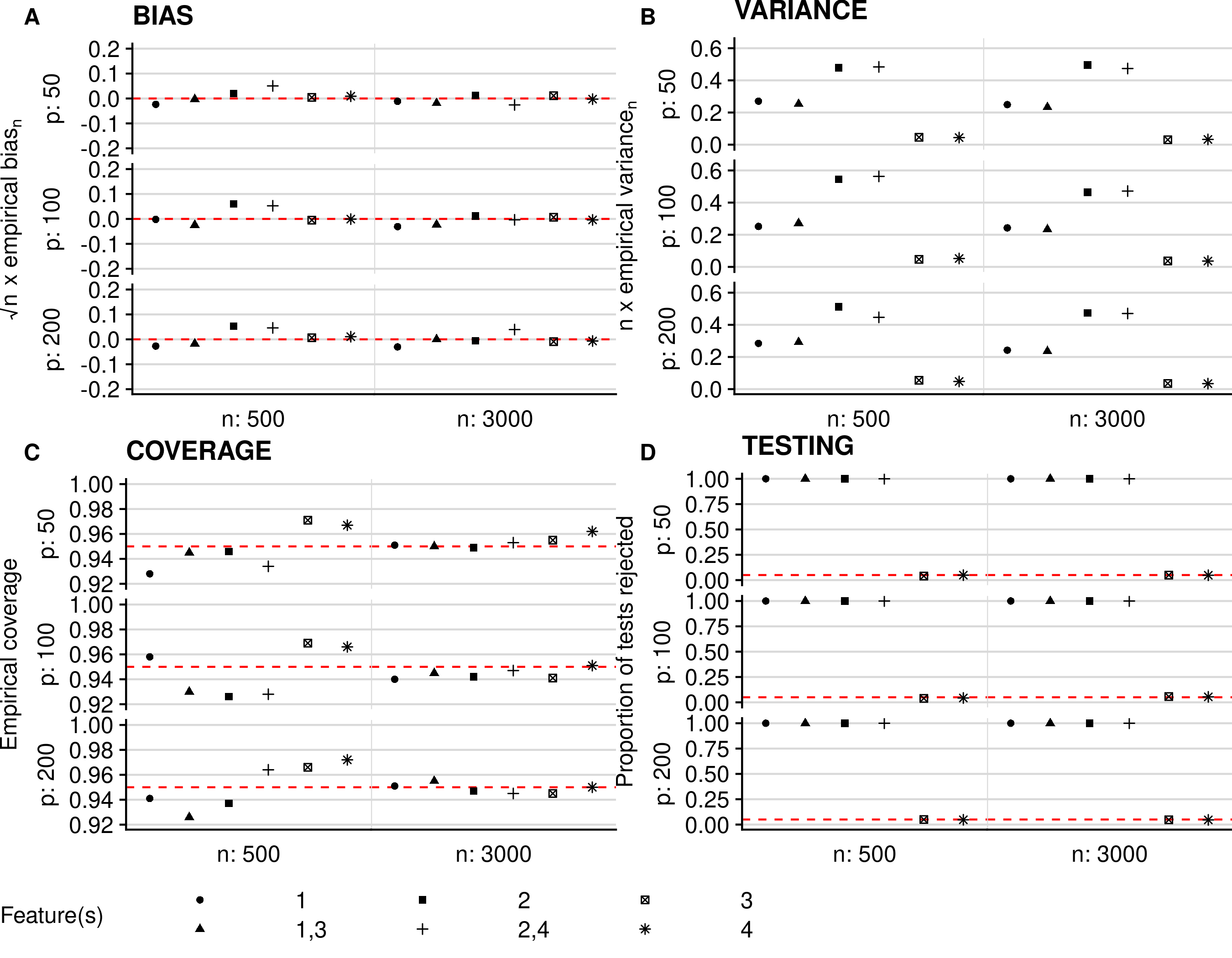}
\caption{Performance of plug-in estimators for estimating importance in terms of AUC under Scenario 3 (all features are independent). Clockwise from top left: empirical bias for the proposed plug-in estimator scaled by $n^{1/2}$; empirical variance scaled by $n$; empirical coverage of nominal 95\% confidence intervals for the true importance; and empirical type I error of the proposed hypothesis test. The different symbols denote the feature(s) of interest.}
\label{fig:supp_highdim_auc_uncorr}
\end{figure}

\revision{We display the results under Scenario 4 in Figures~\ref{fig:supp_highdim_accuracy_corr} and \ref{fig:supp_highdim_auc_corr}. We find similar results overall to those from Scenario 3. In smaller samples, it appears to be advantageous to consider groups of correlated features rather than the features alone; this is particularly striking in Figure~\ref{fig:supp_highdim_auc_corr}. As the sample size grows, the difference in performance diminishes.}

\begin{figure}
\centering
\includegraphics[width = 1\textwidth]{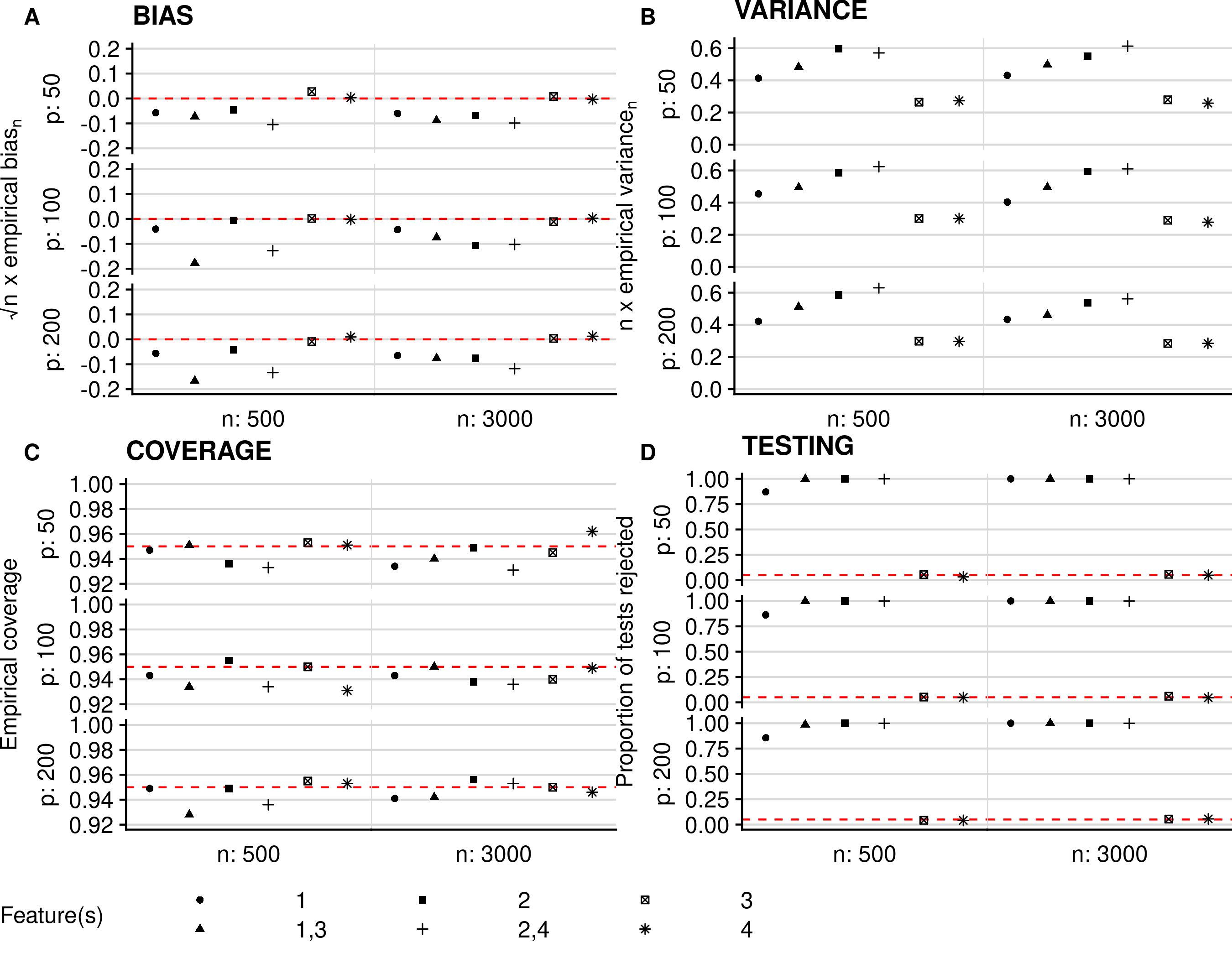}
\caption{Performance of plug-in estimators for estimating importance in terms of accuracy under Scenario 4 (some features are correlated). Clockwise from top left: empirical bias for the proposed plug-in estimator scaled by $n^{1/2}$; empirical variance scaled by $n$; empirical coverage of nominal 95\% confidence intervals for the true importance; and empirical type I error of the proposed hypothesis test. The different symbols denote the feature(s) of interest.}
\label{fig:supp_highdim_accuracy_corr}
\end{figure}

\begin{figure}
\centering
\includegraphics[width = 1\textwidth]{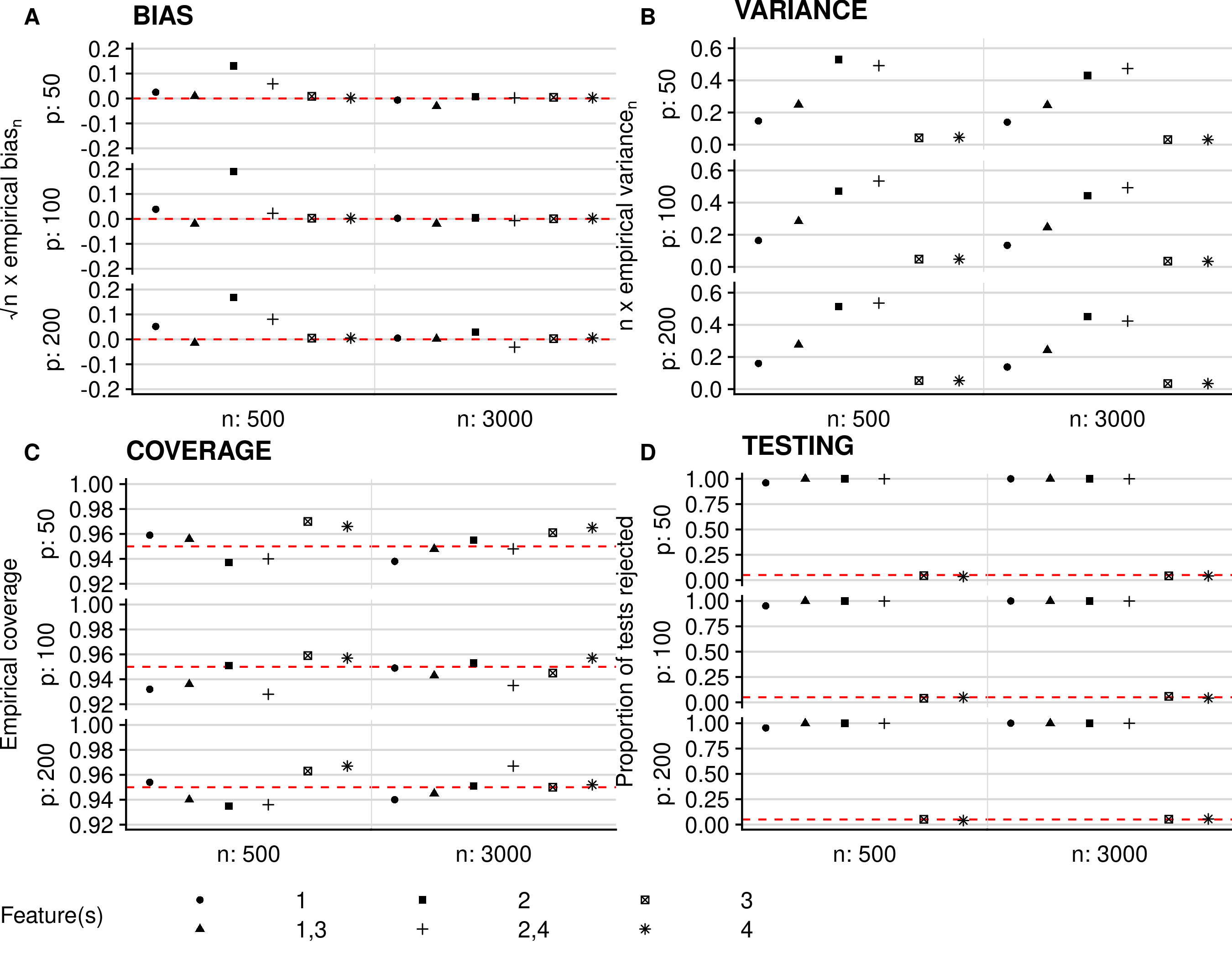}
\caption{Performance of plug-in estimators for estimating importance in terms of AUC under Scenario 4 (some features are correlated). Clockwise from top left: empirical bias for the proposed plug-in estimator scaled by $n^{1/2}$; empirical variance scaled by $n$; empirical coverage of nominal 95\% confidence intervals for the true importance; and empirical type I error of the proposed hypothesis test. The different symbols denote the feature(s) of interest.}
\label{fig:supp_highdim_auc_corr}
\end{figure}

\revision{Overall, the statistical performance of our procedure appear to be impacted more strongly by noise covariates in small samples than in large samples, regardless of the level of correlation among covariates. It is possible that this performance could be improved in small samples by including more aggressive sparsity-inducing algorithms in our ensemble. Indeed, the performance of our estimator of each VIM value depends on  the rate at which the nuisance functions can be estimated, and this rate certainly slows down as the number of covariates grows, unless we can leverage stronger structure. We note that, while perhaps minimally impacting the statistical performance of our procedure, correlated features nevertheless render the interpretation of individual-variable importance more challenging: the population-level importance value itself changes in the presence of correlation. This difficulty can be partially mitigated by assessing group variable importance instead; however, this requires groups to either be known a priori (as in Section~\ref{sec:data} of the main manuscript) or estimated, and in this latter case, further work must be done to ensure that the desired inferential properties (e.g., correct coverage) are preserved.}

\section{Additional details for the study of an antibody against HIV-1}\label{sec:additional_data_analysis_results}

\subsection{Harmonized analysis with \citet{magaret2019}}

\revision{In Figure~\ref{fig:cens_amp_results}, we display the results of an analysis harmonized to use the same outcome as in \cite{magaret2019}. This sensitivity outcome is the indicator of whether or not the IC$_{50}$ value was right-censored. Viruses with right-censored IC$_{50}$ values} are thought to be resistant to VRC01, while viruses with non-censored IC$_{50}$ values may instead be more sensitive to VRC01. In this case, we consider the \textit{conditional} importance of each group of features relative to the remaining features. Overall, these results are largely in line with both \cite{magaret2019} and with the results presented in the main manuscript. However, we see here that only the VRC01 binding footprint has p-value less than 0.0038 (denoted by stars in Figure~\ref{fig:cens_amp_results}; this value results from a Bonferroni correction from testing 13 groups and an initial level of 0.05), and only for the AUC measure. The exact p-value is given by $6.1\times 10^{-4}$.

\begin{figure}
\centering
\includegraphics[width=1\textwidth]{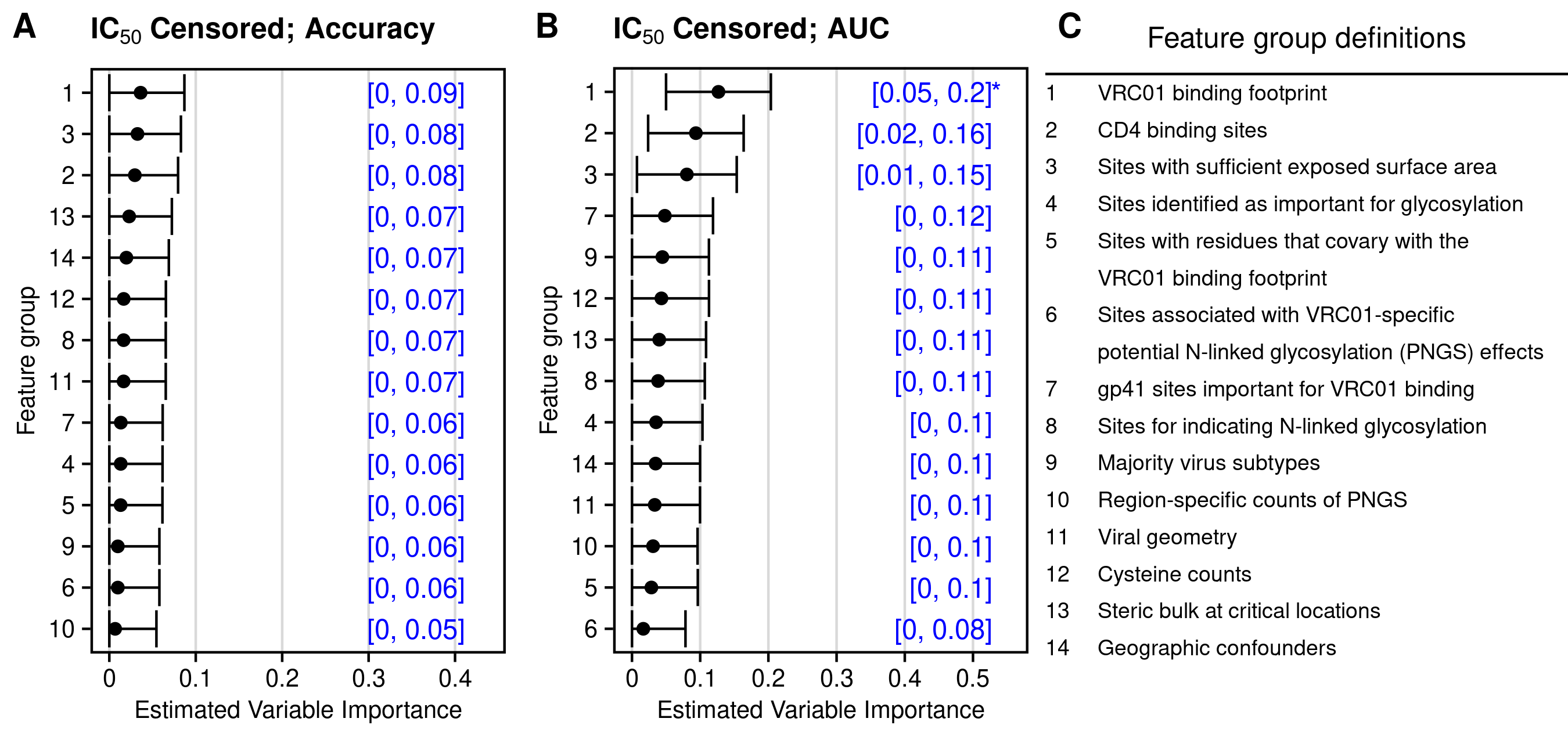}
\caption{Variable importance measured by accuracy (panel A) and AUC (panel B) for the groups defined in panel C. Stars denote importance deemed statistically significantly different from zero at the 0.0038 (0.05 / 13) level.}
\label{fig:cens_amp_results}
\end{figure}

\subsection{Library of candidate learning algorithms}\label{sec:learner_lib}

In this section, we describe the library of candidate learning algorithms  used in our analysis replicating the results of \citet{magaret2019}. We used a wide array of flexible machine learning-based algorithms in the hope that this large library  would yield a cross-validated algorithm with good predictive performance. The particular machine learning techniques included were: the lasso with logit link function (implemented in the \texttt{glmnet} R package), random forests (implemented in the \texttt{ranger} R package), and gradient boosted decision trees (implemented in the \texttt{xgboost} R package), each with a variety of choices for the tuning parameters. In Table~\ref{tab:sl_lib}, we provide a description of each candidate learning algorithm in our library. Our final estimator is the convex combination of these algorithms chosen to minimize the ten-fold cross-validated negative log likelihood. In all cases, we adjusted for geographic region as a potential confounding variable.

\begin{table}

\caption{\label{tab:sl_lib}Library of canididate learners for the Super Learner with descriptions.}
\centering
\resizebox{\linewidth}{!}{
\begin{tabular}[t]{l|l}
\hline
Function name & Description\\
\hline
SL.mean & intercept only regression\\
\hline
SL.xgboost1 & boosted regression trees with maximum depth of 1\\
\hline
SL.xgboost2 & boosted regression trees with maximum depth of 2\\
\hline
SL.xgboost4 & boosted regression trees with maximum depth of 4\\
\hline
SL.xgboost6 & boosted regression trees with maximum depth of 6\\
\hline
SL.xgboost8 & boosted regression trees with maximum depth of 8\\
\hline
SL.ranger.small & random forest with mtry equal to one-half times square root of number of predictors\\
\hline
SL.ranger.reg & random forest with mtry equal to square root of number of predictors\\
\hline
SL.ranger.large & random forest with mtry equal to two times square root of number of predictors\\
\hline
SL.glmnet.0 & GLMNET with lambda selected by 5-fold CV and alpha equal to 0\\
\hline
SL.glmnet.25 & GLMNET with lambda selected by 5-fold CV and alpha equal to 0.25\\
\hline
SL.glmnet.50 & GLMNET with lambda selected by 5-fold CV and alpha equal to 0.5\\
\hline
SL.glmnet.75 & GLMNET with lambda selected by 5-fold CV and alpha equal to 0.75\\
\hline
SL.glmnet.1 & GLMNET with lambda selected by CV and alpha equal to 1\\
\hline
\end{tabular}}
\end{table}

\subsection{Super Learner performance}

We now describe the empirical performance of the Super Learner in this application for both the outcome considered in the main manuscript (IC$_{50} < 1$) and the IC$_{50}$ censored outcome described above. In Table~\ref{tab:sl_weights}, we show the coefficients of each candidate learner in the final Super Learner ensemble for each outcome. The rows of this table are each of the ten cross-validation folds broken down by outcome, while the columns are the individual learners. Here, we see that for the IC$_{50}$ censored outcome, the most commonly chosen algorithms in the final ensemble were boosted trees with maximum depth of 2 or 4, random forests with a large number of features chosen at each split, and the elastic net with various values of $\alpha$. For the IC$_{50} < 1$ outcome, the most commonly chosen algorithms were again boosted trees with maximum depth of 2, 4, or 6, random forests with a medium and large number of features chosen at each split; the elastic net was often not chosen by the Super Learner.

\begin{table}

\caption{\label{tab:sl_weights}Table of Super Learner weights for each outcome, candidate learner and cross-validation fold. We have removed `SL.' from the name of each learner.}
\centering
\resizebox{\linewidth}{!}{
\begin{tabular}[t]{r|r|r|r|r|r|r|r|r|r|r|r|r|r|r}
\hline
mean & xgboost1 & xgboost2 & xgboost4 & xgboost6 & xgboost8 & ranger.small & ranger.reg & ranger.large & glmnet.0 & glmnet.25 & glmnet.50 & glmnet.75 & glmnet.1 & fold\\
\hline
\multicolumn{15}{l}{\textbf{IC$_{50}$ censored}}\\
\hline
\hspace{1em}0 & 0 & 0.05 & 0.00 & 0.00 & 0.00 & 0 & 0.00 & 0.74 & 0 & 0.00 & 0.21 & 0.00 & 0.00 & 1\\
\hline
\hspace{1em}0 & 0 & 0.08 & 0.00 & 0.00 & 0.00 & 0 & 0.00 & 0.62 & 0 & 0.00 & 0.30 & 0.00 & 0.00 & 2\\
\hline
\hspace{1em}0 & 0 & 0.01 & 0.00 & 0.00 & 0.00 & 0 & 0.00 & 0.50 & 0 & 0.48 & 0.00 & 0.00 & 0.00 & 3\\
\hline
\hspace{1em}0 & 0 & 0.00 & 0.08 & 0.00 & 0.00 & 0 & 0.00 & 0.51 & 0 & 0.40 & 0.00 & 0.00 & 0.00 & 4\\
\hline
\hspace{1em}0 & 0 & 0.00 & 0.04 & 0.00 & 0.00 & 0 & 0.00 & 0.58 & 0 & 0.00 & 0.38 & 0.00 & 0.00 & 5\\
\hline
\hspace{1em}0 & 0 & 0.11 & 0.00 & 0.00 & 0.00 & 0 & 0.00 & 0.62 & 0 & 0.27 & 0.00 & 0.00 & 0.00 & 6\\
\hline
\hspace{1em}0 & 0 & 0.11 & 0.00 & 0.00 & 0.00 & 0 & 0.00 & 0.51 & 0 & 0.07 & 0.00 & 0.31 & 0.00 & 7\\
\hline
\hspace{1em}0 & 0 & 0.05 & 0.00 & 0.00 & 0.00 & 0 & 0.00 & 0.74 & 0 & 0.00 & 0.14 & 0.00 & 0.08 & 8\\
\hline
\hspace{1em}0 & 0 & 0.01 & 0.01 & 0.00 & 0.00 & 0 & 0.00 & 0.62 & 0 & 0.23 & 0.00 & 0.12 & 0.00 & 9\\
\hline
\hspace{1em}0 & 0 & 0.07 & 0.00 & 0.00 & 0.00 & 0 & 0.00 & 0.36 & 0 & 0.27 & 0.00 & 0.00 & 0.31 & 10\\
\hline
\multicolumn{15}{l}{\textbf{IC$_{50} < 1$}}\\
\hline
\hspace{1em}0 & 0 & 0.00 & 0.13 & 0.00 & 0.00 & 0 & 0.00 & 0.87 & 0 & 0.00 & 0.00 & 0.00 & 0.00 & 1\\
\hline
\hspace{1em}0 & 0 & 0.00 & 0.18 & 0.00 & 0.00 & 0 & 0.00 & 0.82 & 0 & 0.00 & 0.00 & 0.00 & 0.00 & 2\\
\hline
\hspace{1em}0 & 0 & 0.00 & 0.00 & 0.16 & 0.00 & 0 & 0.00 & 0.84 & 0 & 0.00 & 0.00 & 0.00 & 0.00 & 3\\
\hline
\hspace{1em}0 & 0 & 0.00 & 0.06 & 0.06 & 0.00 & 0 & 0.00 & 0.89 & 0 & 0.00 & 0.00 & 0.00 & 0.00 & 4\\
\hline
\hspace{1em}0 & 0 & 0.02 & 0.12 & 0.05 & 0.00 & 0 & 0.00 & 0.82 & 0 & 0.00 & 0.00 & 0.00 & 0.00 & 5\\
\hline
\hspace{1em}0 & 0 & 0.11 & 0.00 & 0.00 & 0.03 & 0 & 0.00 & 0.86 & 0 & 0.00 & 0.00 & 0.00 & 0.00 & 6\\
\hline
\hspace{1em}0 & 0 & 0.05 & 0.00 & 0.00 & 0.00 & 0 & 0.11 & 0.84 & 0 & 0.00 & 0.00 & 0.00 & 0.00 & 7\\
\hline
\hspace{1em}0 & 0 & 0.00 & 0.00 & 0.07 & 0.03 & 0 & 0.00 & 0.90 & 0 & 0.00 & 0.00 & 0.00 & 0.00 & 8\\
\hline
\hspace{1em}0 & 0 & 0.12 & 0.00 & 0.00 & 0.00 & 0 & 0.39 & 0.41 & 0 & 0.00 & 0.00 & 0.07 & 0.00 & 9\\
\hline
\hspace{1em}0 & 0 & 0.00 & 0.00 & 0.07 & 0.00 & 0 & 0.00 & 0.93 & 0 & 0.00 & 0.00 & 0.00 & 0.00 & 10\\
\hline
\end{tabular}}
\end{table}

In Figure~\ref{fig:sl_perf}, we display the cross-validated AUC and 95\% confidence intervals (obtained on the logit scale and then inverted; thus, the intervals may not be symmetric about the point estimate of AUC) for both outcomes and each of the candidate learning algorithms in the Super Learner, along with the Super Learner ensemble algorithm and the classical cross-validated selector (the ``discrete Super Learner''). We used the R package \texttt{cvAUC} to compute these point and interval estimates. Similarly to \citet{magaret2019}, we see that, of all the individual algorithms, random forests have the best performance in this application for both outcomes, followed by the lasso and boosted trees (for the IC$_{50}$ censored outcome) and the reverse for the IC$_{50} < 1$ outcome. Additionally, we estimate the cross-validated AUC of the overall Super Learner to be 0.90 for the IC$_{50}$ censored outcome, with a 95\% confidence interval of (0.87, 0.94). For the IC$_{50} < 1$ outcome, we estimate the cross-validated AUC of the overall Super Learner to be 0.83 (0.80, 0.86). \citet{magaret2019} performed an analysis for IC$_{50}$ censored separately on two independent splits of these data, and obtained cross-validated AUCs of 0.86 (0.81, 0.92) and 0.87 (0.81, 0.93) on these two subsets.

\begin{figure}
\centering
\includegraphics[width = 0.6\textwidth]{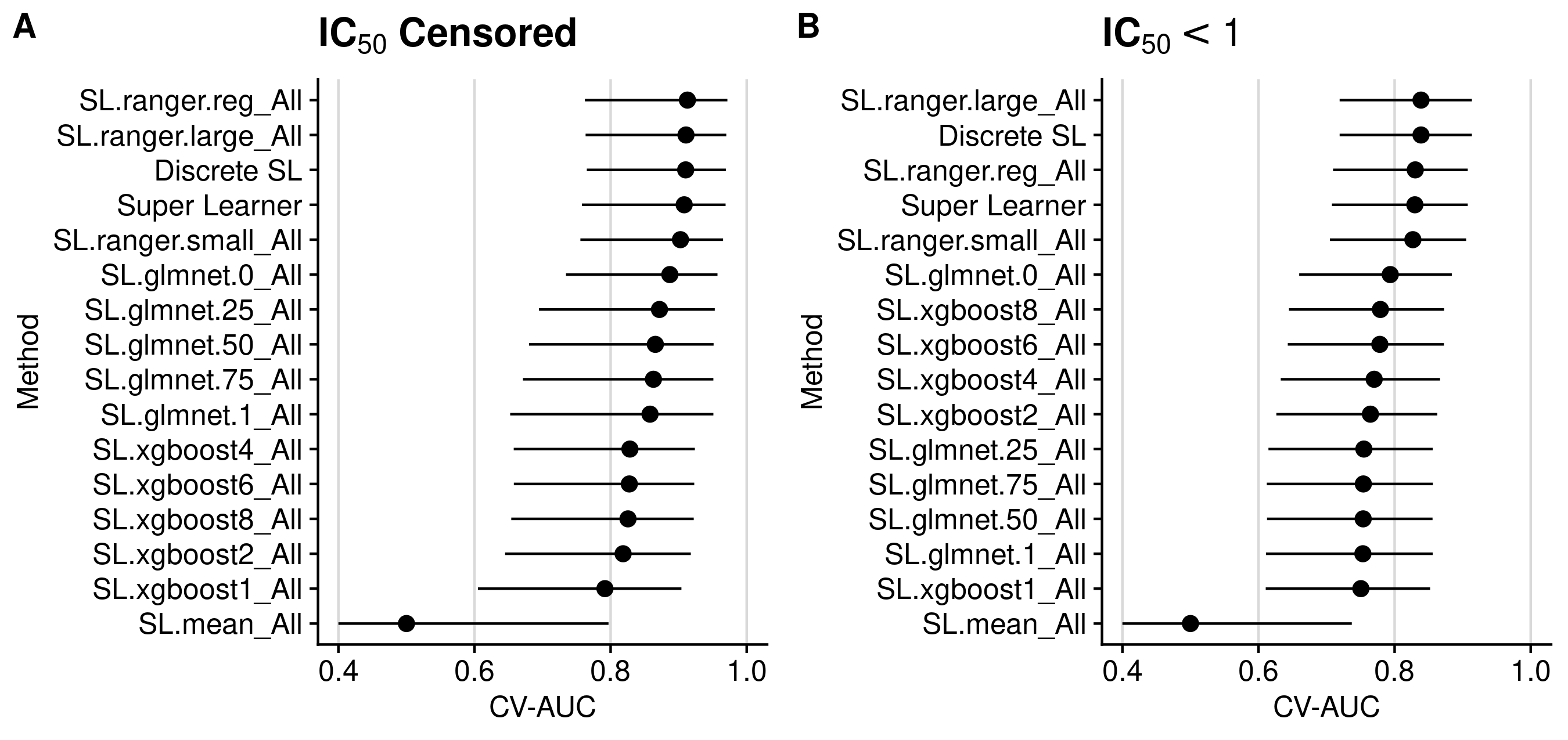}
\caption{Point estimates of cross-validated AUC with 95\% confidence intervals for each candidate learning algorithm in the Super Learner for each outcome.}
\label{fig:sl_perf}
\end{figure}

In Figure~\ref{fig:sl_roc}, we display cross-validated ROC curves for the Super Learner, discrete Super Learner, and the top-performing individual algorithm. These ROC curves are similar to those presented in \citet{magaret2019} --- in both analyses, we see a large cross-validated true positive rate for each chosen cross-validated false positive rate. These results suggest that for both outcomes, our predictor is well-calibrated for discriminating between the outcome classes.

\begin{figure}
\centering
\includegraphics[width = 0.8\textwidth]{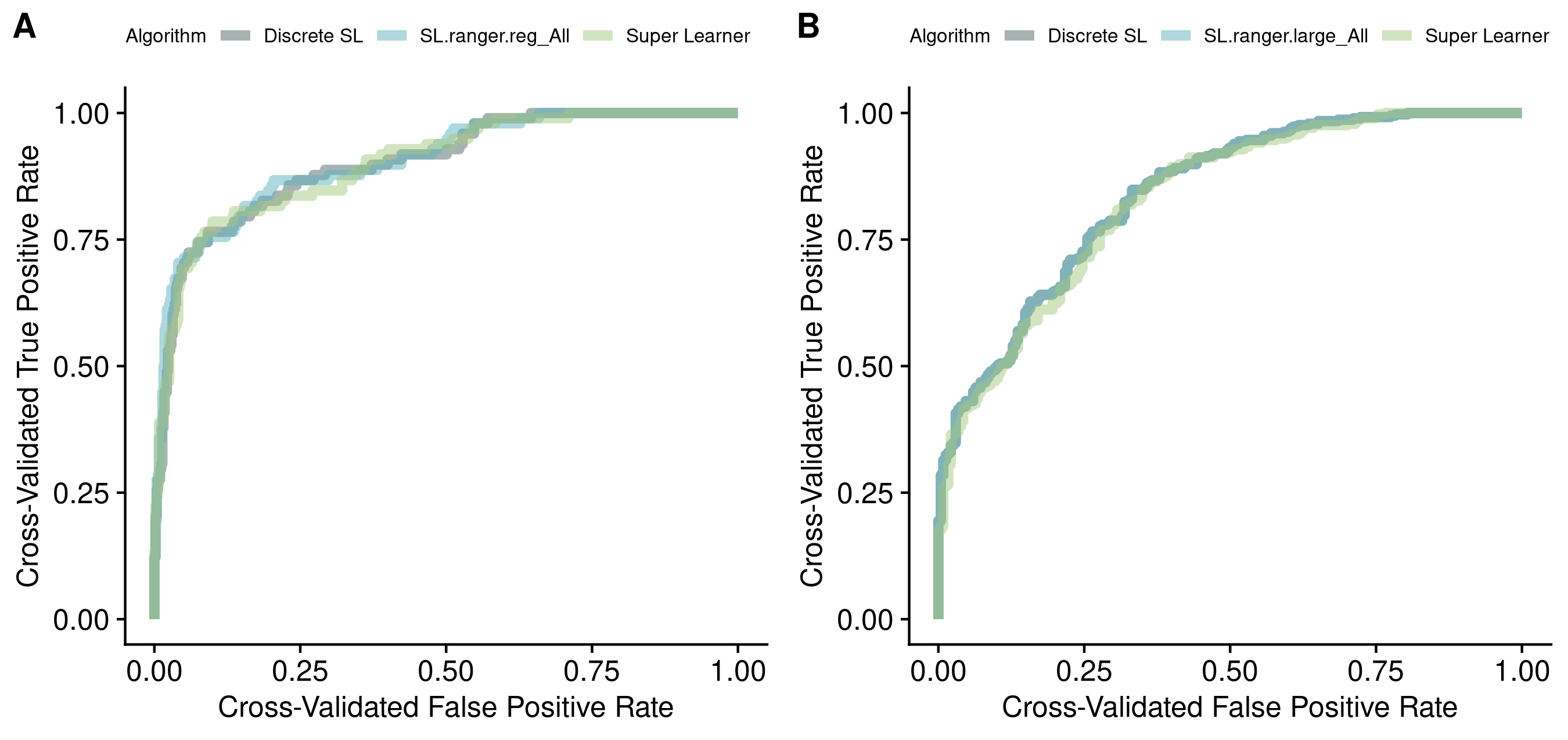}
\caption{Cross-validated ROC curves for each outcome for the Super Learner (light green), discrete Super Learner (gray), and top-performing individual algorithm (random forests). IC$_{50}$ censored is displayed in panel A, while IC$_{50} < 1$ is displayed in panel B.}
\label{fig:sl_roc}
\end{figure}

\end{document}